\newcommand{\calA}{{\cal A}}
\newcommand{\calD}{{\cal D}}
\renewcommand{\epsilon}{\varepsilon}
\DeclareMathOperator{\convert}{convert}
\DeclarePairedDelimiter{\myangle}{\langle}{\rangle} 
\DeclarePairedDelimiter{\setsize}{\mid}{\mid}
\DeclarePairedDelimiter{\ceil}{\lceil}{\rceil}
\newcommand{\bit}{\set{0,1}}
\newcommand{\remove}[1]{}
\renewcommand{\cref}{\Cref}
\DeclareSymbolFont{AMSb}{U}{msb}{m}{n}
\DeclareMathSymbol{\N}{\mathbin}{AMSb}{"4E}
\DeclareMathSymbol{\Z}{\mathbin}{AMSb}{"5A}
\DeclareMathSymbol{\R}{\mathbin}{AMSb}{"52}
\DeclareMathSymbol{\Q}{\mathbin}{AMSb}{"51}
\DeclareMathSymbol{\erert}{\mathbin}{AMSb}{"50}
\DeclareMathSymbol{\I}{\mathbin}{AMSb}{"49}
\DeclareMathSymbol{\C}{\mathbin}{AMSb}{"43}
\newcommand{\paragraphnegative}{\paragraph}
\newcommand{\AAA}{\mathcal A}
\newcommand{\BBB}{\mathcal B}
\newcommand{\DDD}{\mathcal D}
\newcommand{\eps}{\varepsilon}
\newcommand{\error}{{\rm error}}
\newcommand{\db}{S}
\newcommand{\supp}{\operatorname{\rm support}}
\newcommand{\VC}{\operatorname{\rm VC}}
\newcommand{\poly}{\mathop{\rm poly}}
\newcommand{\set}[1]{\left\{ #1 \right\}}
\def\1{\mathbbm{1}}
\def\Q{\operatorname*{\mathbb{Q}}}
\def\poly{\mathop{\rm{poly}}\nolimits}
\newcommand{\inr}{\in_{\mbox{\tiny R}}}
\newtheorem{theorem}{Theorem}[section]
\newtheorem*{theorem*}{Theorem}
\newtheorem{lemma}[theorem]{Lemma}
\newtheorem*{lemma*}{Lemma}
\newtheorem{claim}[theorem]{Claim}
\newtheorem{definition}[theorem]{Definition}
\theoremstyle{definition}
\newtheorem{fact}[theorem]{Fact}
\newtheorem{remark}[theorem]{Remark}
\newcommand{\kn}[1]{}
\newcommand{\us}[1]{}
\newcommand{\ak}[1]{}
\newcommand{\Threshold}{{\sf Threshold}}
\newcommand{\Parity}{{\sf Parity}}
\newcommand{\Thr}{\mathsf{Thr}}
\newcommand{\Par}{\mathsf{Par}}
\newcommand{\msg}{M}
\crefname{proposition}{Proposition}{Propositions}
\crefname{claim}{Claim}{Claims}
\crefname{remark}{Remark}{Remarks}
\newcommand{\stepref}[1]{Step~\ref{#1}}
\title{The power of synergy in differential privacy: \\ Combining a small curator with local randomizers}
\author{
Amos Beimel\thanks{Dept.\ of Computer Science, Ben-Gurion University. {\tt amos.beimel@gmail.com}}
\and
Aleksandra Korolova\thanks{
Dept.\ of Computer Science,
University of Southern California.
\tt{korolova@usc.edu}
}
\and
Kobbi Nissim\thanks{Dept. of Computer Science, Georgetown University. \tt{kobbi.nissim@georgetown.edu}}
\and
Or Sheffet\thanks{ 
Faculty of Engineering, 
Bar-Ilan University.
\tt{or.sheffet@biu.ac.il}
}
\and
Uri Stemmer\thanks{Dept. of Computer Science, Ben-Gurion University and Google Research. {\tt u@uri.co.il}}
}
\begin{document}

\thispagestyle{empty}
\maketitle

\begin{abstract}

Motivated by the desire to bridge the utility gap between local and trusted curator models of differential privacy for practical applications, we initiate the theoretical study of a hybrid model introduced by ``Blender'' [Avent et al.,\ USENIX Security '17], in which differentially private protocols of $n$ agents that work in the local-model are assisted by a differentially private curator that has access to the data of $m$ additional users. We focus on the regime where $m\ll n$ and study the new capabilities of this \emph{$(m,n)$-hybrid} model. We show that, despite the fact that the hybrid model adds no significant new capabilities for the basic task of simple hypothesis-testing, there are many other tasks (under a wide range of parameters) that can be solved in the hybrid model yet cannot be solved either by the curator or by the local-users separately. Moreover, we exhibit additional tasks where at least one round of \emph{interaction} between the curator and the local-users is necessary -- namely, no hybrid model protocol without such interaction can solve these tasks. Taken together, our results show that the combination of the local model with a small curator can become part of a promising toolkit for designing and implementing differential privacy.
\end{abstract}

\thispagestyle{empty}
\newpage

\setcounter{page}{1}

\section{Introduction}\label{sec:intro}

Data has become one of the main drivers of innovation in applications as varied as technology, medicine~\cite{ginsberg2009detecting}, and city planning~\cite{bettercities, nycbettercities}. However, the collection and storage of personal data in the service of innovation by companies, researchers, and governments poses significant risks for privacy and personal freedom.
Personal data collected by companies can be breached~\cite{BreachStats}; 
subpoenaed by law enforcement in broad requests~\cite{ReverseLocationWarrants2019}; mis-used by companies' employees~\cite{UberGodView, CapitalOneHacker}; or used for purposes different from those announced at collection time~\cite{venkatadri-2019-pii}. 
These concerns alongside data-hungry company and government practices have propelled privacy to the frontier of individuals' concerns, societal and policy debates, and academic research.

The \textit{local model of differential privacy}~\cite{Warner65, KLNRS11} has recently emerged as one of the promising approaches for achieving the goals of enabling data-driven innovation while preserving a rigorous notion of privacy for individuals that also addresses the above challenges. 
The {\em differential privacy} aspect provides each participating individual (almost) with the same protection she would have had her information not been included~\cite{DMNS06}, a guarantee that holds even with respect to all powerful adversaries with access to multiple analyses and rich auxiliary information.
The {\em local} aspect of the model means that this guarantee will continue to hold even if the curator's data store is fully breached. From the utility perspective, the deployment of local differential privacy protocols by Google, Apple, and Microsoft demonstrate that the local differential privacy model is a viable approach 
in certain applications, without requiring trust in the companies or incurring risks from hackers and intelligence agencies~\cite{RAPPOR, cnet, wired, Apple, ding2017}. 

The adoption of the local model by major industry players has motivated a line of research in the theory of local differential privacy (e.g.,~\cite{BassilyS15,BNST17,BNS17,Stemmer19,JosephMNR19,JosephRUW18,JosephMR19}).
Alongside algorithmic improvements, this body of work highlighted the wide theoretical and practical gap between utility achievable in the more traditional trusted curator model of differential privacy (where the curator ensures the privacy of its output but can perform computations on raw individual data) and that achievable in the local differential privacy model. In particular, the number of data points necessary to achieve a particular level of accuracy in the local model is significantly larger than what is sufficient for the same accuracy in the curator model (see, e.g.,~\cite{BeimelNO08, KLNRS11, ShiCRS17, BassilyS15}). This has negative consequences. First, data analysis with local differential privacy becomes the privilege of the data-rich, handicapping smaller companies and helping to cement monopolies. Second, in an effort to maintain accuracy the entities deploying local differential privacy are tempted to use large privacy loss parameters~\cite{appleepsilon}, ultimately putting into question the privacy guarantee~\cite{wired2017}.

New models for differentially private computation have recently emerged to alleviate the (inevitable) low accuracy of the local model, of which we will discuss the \textit{shuffle model}~\cite{IKOS06, prochlo, CheuSUZZ19, BalleBGN19, BalleBGNnote, GhaziPV19} and the \textit{hybrid model}~\cite{blender}.\footnote{Approaches which weaken differential privacy itself or justify the use of large privacy loss parameters are outside our scope and deserve a separate discussion. 
}
In the shuffle model, it is assumed that the curator receives data in a manner disassociated from a user identifier (e.g., after the raw data has been stripped of identifiers and randomly shuffled).  
Recent work has proved that protocols employing shuffling can provide better accuracy than local protocols and sometimes match the accuracy of the curator model~\cite{CheuSUZZ19, BalleBGN19, BalleBGNnote, GhaziPV19}.\footnote{Furthermore, the shuffle model provides new ``privacy amplification" tools that can be used in the design of differentially private algorithms~\cite{amplificationbyshuffling, BalleBGN19}.}
Although the shuffle model is a promising approach for bridging the gap between the local and trusted curator model, it suffers from two weaknesses: it requires individuals' trust in the shuffler (which itself may be subject to breaches, subpoenaes, etc.\ and the infrastructure for which may not exist), and, as highlighted by~\cite{BalleBGN19}, its privacy guarantees to an individual rely on the assumption that sufficiently many other individuals do not deviate from the protocol.

The focus of this work is on a generalization of the hybrid model introduced by~\cite{blender}, where a majority of individuals that participate in a local differential privacy computation is augmented with a small number of individuals who contribute their information via a trusted curator. 
From a practical point of view, this separation is aligned with current industry practices, and the small number of individuals willing to contribute via a curator can be employees, technology enthusiasts, or individuals recruited as alpha- or beta-testers of products in exchange for early access to its features or decreased cost~\cite{microsoft-opt-in, WindowsOptin, FirefoxOptin}. 

\begin{wrapfigure}{r}{0.45\textwidth} 
    \centering
     \vspace{-1.7cm} \includegraphics[width=0.46\textwidth]{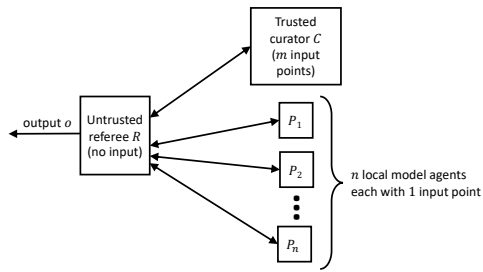} \vspace{-2.cm}
   \caption{The hybrid model\label{fig:model}.}
   \vspace{-0.5cm}
\end{wrapfigure}

Furthermore, unlike Blender~\cite{blender}, in an effort to explore a rich trust and communication model, and anticipate development of future technologies and practices, we do not assume that the curator trusted by the opt-in users and the eventual data recipient (whom we call the referee) are the same entity (see \cref{fig:model}). The detailed discussion of the benefits of this modeling assumption appears after the formal model definition in \cref{sec:hybrid-defn}.

\subsection{This work: The power of the hybrid model} 

We initiate a theoretical study of the extent to which the hybrid model improves on the utility of the local model by addition of a small curator (and vice versa, improves on the utility of a small curator by addition of parties that participate in a local model computation). We ask whether there are tasks that cannot be computed privately by a small curator, or in the local model (even with many participants), but are feasible in a model that
combines both. We answer this question in the affirmative.

\paragraph{A concatenation of problems (direct-sum).}
Our first example is composed of two independent learning tasks, each task can be  privately learned in one model, however, cannot be privately learned in the other model. Each data point is parsed as two points $(x,y)$ that are labeled $(\Par_k(x), \Thr_t(y))$ where the former is a parity function $\Par_k(x)=\langle k,x\rangle$ and the latter is a threshold function $\Thr_t(y) = \1_{\{y\geq t\}}$ (see \cref{sec:prelim-lt-pl} for complete definitions). For a choice of parameters, known sample complexity bounds imply that the parity learning part of the task can be performed by the curator but cannot be privately computed in the local model with sub-exponential number of messages~\cite{KLNRS11}, and, conversely, that the threshold learning part cannot be performed by the curator~\cite{BNS13,FeldmanX14} but can be performed by the local model parties. It follows that for our choice of parameters the combined task cannot be performed  neither by the curator nor by the local model with sub-exponential number of rounds 
(as the number of local agents is small, each agent in the local model must send many messages), 
but is feasible in the hybrid model without interaction (see \cref{sec:parity_concatenated_threshold} for a detailed analysis). 

\paragraph{Select-then-estimate.} 

In our second example, each input point $x$ is sampled i.i.d.\ from an unknown distribution over $\set{-1,1}^d$. Letting $\mu = E\left[x\right]$, the goal is to output a pair $(i,\hat\mu_i)$ where $i$ approximately maximizes $\mu_i$ (i.e., $\mu_i \geq \max_j \mu_j - \alpha$) and $\hat \mu_i$ approximates $\mu_i$ (i.e., $|\hat\mu_i - \mu_i|\leq \alpha'$), where $\alpha' < \alpha$. That is, once we found a good coordinate $i$,
we want to approximate its quality with smaller error.
A number of sample complexity bounds apply here (in particular,~\cite{Ullman18} and \cref{sec:selectionLowerbound}), yielding a wide range of parameters where the task cannot be performed by the curator alone, nor by the local model parties alone. The hybrid model, however, allows for a solution where the curator first identifies $i$ such that $\mu_i \geq \max_j \mu_j - \alpha$ and the estimation of $\mu_i$ within accuracy $\alpha'$ is then performed by the local model parties (see \cref{sec:selectThenEstimate} for a detailed analysis).

The select-then-estimate problem is a sub-component of many statistical and data-mining applications~\cite{bafna2017price, steinke2017tight, blender}. It is also of core interest in genome-wide association studies (GWAS), where the inputs are genomes of patients affected by a particular disease and the goal is to (approximately) identify disease-related genome locations and estimate their significance~\cite{azencott2018machine, yu2014scalable, bonte2018towards}.
Solving the problem while ensuring privacy is particularly challenging when the feature dimension is large compared to the number of available samples, which is the case for GWAS~\cite{berger2019emerging, azencott2018machine, johnson2013privacy, simmons2016realizing}.
As genomic data is particularly sensitive, the hybrid model of differential privacy appears appropriate for it from the privacy perspective -- the majority of the data would be analyzed with the guarantee of local differential privacy, and only a small number of data points would be entrusted to a curator, whose analysis's output   should also be differentially private~\cite{wang2009learning}.
As our example suggests, the hybrid model may be useful also from the utility perspective.

\bigskip

\noindent We next present and study tasks that 
require protocols with  different interaction patterns involving both the curator and the local agents;  that is, the 
referee needs to relay  messages from the curator to the local model agents in one problem or vice versa in a second problem.

\paragraph{Learning parity XOR threshold.} This task, which is a twist on the above concatenation of problems, 
 combines two independent learning tasks. 
 Rather than merely concatenating, in the learning parity XOR threshold problem
 points are labeled by  $\Par_k(x)\oplus\Thr_t(y)$. A simple argument shows that (for specifically chosen parameters) the task cannot be performed by the curator alone or by the local model agents with sub-exponential number of rounds. The task is, however, feasible in the hybrid model without interaction. Observe that the local model agents can complete the task once the parity part is done, and that $\Thr_t(y)=0$ for the lower half of the points $y$ or $\Thr_t(y)=1$ for the upper half of the points $y$ (or both). These observations suggest a protocol where the curator first performs two parity learning tasks (splitting points according to $y$ values), and the task is then completed by the local model agents. This requires communication (as the referee needs to relay a message from the curator to the local model agents), and it may seem that this interaction is necessary for the task. However, in \cref{sec:learnParityXorThreshold} we show that this is not the case, by demonstrating a non-interactive protocol where all parties send a message to the referee at the same round. 

\paragraph{1-out-of-$2^d$-parity.} 

Our next task can be solved in the hybrid model (but neither by a small curator model nor in the local model with sub-exponential number of rounds). This task requires interaction, first with the local model agents and then with the curator. The task consists of a multitude of parity problems, only one of which -- determined by the input -- needs to be solved. 
The curator is capable of solving the parity task privately, however, the curator's limited number of samples does not allow solving all parity problems privately, nor does it allow distinguishing which problem to solve. The local model agents cannot solve parity (with sub-exponential number of rounds)~\cite{KLNRS11} but can recover which problem needs to be solved (via a heavy hitters computation~\cite{BNS17}). Thus, the referee needs to first interact with the local agents and then the curator. See \cref{sec:1-out-of-2d-parity}.

\paragraph{Parity-chooses-secret.} 

The third task in this part 
can be solved in the hybrid model (but neither with a small curator, nor in the local model with sub-exponential number of rounds). The task requires interaction in the reverse order from the previous task: first with the curator, then with the local model agents. The input to this task contains shares of a large collection of secrets and the goal is to recover one of these secrets. The secret that should be recovered is determined by the input as the solution to a parity problem.
The curator can solve the parity problem but does not have enough information to recover any of the secrets. The local model agents receive enough information to recover all secrets, but doing so would breach privacy (as implied by~\cite{MMPRTV10}).
They cannot solve the parity problem with sub-exponential number of rounds. In the hybrid model protocol, the curator first solves the parity problem privately, and relates the identity of the secret to be recovered through the referee to the local model agents who then can send enough information to recover the required secret. See \cref{sec:paritySecretSharing}.

\medskip

The latter two tasks highlight information and private-computation gaps between the curator and the local model agents. The local model agents receive enough information to solve the task, but lack the ability to privately solve an essential sub-task (when they are not allowed to use exponentially many rounds). The curator does not receive enough information to solve the task (even non-privately), but can solve the hard sub-task.

\paragraph{When the hybrid model does not help much.}

Although most of the results in this work are on the positive side, demonstrating that cleverly utilizing a curator in synergy with local agents can allow for new capabilities, we also show that for one of the most basic tasks -- namely, basic hypothesis testing -- the hybrid model has no significant advantage over what can be done separately in either the local model with $m$ agents or in the curator model with database of size $n$. We show that if for two distributions $\calD_0$ and $\calD_1$ there is a private protocol in the hybrid model that given i.i.d.\ samples from $\calD_j$ correctly identifies $j$, then  there is a private protocol with the same sample size either in the local model or in the curator model that correctly identifies $j$ (with some small loss in the success probability).
We then consider two distributions $\calD_0$ and $\calD_1$  over the the domain $\set{0,1}$, where in $\calD_0$ the probability of $1$ is strictly less than $1/2$ and in $\calD_1$ the probability of $1$ is strictly greater than $1/2$ and  identify values of $m$ and $n$ such that in the hybrid model, where the curator has $m$ samples and there are $n$ agents (each holding one sample), no protocol exists that can differentiate whether the $m+n$ inputs were sampled i.i.d.\ from $\calD_0$ or from $\calD_1$.
Since computing the sum of i.i.d.\ sampled bits from $\calD_0$ or $\calD_1$ can distinguish between these distributions, the above results imply that for computing the sum, the hybrid model is no better than each model separately.
See \cref{sec:basic_hypothesis_testing}.

\paragraph{A new lower bound for selection in the local model.}

As mentioned above, our analysis for the select-then-estimate task relies on lower bounds on the sample complexity of selection in the local model. Ullman \cite{Ullman18} gives a (tight) lower bound of $\Omega(\frac{d \log d}{\alpha^2 \epsilon^2})$ samples for the non-interactive case.
In \cref{sec:selectionLowerbound} we show that for {\em interactive} local model protocols, the number of messages in such protocol is $\Omega(d^{1/3})$. For example, if the curator interacts with the local model parties so that each party sends $t$ messages, then the number of parties must be at least $\Omega(d^{1/3}/t)$. The proof is by a generic reduction from any private PAC learning task to selection, which preserves sample complexity. The bound is obtained by applying the reduction from privately learning parity and applying bounds on the sample complexity of privately learning parity from~\cite{KLNRS11}.

\subsection{Discussion and future work}

Our results show that the combination of the local model with a small curator can become part of a promising toolkit for designing and implementing differential privacy. More work is needed to develop the theory of this model (and possibly introduce variants), and, in particular, characterize which tasks can benefit from it. From an algorithms design perspective, now that we know that the hybrid model can lead to significant improvements over both the curator and local models, an exciting open question is  understanding what other non-trivial algorithms can be designed that take advantage of the synergy.

\paragraph{Selection bias.}
In this work we assume that the inputs for the curator and for the local model agents come from the same distribution. However, 
the recruitment of individuals for participating via a curator can create unintended differences between the input distributions seen by the curator and the entire population, and hence lead to biases, an issue which is outside the scope of the current work.
Selection bias remains an important issue that needs to be addressed.

\paragraph{Approximate differential privacy.}
Our separations between the hybrid model and the curator and local models hold for pure differential privacy (i.e., $\epsilon$-differential privacy). Specifically, we use lower bounds for $\epsilon$-differential private learning of the threshold functions in the curator model~\cite{BNS13, FeldmanX14}; these lower bounds do not hold  for $(\epsilon,\delta)$-differential private learning of the threshold functions~\cite{BNS16a,BNSV15}. We also use lower bounds for $\epsilon$-differential private learning of  parity in the local model~\cite{KLNRS11}; it is open if these lower bounds hold for fully interactive $(\epsilon,\delta)$-differential private learning protocols of parity. Possible separations for  $(\epsilon,\delta)$-differential privacy are left for future research.

\section{Preliminaries}
\subsection{Protocols for differentially private computations}
Let $X$ be a data domain. We consider a model where the inputs and the computation are distributed among parties $P_1,\ldots,P_n$. Each party is an interactive randomized functionality: it can receive messages from the other parties, perform a randomized computation, and send messages to the other parties. At the beginning of a computation, each party $P_i$ receives its input $\mathbf{x}_i = (x_{i,1},\ldots,x_{i,\ell_i}) \in X^{\ell_i}$. I.e., the input of party $P_i$ consists of a sequence of $\ell_i\geq 0$ entries taken from the data domain $X$, and the entire joint input to the protocol is $(x_{1,1}\ldots,x_{1,\ell_1},x_{2,1},$ $\ldots,x_{2,\ell_2},\ldots,x_{n,1},\ldots,x_{n,\ell_n})$. 
The parties  engage in a randomized interactive protocol $\Pi=(\Pi_{P_1},\ldots,\Pi_{P_n})$, where a message sent by a party $P_i$ in some round is  computed according to $\Pi_{P_i}$ and depends on its input $\mathbf{x}_i$, its random coins, and the sequence of messages it has seen in previous rounds. When a party $P_i$ halts, it writes its output to a local output register $\mathbf{o}_{P_i}$.
The number of messages in a protocol is the number of rounds multiplied by the number of parties. 
\begin{definition}
We say that $\mathbf{x}=(x_1,\ldots,x_\ell)\in X^\ell$ and $\mathbf{x}'=(x'_1,\ldots,x'_\ell)\in X^\ell$  are {\em neighboring} if they differ on at most one entry, i.e., there exist $i^*\in[\ell]$ such that $x_i = x'_i$ for $i\in [\ell]\setminus\set{i^*}$. 
\end{definition}

\begin{definition}
We say that two probability distributions ${\cal D}_0, {\cal D}_1\in \Delta(\Omega)$ are $(\epsilon,\delta)$-close and write ${\cal D}_0 \approx_{\epsilon,\delta} {\cal D}_1$ if $$\Pr_{t\sim {\cal D}_b}\left[t\in T\right] \leq e^{\epsilon} \cdot \Pr_{t\sim {\cal D}_{1-b}}\left[t\in T\right] + \delta$$ for all measurable events $T\subset \Omega$ and $b\in\set{0,1}$. 
\end{definition}

We are now ready to define what it means for a protocol to be differentially private in a fully malicious setting, i.e., in a setting where an arbitrary adversary controls the behavior of all but one party. Intuitively, a protocol is differentially private in a fully malicious setting if there do not exist a party $P_i$ and an adversary $A$ controlling $P_1,\ldots,P_{i-1},P_{i+1},\ldots,P_n$ such that $A$ can ``trick'' $P_i$ to act non-privately. 
More formally, we model the adversary 
as an interactive randomized functionality.
For a party $P_i$, define $A_{P_i}$ to be a randomized functionality as follows.  \begin{enumerate}
    \item An input of $A_{P_i}$ consists of a sequence of $\ell_i$ entries taken from the data domain, $\mathbf{x}\in X^{\ell_i}$.
    \item $A_{P_i}$ simulates an interaction between party $P_i$ with $\mathbf{x}$ as its input, and $A$. 
    The simulated $P_i$ interacts with $A$ following the instructions of  its part in the protocol, $\Pi_{P_i}$. The adversary $A$ interacts with $P_i$ sending messages for parties $P_1,\ldots,P_{i-1},P_{i+1},\ldots,P_n$. However, $A$ does not necessarily adhere to the protocol $\Pi$.
    \item The simulation continues until $A$ halts with an output $\mathbf{o}_A$, at which time $A_{P_i}$ halts and outputs $\mathbf{o}_A$. \end{enumerate}

\begin{definition}[Multiparty differential privacy \cite{DMNS06,KLNRS11,BeimelNO08,Vadhan2016}]
\label{def:DP_Prot}
A protocol $\Pi$ is $(\epsilon,\delta)$-differentially private if for all $i\in [n]$, for all interactive randomized functionalities $A$, and all neighboring $\mathbf{x},\mathbf{x}'\in X^{\ell_i}$,
$ A_{P_i}(\mathbf{x}) \approx_{\epsilon,\delta} A_{P_i}(\mathbf{x}').$
When $\ell_1=\ell_2=\cdots=\ell_n=1$ we say that the protocol operates in the {\em local model}, and when $n=1$ we say that the protocol (or the algorithm) operates in the {\em curator model}. We say that a protocol $\Pi$ is $\epsilon$-differentially private if  it is $(\epsilon,0)$-differentially private.
\end{definition}

\paragraphnegative{Comparison to previous definitions.} 
In contrast to \cite{BeimelNO08,Vadhan2016}, our definition applies also to a malicious adversary that can send arbitrary messages.
The definition of \cite{KLNRS11} also applies to a malicious adversary, however it requires that each answer of an agent preserves $\epsilon$-differential privacy (i.e., if there are $d$ rounds, the protocol is $d\epsilon$-differentially private). In contrast, the definitions of \cite{BeimelNO08,Vadhan2016} and our definition measures the privacy of the entire transcript of an agent.

Note that (i) Restricting the outcome $\mathbf{o}_A$ to binary results in a definition that is equivalent to \cref{def:DP_Prot}.
(ii) It is possible to consider a relaxed version of \cref{def:DP_Prot} where the adversary $A$ is ``semi-honest'' by requiring $A$ to follow the protocol $\Pi$.
(iii) \cref{def:DP_Prot} differs from definitions of security in the setting of secure multiparty computation as the latter also state correctness requirements with respect to the outcome of the protocol. The difference between the setting is that secure multiparty computation implements a specified functionality\footnote{Furthermore, secure multiparty computation is silent with respect to the chosen functionality, regardless whether it is ``privacy preserving'' or ``secure''.} whereas differential privacy limits the functionality to hide information specific to individuals, but does not specify it.

\subsection{The hybrid model}
\label{sec:hybrid-defn}

A computation in the $(m, n)$-hybrid model is defined as the execution of a randomized interactive protocol $\Pi = (\Pi_C,\Pi_{P_1},\ldots,\Pi_{P_n},\Pi_R)$ between three types of parties: a (single) curator $C$, $n$ ``local model'' agents $P_1,\ldots,P_n$, and a referee $R$. The referee has no input, the curator $C$ receives $m$ input points $\mathbf{x}=(x_1,\ldots,x_m)\in X^m$ and the $n$ ``local model'' agents $P_1,\ldots,P_n$ each receive a single input point $y_{i}\in X$. We use the notation $\mathbf{D}$ to denote the joint input to the computation, i.e., $\mathbf{D}= (x_1,\ldots, x_m,y_1,\ldots,y_n)$. 

The communication in a hybrid model protocol is restricted to messages exchanged between the referee $R$ and the other parties $C,P_1,\ldots,P_n$ (i.e., parties $C, P_1,\ldots,P_n$ cannot directly communicate among themselves). Parties $C,P_1,\ldots,P_n$ have no output, whereas when the execution halts  the referee $R$ writes to a special output register $\mathbf{o}$. See \cref{fig:model}. We require the protocol $\Pi=(\Pi_C,\Pi_{P_1},\ldots,\Pi_{P_n},\Pi_R)$ to satisfy differential privacy as in \cref{def:DP_Prot}.

The hybrid model is a natural extension of well-studied models in differential privacy. Setting $n=0$ we get the trusted curator model (as $C$ can perform any differentially private computation), and setting $m=0$ we get the local model. In this work, we are interested in the case $0 < m \ll n$, because in this regime, the hybrid model is closest in nature to the local model. Furthermore, in many applications, once $m$ is comparable to $n$ it is possible to drop parties $P_1,\ldots,P_n$ from the protocol without a significant loss in utility.

Comparing with Blender~\cite{blender}, where the curator $C$ and the referee $R$ are the same party, we observe that the models are equivalent in their computation power -- every differentially private computation in one model is possible in the other (however, the models may differ in the  number of interaction rounds needed).
Nevertheless, the separation between the curator and the referee has merits as we now discuss.

\paragraphnegative{On the separation between the curator and referee.} 
From a theory point of view, it is useful to separate these two parties as this allows to examine effects related to the order of interaction between the parties (e.g., whether the referee communicates first with the curator $C$ or with the local model parties $P_1,\ldots,P_n$).

Moreover, by separating the curator and referee, the hybrid model encapsulates a richer class of trust models than \cite{blender}, and, in particular, includes a trust model where data sent to the curator is not revealed to the referee. In an implementation this may correspond to a curator instantiated by a publicly trusted party, or by using technologies such as secure multiparty computation, or secure cryptographic hardware which protects data and computation from external processes~\cite{mironov19}.

The curator-referee separation also makes sense from a practical point of view within a company. It is reasonable that only a small fraction of a company's employees, with appropriate clearance and training, should be able to access the raw data of those who contribute their data to the trusted curator model, whereas the majority of employees should only see the privacy-preserving version of it~\cite{rogaway2015moral}.

\begin{remark}[A note on public randomness]
\label{rem:shared}
Some of our protocols assume the existence of a shared random string. In an implementation, shared randomness can be either set up offline or be chosen and broadcast by the referee. We stress that the privacy of our protocols does not depend on the shared random string actually being random. 
Furthermore, all our lower bounds hold even when the local agents hold a shared (public) random string. 
\end{remark}

\subsection{Parity and threshold functions}
A {\em concept} $c:X\rightarrow \{0,1\}$ is a predicate that labels {\em examples} taken from the domain $X$ by either 0 or 1.  A \emph{concept class} $C$ over $X$ is a set of concepts (predicates) mapping $X$ to $\{0,1\}$. 
Let $b,c\in\N$ be parameters. The following two concept classes will appear throughout the paper: 
\begin{itemize}[leftmargin=15pt]
    \item $\Threshold_b=\left\{\Thr_t : t\in\{0,1\}^b\right\}$ where $\Thr_t:\{0,1\}^b\rightarrow\{0,1\}$ is defined as 
$\Thr_t(x)=\1_{\{x\geq t\}}$, where we treat strings from
$\{0,1\}^b$ as integers in $\set{0,\ldots,2^b-1}$.
\item $\Parity_c=\left\{\Par_k : k\in\{0,1\}^c\right\}$ where 
$\Par_k:\{0,1\}^c\rightarrow\{0,1\}$ is defined as 
$\Par_k(x)= \myangle {k,x}=\oplus_{j=1}^c k_j \cdot x_j$.
\end{itemize}

\subsection{Preliminaries from learning theory and private learning}\label{sec:prelim-lt-pl}

We now define the probably approximately correct (PAC) model of~\cite{Valiant84}.
Given a collection of labeled examples, the goal of a learning algorithm (or protocol) is to {\em generalize} the given data into a concept (called a ``hypothesis'') that accurately predicts the labels of fresh examples from the underlying distribution. More formally:

\begin{definition}
The {\em generalization error} of a hypothesis $h:X\rightarrow\{0,1\}$ w.r.t.\ a target concept $c$ and a distribution $\DDD$ is defined as 
$\error_{\DDD}(c,h)=\Pr_{x \sim \DDD}[h(x)\neq c(x)].$ 
\end{definition}

\begin{definition}[PAC Learning~\cite{Valiant84}]\label{def:PAC}
Let $C$ be a concept class over a domain $X$, and let $\Pi$ be a protocol in which the input of every party is a collection of (1 or more) labeled examples from $X$.
The protocol $\Pi$ is called an {\em $(\alpha,\beta)$-PAC learner} for $C$ if the following holds for all concepts $c \in C$ and all distributions $\DDD$ on $X$: 
If the inputs of the parties are sampled i.i.d.\ from $\DDD$ and labeled by $c$, then, with probability at least $1-\beta$, the outcome of the protocol is a hypothesis $h:X\rightarrow\{0,1\}$ satisfying
$\error_{\DDD}(c,h)  \leq \alpha$.

The {\em sample complexity} of the protocol is the total number of labeled examples it operates on. That is, if there are $n$ parties where party $P_i$ gets as input $\ell_i$ labeled examples, then the sample complexity of the protocol is $\ell_1+\dots+\ell_n$.
\end{definition}

A PAC learning protocol that is restricted to only output hypotheses that are themselves in the class $C$ is called a {\em proper learner}; otherwise, it is called an {\em improper learner}. 
A common technique for constructing PAC learners is to guarantee that the resulting hypothesis $h$ has small {\em empirical error} (as defined below), and then to argue that such an $h$ must also have small generalization error.

\begin{definition}
The {\em empirical error} of a hypothesis $h:X\rightarrow\{0,1\}$ w.r.t.\ a labeled sample $\db=(x_i,y_i)_{i=1}^m$ is defined as 
$\error_S(h) = \frac{1}{m} |\{i : h(x_i) \neq y_i\}|.$ 
The {\em empirical error} of $h$ w.r.t.\ an unlabeled sample $\db=(x_i)_{i=1}^m$ and a concept $c$ is defined as $\error_S(h,c) = \frac{1}{m} |\{i : h(x_i) \neq c(x_i)\}|.$ 
\end{definition}

Indeed, in some of our constructions we will use building blocks that aim to minimize the {\em empirical} error, as follows.

\begin{definition}
Let $c:X\rightarrow\{0,1\}$ be a concept and let $\mathbf{D}\in(X\times\{0,1\})^n$  be a labeled database. We say that $\mathbf{D}$ is {\em consistent} with $c$ if for every $(x,y)\in\mathbf{D}$ it holds that $y=c(x)$.
\end{definition}

\begin{definition}\label{def:empLearn}
Let $C$ be a concept class over a domain $X$, and let $\Pi$ be a protocol in which the input of every party is a collection of (1 or more) labeled examples from $X$.
The protocol $\Pi$ is called an {\em $(\alpha,\beta)$-empirical learner} for $C$ if for every concept $c \in C$ and for every joint input to the protocol $\mathbf{D}$ that is consistent with $c$, with probability at least $1-\beta$, the outcome of the protocol is a hypothesis $h:X\rightarrow\{0,1\}$ satisfying
$\error_{\mathbf{D}}(h)  \leq \alpha$.
\end{definition}

We will be interested in PAC-learning protocols that are also differentially private. Specifically,

\begin{definition}[\cite{KLNRS11}]
A {\em private learner} is a protocol $\Pi$ that satisfies both \cref{def:PAC,def:DP_Prot}. Similarly, a {\em private empirical learner} is a protocol $\Pi$ that satisfies both \cref{def:empLearn,def:DP_Prot}.
\end{definition}

Dwork et al.~\cite{DworkFHPRR15} and Bassily et al.~\cite{BassilyNSSSU16} showed that if a hypothesis $h$ is the result of a differentially private computation on a random sample, then the empirical error of $h$ and its generalization error are guaranteed to be close. We will use the following multiplicative variant of their result~\cite{NSunpublished}, whose proof is a variant of the original proof of~\cite{BassilyNSSSU16}.

\begin{theorem}[\cite{DworkFHPRR15,BassilyNSSSU16,NSunpublished,FeldmanS17}] \label{thm:dpGeneralization}
Let $\AAA:X^n\rightarrow2^X$ be an $(\eps,\delta)$-differentially private algorithm that operates on a database of size $n$ and outputs a predicate $h:X\rightarrow\{0,1\}$. Let $\DDD$ be a distribution over $X$, let $S$ be a database containing $n$ i.i.d.\ elements from $\DDD$, and let $h\leftarrow\AAA(S)$. Then,
$$
\Pr_{\substack{S\sim\DDD \\ h\leftarrow \AAA(S)}}\left[ e^{-2\eps} \cdot h(\DDD)-h(S) >  \frac{10}{\eps n}\log\left(\frac{1}{\eps \delta n}\right) \right] < O\left(\eps \delta n\right).
$$
\end{theorem}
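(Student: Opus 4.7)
The plan is to follow the monitor-based generalization argument of Bassily et al.~\cite{BassilyNSSSU16}, refined to the multiplicative form as in~\cite{FeldmanS17,NSunpublished}. The argument has three steps: establish an in-expectation consequence of DP, amplify it to a high-probability bound via a monitor, and combine the two to get the claimed multiplicative tail bound.

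First, I would derive an in-expectation consequence of differential privacy. Fix any coordinate $i \in [n]$, and let $x_i'$ be a fresh independent copy drawn from $\DDD$. Since $\AAA$ is $(\eps,\delta)$-DP on the $i$-th coordinate, the joint distribution of $(x_i, \AAA(S))$ is $(\eps,\delta)$-close to that of $(x_i, \AAA(x_1, \ldots, x_{i-1}, x_i', x_{i+1}, \ldots, x_n))$. Applying this closeness to the $[0,1]$-valued function $\phi(h,x) = h(x)$, and using that in the fresh version $x_i$ is independent of the hypothesis, gives $\E[h(x_i)] \le e^\eps \E[h(\DDD)] + \delta$. Averaging over $i$ yields $\E[h(S)] \le e^\eps \E[h(\DDD)] + \delta$, i.e., $h(S) \ge e^{-\eps} h(\DDD) - \delta$ in expectation.

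Next, to pass from an expectation bound to a high-probability bound, suppose for contradiction that the bad event has probability $p$ larger than the claimed $O(\eps\delta n)$. For a large parameter $k$, imagine running $\AAA$ independently on fresh datasets $S_1, \ldots, S_k \sim \DDD^n$ to obtain $h_j = \AAA(S_j)$, and define a monitor $W$ that selects $j^* \in \argmax_j \bigl( e^{-2\eps} h_j(\DDD) - h_j(S_j) \bigr)$ and outputs $h_{j^*}$. By parallel composition over the $k$ disjoint blocks, $W$ is still $(\eps,\delta)$-DP as a function of the combined dataset of size $kn$, so the first-step bound applied to $W$ yields an upper bound on $\E\!\left[\frac{1}{k}\sum_{j=1}^k h_{j^*}(S_j)\right]$ in terms of $\E[h_{j^*}(\DDD)]$.

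Finally, combining the two: for $k$ of order $1/p$, with probability approaching $1$ at least one replica lies in the bad event and the monitor selects such a replica, so that $h_{j^*}(S_{j^*}) < e^{-2\eps} h_{j^*}(\DDD) - \tau$ with high probability. A Chernoff-style decoupling argument over the $k-1$ unselected replicas shows that $\frac{1}{k-1}\sum_{j \ne j^*} h_{j^*}(S_j)$ concentrates around $h_{j^*}(\DDD)$ up to $O(1/\sqrt{kn})$ fluctuations, since $h_{j^*}$ is, modulo the small bias induced by the $\argmax$ selection, nearly independent of those blocks. Substituting into the inequality from Step~2 forces $p \cdot \tau \le O(\eps + \delta)$, which for the choice $\tau = \tfrac{10}{\eps n}\log(1/(\eps\delta n))$ gives $p \le O(\eps\delta n)$, as desired. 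The main obstacle is precisely this decoupling/Chernoff step: one must carefully control the dependence between $h_{j^*}$ and the non-selected blocks caused by the $\argmax$, and trade the resulting bias against the $e^{-\eps}$ slack from the expectation bound. This is where the extra $e^{-\eps}$ factor (giving $e^{-2\eps}$ rather than $e^{-\eps}$ in the theorem) and the $\log(1/(\eps\delta n))$ factor in $\tau$ are paid for; the remaining steps are essentially routine.
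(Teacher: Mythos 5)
The paper does not actually prove \cref{thm:dpGeneralization} --- it imports it from \cite{DworkFHPRR15,BassilyNSSSU16,NSunpublished,FeldmanS17} --- so your proposal has to be measured against the standard monitor argument of \cite{BassilyNSSSU16} in its multiplicative form. Your overall strategy (an in-expectation consequence of DP, amplification via a monitor over $k$ independent replicas, parallel composition) is the right one, but Steps 2--3 contain a genuine gap. You apply the plain expectation bound to the monitor $W$ over the entire combined dataset of size $kn$, which only controls $\E\bigl[\tfrac{1}{k}\sum_{j} h_{j^*}(S_j)\bigr]$, and you are then forced into arguing that $\tfrac{1}{k-1}\sum_{j\ne j^*}h_{j^*}(S_j)$ concentrates around $h_{j^*}(\DDD)$. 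That decoupling is exactly the step you cannot do cheaply: conditioned on the argmax selecting $j^*$, the unselected blocks are no longer i.i.d.\ from $\DDD^n$, and you give no mechanism for bounding the resulting bias. The standard argument sidesteps this entirely by proving a stronger lemma for mechanisms that output an index $t\in[k]$ together with a predicate $h$: for each coordinate $i$ of the \emph{selected} block one writes $\E[h(x_{t,i})]=\sum_{j}\E\bigl[h(x_{j,i})\1_{\{t=j\}}\bigr]$ and applies the $(\eps,\delta)$ swap to each summand, yielding $\E[h(\DDD)]\le e^{\eps}\,\E[h(S_t)]+k\delta$ (the $k\delta$ term is essential --- it is where the final $\delta$-dependence comes from). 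This controls the selected block directly, and no claim about the unselected blocks is needed.

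Your closing arithmetic also does not deliver the stated bound: from ``$p\cdot\tau\le O(\eps+\delta)$'' and $\tau=\tfrac{10}{\eps n}\log(1/(\eps\delta n))$ you would get $p=O(\eps^2 n+\eps\delta n)$, which is vacuous for constant $\eps$. The whole point of the multiplicative $e^{-2\eps}$ factor is that no additive $O(\eps)$ term survives on the right-hand side: with the selected-block lemma, $\E\bigl[e^{-2\eps}h_{j^*}(\DDD)-h_{j^*}(S_{j^*})\bigr]\le(e^{-\eps}-1)\E[h_{j^*}(S_{j^*})]+e^{-2\eps}k\delta\le k\delta$, since $e^{-\eps}-1\le0$ and $h\ge0$. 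Taking $k=\Theta\bigl(\log(1/(\eps\delta n))/p\bigr)$, some replica lands in the bad event except with probability $O(\eps\delta n)$, so the same expectation is at least $\Omega(\tau)-O(\eps\delta n)$; the resulting inequality is $\tau\lesssim k\delta$, not $p\tau\lesssim\eps+\delta$, and it gives $p\le O(\eps\delta n)$ for the stated $\tau$. So the skeleton of your argument is the right one, but both the decoupling step and the final trade-off must be replaced as above.
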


We next state known impossibility results for privately learning threshold and parity function.

   \begin{fact} [\cite{BNS13, FeldmanX14}]
    \label{fact:learning_thresholds}
    Let $b\in\N$. 
    Any $\epsilon$-differentially private $(\alpha,\beta)$-PAC learner for  $\Threshold_b$ requires $\Omega(\frac {b}{\epsilon\alpha})$ many samples.
    \end{fact}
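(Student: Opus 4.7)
The plan is to prove this lower bound via a packing argument combined with group privacy, the standard technique for pure differential privacy lower bounds. I would proceed in three steps.

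First, I would construct, for a carefully chosen set of $k = 2^{\Omega(b)}$ thresholds $T \subseteq \{0,1\}^b$, distributions $\{\DDD_t : t \in T\}$ over $\{0,1\}^b$ together with associated ``good events'' $G_t$ (each a subset of hypotheses over $\{0,1\}^b$) such that any $(\alpha,\beta)$-PAC learner, on i.i.d.\ samples from $\DDD_t$ labeled by $\Thr_t$, outputs a hypothesis in $G_t$ with probability at least $1-\beta$; and the $G_t$'s are effectively disjoint, i.e., $\sum_{t\in T}\Pr[\AAA(S) \in G_t] \leq 1$ for every input $S$. The natural construction places mass $\Theta(\alpha)$ on a short ``signature'' interval surrounding $t$ (a low-error hypothesis must label this interval consistently with $\Thr_t$, forcing a $0$-to-$1$ transition inside it), and the remaining $1 - \Theta(\alpha)$ mass on extreme points of $\{0,1\}^b$ that are labeled identically under every $\Thr_t$ for $t \in T$.

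Second, for each $t \in T$ I would fix a canonical database $S_t$ of size $n$ matching the expected empirical distribution of $n$ i.i.d.\ draws from $\DDD_t$. Since $\DDD_t$ and $\DDD_{t'}$ differ only on their signature intervals (which carry $\Theta(\alpha)$ mass), $S_t$ and $S_{t'}$ differ in at most $D = O(\alpha n)$ entries. Group privacy then gives
\[
\Pr[\AAA(S_t) \in G_t] \;\leq\; e^{D\epsilon}\cdot\Pr[\AAA(S_{t'}) \in G_t]
\]
for every $t, t' \in T$. Fixing $t'$ and summing over $t \in T$, using both the PAC guarantee $\Pr[\AAA(S_t) \in G_t] \geq 1-\beta$ and the disjointness of the $G_t$'s, yields $k(1-\beta) \leq e^{D\epsilon}$. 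Taking logarithms and substituting $k = 2^{\Omega(b)}$ and $D = O(\alpha n)$ gives $n = \Omega(b/(\epsilon\alpha))$ as claimed. To move from this empirical-style reasoning to the distributional PAC setting, one couples the canonical database with $n$ i.i.d.\ draws from $\DDD_t$ via standard concentration, which only affects constants.

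The main obstacle is arranging simultaneous disjointness of the $G_t$'s for arbitrary \emph{improper} learners while keeping the inter-database Hamming distance small. Since a general hypothesis $h : \{0,1\}^b \to \{0,1\}$ can have many $0$-to-$1$ transitions, a naive packing has overlapping good sets. The fix is to space the chosen $t$'s in $T$ so that their signature intervals are pairwise disjoint and far apart; then any hypothesis can belong to the good-event of at most one $t$ (or only $O(1)$ many, absorbed into constants). This exploits the ordered structure of threshold functions and the fact that a single $0$-to-$1$ transition ``commits'' a hypothesis to a unique threshold location. The full construction appears in~\cite{BNS13, FeldmanX14}.
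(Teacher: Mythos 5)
The packing-plus-group-privacy skeleton you describe is the standard tool here, your arithmetic ($k(1-\beta)\leq e^{D\epsilon}$ with $k=2^{\Omega(b)}$ and $D=O(\alpha n)$) is correct, and the argument does yield $\Omega(\frac{b}{\epsilon\alpha})$ for \emph{proper} learners. But the Fact must hold for arbitrary (improper) learners --- indeed the paper applies it in the proof of \cref{claim:PatityThreshCurator} to a learner $\BBB$ whose output $h(y)=\hat{h}(\vec{0},y)$ is not a threshold function --- and for improper learners the disjointness step fails in a way your proposed fix does not repair. Spacing the signature intervals far apart buys nothing: a single hypothesis $h:\{0,1\}^b\rightarrow\{0,1\}$ can place a $0$-to-$1$ transition inside \emph{every one} of the $2^{\Omega(b)}$ intervals and be correct on the extreme points, hence lie in all $k$ good events simultaneously; ``a single transition commits $h$ to a unique location'' presupposes $h$ has a single transition, i.e., is itself a threshold. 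This is not a removable technicality. The analogous packing ``bound'' is provably false for improper learners of point functions: \cite{BNS13} exhibit an improper pure-DP learner for $\point$ over $[2^d]$ whose sample complexity is independent of $d$, precisely by exploiting hypotheses that are simultaneously good for exponentially many targets.

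The cited proofs close this gap with a genuinely different mechanism. By pure differential privacy, the output distribution of the learner on \emph{any} input database of size $n$ is pointwise within a factor $e^{\epsilon n}$ of its output distribution on one fixed reference database; sampling $e^{O(\epsilon n)}$ hypotheses from that fixed, data-independent distribution therefore hits an $\alpha$-good hypothesis for every target with high probability. This converts the learner into a small probabilistic representation (a data-independent hypothesis net of size $e^{O(\epsilon n)}$), and the work then shifts to showing that $\Threshold_b$ admits no such net of size $2^{o(b)}$ --- via the representation dimension of \cite{BNS13} and, in \cite{FeldmanX14}, via the $\Omega(b)$ lower bound on the one-way randomized communication complexity of the Greater-Than problem. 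That second step is where the difficulty of improper hypotheses is actually confronted, and it is absent from your plan.
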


    \begin{fact}[\cite{KLNRS11}]
    \label{fact:learning_parity_LDP}
    Let $c\in\N$. 
    In any $\epsilon$-differentially private  $(\alpha,\beta)$-PAC learning protocol for $\Parity_c$ \emph{in the local model} the number of messages is $\Omega(2^{c/3})$. This holds even when the underlying distribution is restricted to be the uniform distribution.
    \end{fact}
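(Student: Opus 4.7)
The plan is to reduce the lower bound to the well-known impossibility of learning parities under the uniform distribution in the statistical query (SQ) model, via the bridge between local differential privacy and the SQ model established by Kasiviswanathan et al. The intuition is that each $\epsilon$-LDP message about a sample from $\DDD$ is a bounded randomized function of the sample, so it effectively acts as a noisy answer to a statistical query about $\DDD$; since parities are hard for SQ, any LDP learner for parity must use many messages.

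Concretely, I would first prove a generic simulation lemma: any $\epsilon$-LDP protocol $\Pi$ using $N$ total messages that $(\alpha,\beta)$-PAC-learns a class $C$ under a distribution $\DDD$ can be emulated by an SQ learner for $C$ whose number of queries $q$ and tolerance $\tau$ are jointly controlled by $N$ through a power-law relation (with $\epsilon$-dependent factors treated as constants). The key observation is that the randomized message $M$ produced by an agent holding $x\sim\DDD$ induces, for each event $T$ in the message alphabet, the statistical query $q_T(\DDD)=\Pr_{x\sim\DDD}[M(x)\in T]$; its value can be approximated by averaging indicators over $1/\tau^2$ independent simulated samples via a Chernoff bound. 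The simulation proceeds round by round: at the start of each round it uses the oracle to approximate the joint message distribution conditioned on the transcript so far, samples a simulated round of messages, and then feeds them to the next round of $\Pi$. A hybrid argument across rounds bounds the total variation between the simulated and real transcripts, so the SQ algorithm inherits $\Pi$'s PAC guarantee.

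Next, I would invoke the SQ-dimension lower bound of Blum, Furst, Jackson, Kearns, Mansour, and Rudich. Because the $2^c$ parity functions are pairwise orthogonal under the uniform distribution on $\{0,1\}^c$, the SQ dimension of $\Parity_c$ equals $2^c$, so any SQ learner for $\Parity_c$ with tolerance $\tau$ must make at least $\Omega(2^c\cdot \tau^2)$ queries (for any nontrivial learning advantage). Combining this with the $(q,\tau)$-versus-$N$ trade-off from the simulation and optimizing over $\tau$ yields the claimed bound $N = \Omega(2^{c/3})$, where the cube-root exponent arises from balancing the $\tau^2$ factor in the SQ lower bound against the $\tau \sim N^{-\Theta(1)}$ tolerance achievable from $N$ messages.

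The main obstacle will be handling the \emph{interactive} nature of a general LDP protocol: in round $r$, the message function applied by an agent depends on all previously broadcast messages, so the induced SQ query is adaptive. I would resolve this by designing a round-by-round simulation that, at the start of each round, commits to the (adaptively chosen) query before invoking the oracle, and then applies a union bound across the $N$ messages to preserve the tolerance guarantee globally. A secondary subtlety is that the LDP learner may be improper; however, achieving small generalization error under the uniform distribution already yields nontrivial correlation with the true parity function, which is enough to trigger the SQ-dimension lower bound after a standard reduction to weak learning.
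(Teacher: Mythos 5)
Your proposal is correct and is essentially the argument the paper relies on: \cref{fact:learning_parity_LDP} is imported from \cite{KLNRS11}, whose proof proceeds exactly as you describe---simulate an (interactive) $\epsilon$-LDP protocol with $N$ messages round by round in the SQ model, then invoke the SQ-dimension lower bound for the $2^c$ pairwise-orthogonal parities under the uniform distribution, with the exponent $c/3$ arising from the same balancing of the number of queries against the tolerance. The only point worth keeping in mind is the paper's accompanying remark that \cite{KLNRS11} state the bound in a slightly different local model, but their argument only uses that each per-round randomizer is $\epsilon$-differentially private, so it transfers to the model used here.
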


\cref{fact:learning_parity_LDP} implies, for example, that when there are $\poly(c)$ agents the number of rounds is  $2^{\Omega(c)}$. It is open if there exists an $\epsilon$-private protocol 
(or an $(\epsilon,\delta)$-private protocol)
for learning $\Parity_c$ in the local model with $\poly(c)$ agents and any number of rounds.

\begin{remark}
The proof of \cref{fact:learning_parity_LDP} in \cite{KLNRS11} is stated in a weaker model, where 
in each round the referee sends an $\epsilon_i$-differentially private local randomizer to an agent and the agent sends the output of this randomizer on its input to the referee, such that
$\epsilon_1+\cdots+\epsilon_\ell \leq \epsilon$. However,
in their proof they only use the fact that $\epsilon_i \leq \epsilon$ in every round, thus, their lower bound proof also applies  to our model.
\end{remark}

Our protocols use the private learner of \cite{KLNRS11} for parity functions, a protocol of \cite{BNST17} for answering all threshold queries,  a  protocol of~\cite{BNS17} for heavy hitters, and a 
 protocol of~\cite{GaboardRS19} for approximating a quantile. These are specified in the following theorems.

\begin{theorem}[\cite{BNST17}]\label{thm:SanThresh}
Let $\alpha,\beta,\eps\leq1$, and let $b\in\N$. There exists a non-interactive $\epsilon$-differentially private protocol in the local model with  $n=O\left(\frac{b^3}{\alpha^2\eps^2}\cdot\log\left(\frac{b}{\alpha\beta\epsilon}\right)\right)$ agents  in which the input of every agent is a single element from $\{0,1\}^b$ and the outcome is a function $q:\{0,1\}^b\rightarrow[0,1]$ such that  for every joint input to the protocol $\mathbf{D}\in(\{0,1\}^b)^n$, with probability at least $1-\beta$, the outcome $q$ is such that $\forall w\in\{0,1\}^b$ we have $q(w)\in\left|\left\{x\in\mathbf{D}: x\leq w\right\}\right|/|\mathbf{D}|\pm\alpha$.
\end{theorem}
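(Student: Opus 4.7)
My plan is to prove \cref{thm:SanThresh} via a hierarchical dyadic-histogram construction. Build a complete binary tree of depth $b$ over the universe $\{0,1\}^b$, so that each node at depth $\ell$ corresponds to a dyadic interval of width $2^{b-\ell}$, and partition the $n$ agents into $b+1$ disjoint groups $G_0,\ldots,G_b$ of size $|G_\ell|\approx n/(b+1)$. In group $G_\ell$ every agent participates in a single invocation of a non-interactive local $\epsilon$-differentially private frequency oracle (e.g., a Hadamard-response or hashing-based protocol such as the one underlying the heavy-hitters construction of \cite{BNST17}) over the universe of $2^\ell$ depth-$\ell$ intervals, producing an unbiased estimator $\hat f_\ell(I)$ of the empirical mass of every depth-$\ell$ interval $I$. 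Since each agent lies in a single group, parallel composition preserves the overall $\epsilon$ budget.

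To define the output, decompose the prefix $\{x\in\{0,1\}^b: x\leq w\}$ as a disjoint union of at most $b$ dyadic intervals $I_1,\ldots,I_k$ along the root-to-$w$ path in $T$, and set $q(w)=\sum_{j=1}^{k}\hat f_{\ell_j}(I_j)$. For the chosen oracle, each $\hat f_\ell(I)$ has variance $O(1/(|G_\ell|\eps^2))=O(b/(n\eps^2))$, and the noises on $\hat f_{\ell_j}(I_j)$ are independent across $j$ because the intervals at different levels are estimated from disjoint groups of agents. Hence for every fixed $w$ the variance of $q(w)-|\{x\in\mathbf{D}: x\leq w\}|/|\mathbf{D}|$ is $O(b^2/(n\eps^2))$; a sub-Gaussian union bound over the $2^b$ choices of $w$ then yields an $\ell_\infty$ error of $O\bigl(b^{3/2}\sqrt{\log(b/(\alpha\beta\eps))}/(\sqrt{n}\,\eps)\bigr)$, and forcing this to be at most $\alpha$ gives the claimed $n=O\bigl(b^3/(\alpha^2\eps^2)\cdot\log(b/(\alpha\beta\eps))\bigr)$.

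The main obstacle is the concentration step. If one only used a worst-case $\ell_1$-type union bound over the $\le b$ levels, the per-level errors would compose at rate $b$ rather than $\sqrt{b}$, costing an extra factor of $b$ in the sample complexity and yielding only a $b^4$ bound. Obtaining the tight $b^3$ rate requires exploiting the fact that the noises coming from different tree levels are independent (since those levels are handled by disjoint groups of agents), so that the $\le b$ summands in the dyadic decomposition concentrate at rate $\sqrt{b}$ rather than $b$. Making this rigorous requires an explicit sub-Gaussian tail bound for the underlying local frequency oracle that holds simultaneously for every depth-$\ell$ bucket, together with the union bound over all $2^b$ queries.
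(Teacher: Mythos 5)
Your construction is correct and is essentially the same route the paper takes: \cref{thm:SanThresh} is imported from the {\tt TreeHist} analysis of Bassily et al.~\cite{BNST17}, and the remark following the theorem describes exactly your scheme of per-level prefix (dyadic-interval) frequency estimates over disjoint agent groups, assembled into threshold queries via the standard dyadic decomposition. Your sub-Gaussian composition across the independently noised levels is the step that yields the stated $b^3$ (rather than $b^4$) dependence, and I see no gaps.
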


\begin{remark}
\cref{thm:SanThresh} does not appear explicitly in \cite{BNST17}, but it is implicit in their analysis. In more details, Bassily et al.~\cite{BNST17} presented a protocol, named {\tt TreeHist}, for identifying {\em heavy hitters} in the input database $\mathbf{D}\in(\{0,1\}^b)^n$. {\tt TreeHist} works by privately estimating for each possible prefix $p\in\{0,1\}^{b'}$ (for $b'\leq b$) the number of input items that agree on the prefix $p$. Once these estimated counts are computed, \cite{BNST17} simply identified the input items $p\in\{0,1\}^{b}$ with large multiplicities in the data. \cref{thm:SanThresh} is obtained from the protocol {\tt TreeHist} (with basically the same analysis) by observing that these estimated counts (for every possible prefix) in fact give estimations for the number of input items within any given {\em interval}. 
This observation has been used several times in the literature, see, e.g., \cite{DworkNPR10}.
\end{remark}

\begin{theorem}[\cite{KLNRS11}]\label{thm:parityLearner}
Let $\alpha,\beta,\eps\leq1$, and let $c\in\N$. There exists an $\epsilon$-differentially private algorithm in the curator model that $(\alpha,\beta)$-PAC learns and $(\alpha,\beta)$-empirically learns $\Parity_c$ properly with sample complexity $O\left(\frac{c}{\alpha\eps}\log(\frac{1}{\beta})\right)$.
\end{theorem}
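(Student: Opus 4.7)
The plan is to invoke the exponential mechanism of McSherry--Talwar, using as the score function the number of labeled examples in the input sample with which a candidate parity agrees. Concretely, on input $S=((x_1,y_1),\ldots,(x_n,y_n)) \in (\{0,1\}^c \times \{0,1\})^n$, define for each candidate $k \in \{0,1\}^c$ the quality
\[
q(k, S) \;=\; \bigl|\{\, i \in [n] : \Par_k(x_i) = y_i \,\}\bigr|,
\]
and output $\hat k$ drawn with probability proportional to $\exp(\epsilon \cdot q(k,S)/2)$. Replacing a single labeled example changes $q(\cdot,S)$ by at most $1$ for every $k$, so this mechanism is $\epsilon$-differentially private.

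For utility, I would argue as follows. When $S$ is consistent with some target $\Par_{k^*} \in \Parity_c$, the true target has quality $n$, so by the standard exponential mechanism guarantee, with probability at least $1-\beta$ the output satisfies
\[
q(\hat k, S) \;\geq\; n - \frac{2}{\epsilon}\Bigl( \ln|\Parity_c| + \ln(1/\beta)\Bigr) \;=\; n - \frac{2}{\epsilon}\bigl(c\ln 2 + \ln(1/\beta)\bigr),
\]
equivalently the empirical error of $\Par_{\hat k}$ on $S$ is at most $\frac{2(c\ln 2 + \ln(1/\beta))}{\epsilon n}$. Setting $n = O\bigl(\tfrac{c + \log(1/\beta)}{\alpha\epsilon}\bigr) = O\bigl(\tfrac{c}{\alpha\epsilon}\log(1/\beta)\bigr)$ makes this at most $\alpha$, establishing the empirical learning part.

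For the PAC guarantee, the cleanest route is to apply a standard VC-based uniform convergence bound: since $\Parity_c$ has VC dimension $c$, with $n = O\bigl(\tfrac{c + \log(1/\beta)}{\alpha}\bigr)$ i.i.d.\ samples every hypothesis in $\Parity_c$ has empirical and generalization errors within $\alpha/2$ with probability $1-\beta/2$. Combining this with the empirical bound above (applied with parameters $\alpha/2$ and $\beta/2$) yields an $(\alpha,\beta)$-PAC learner with the stated sample complexity. The output is by construction a parity function, so the learner is proper, and privacy is inherited from the exponential mechanism regardless of the distribution on inputs. The main subtlety is just bookkeeping the constants so that a single sample size of $O\bigl(\tfrac{c}{\alpha\epsilon}\log(1/\beta)\bigr)$ simultaneously suffices for both the empirical (exponential mechanism) bound and the VC uniform convergence bound; I do not foresee any deeper obstacle.
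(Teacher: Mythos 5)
This theorem is imported from \cite{KLNRS11}; the paper states it without proof, and your exponential-mechanism argument with the empirical-agreement score $q(k,S)$ is exactly the proof given in that source (which bounds via $\log|\Parity_c|=c$ rather than VC dimension, but these coincide here), so the approach is correct and matches. One small correction in the PAC step: two-sided additive uniform convergence to accuracy $\alpha/2$ over a class of VC dimension $c$ costs $\Omega(c/\alpha^2)$ samples, so as literally stated your second step would spoil the $1/\alpha$ rate; you should instead invoke the one-sided realizable-case (relative-deviation) bound --- with $O\bigl(\tfrac{c\log(1/\alpha)+\log(1/\beta)}{\alpha}\bigr)$ samples labeled by the target, every parity with empirical error at most $\alpha/2$ has generalization error at most $\alpha$ --- or the privacy-implies-generalization bound of \cref{thm:dpGeneralization}, either of which preserves the claimed sample complexity.
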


In fact, the algorithm of \cite{KLNRS11} privately produces a hypothesis with small error (w.h.p.) for every fixed input sample that is consistent with some parity function. 
\begin{theorem}[Heavy hitters protocol~\cite{BNS17}]
\label{thm:heavy-hitters}
There exist constants $\lambda_1,\lambda_2>0$ such that the following holds. 
Let $\beta,\epsilon \leq 1$ and $X$ be some finite domain. There exists a non-interactive $\epsilon$-differentially private protocol in the local model with $n$ agents 
in which the input of each agent is a single element from $X$ and the outcome is a list $\text{\rm Est}$ of elements from $X$ 
such that for every joint input to the protocol $\mathbf{D}\in X^n$, with probability at least 
$1 - \beta$,  every $x$ that is an input of at least  $\frac{\lambda_1}{\epsilon} \sqrt{n \log\left(\frac{|X|}{\beta}\right)}$ agents appears in $\text{\rm Est}$, and vice versa, every element $x$ in $\text{\rm Est}$ is an input of at least
$\frac{\lambda_2}{\epsilon} \sqrt{n \log\left(\frac{|X|}{\beta}\right)}$ agents.

\begin{theorem}[\protect{\cite[Theorem 17]{GaboardRS19}}]
\label{thm:quantiles}
    Let ${\cal P}$ be any distribution on the real line. Fix any $p^*\in (0,1)$ and let $Q_{\min},Q_{\max},q^*$ be such that 
    $\Pr_{x\sim {\cal P}}[x\leq q^*]=p^*$  and $q^* \in [Q_{\min},Q_{max}]$. For any $\epsilon>0$ and for any $\lambda_{\rm quant},\tau_{\rm dist},\beta_{conf} \in (0,\nicefrac 1 2)$, there exists an interactive protocol in the local model with $T = \lceil\log_2(\frac{Q_{\max}-Q_{\min}}{\tau_{\rm dist}})\rceil$ rounds 
    that takes $N$ i.i.d.\ draws from ${\cal P}$, where $N\geq \tfrac {8T} {\lambda_{\rm quant}^2} \left(\frac{e^\epsilon+1}{e^\epsilon-1}\right)^2 \log(\nicefrac {4T}\beta_{conf})$, and 
    with probability at least $1-\beta_{\rm conf}$ it returns $\tilde q$ such that either $|\tilde q-q^*|\leq \tau_{\rm dist}$ or the probability mass ${\cal P}$ places in-between $\tilde q$ and $q^*$ is at most $\lambda_{\rm quant}$.
    \end{theorem}

\end{theorem}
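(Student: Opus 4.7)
The plan is to reduce the heavy-hitters problem to a \emph{frequency oracle}: a local-model protocol that outputs an estimator $\hat f:X\to \mathbb{R}$ such that, with probability at least $1-\beta$, $|\hat f(x)-f(x)|\leq \tau$ simultaneously for every $x\in X$, where $f(x)=|\{i:y_i=x\}|$ is the true multiplicity. Given such an oracle with $\tau = c\sqrt{n\log(|X|/\beta)}/\epsilon$, the referee outputs $\text{Est}=\{x\in X:\hat f(x)\geq T\}$ for a threshold $T$ placed strictly between the two target counts. Fixing constants $\lambda_2 < \lambda_1$ with $\lambda_1-\lambda_2 > 2c$ and setting $T=\frac{\lambda_1+\lambda_2}{2\epsilon}\sqrt{n\log(|X|/\beta)}$ guarantees both directions of the conclusion: every $x$ with true count at least $\frac{\lambda_1}{\epsilon}\sqrt{n\log(|X|/\beta)}$ satisfies $\hat f(x)\geq T$ and thus lies in $\text{Est}$, while every $x\in\text{Est}$ has estimated count $\geq T$ and hence true count at least $\frac{\lambda_2}{\epsilon}\sqrt{n\log(|X|/\beta)}$.

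To build the oracle I would use a Hadamard-response local randomizer in the style of Bassily and Smith. Identify $X$ with $\{1,\dots,N\}$ where $N$ is the smallest power of two with $N\geq |X|$, and let $H\in\{-1,+1\}^{N\times N}$ be the Hadamard matrix. Agent $P_i$ with input $y_i$ samples a uniform column index $j_i\in[N]$, sets $b_i=H_{y_i,j_i}$, flips it to $\tilde b_i$ with probability $1/(e^\epsilon+1)$, and sends the pair $(j_i,\tilde b_i)$ to the referee. The message is an $\epsilon$-local randomizer of $y_i$: $\tilde b_i$ is the classical one-bit $\epsilon$-randomized response, and $j_i$ is input-independent. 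Because each agent sends a single message computed only from its own input and local coins, $\epsilon$-differential privacy holds even against the malicious adversary of \cref{def:DP_Prot}. The referee then computes $\hat f(x)=\frac{e^\epsilon+1}{e^\epsilon-1}\sum_{i=1}^n H_{x,j_i}\tilde b_i$; orthogonality of Hadamard rows gives $\mathbb{E}[\hat f(x)]=f(x)$ for every $x$ (the contributions from agents with $y_i\neq x$ cancel in expectation over $j_i$).

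For concentration, fix $x$: the summands $\frac{e^\epsilon+1}{e^\epsilon-1}H_{x,j_i}\tilde b_i$ are independent and bounded in absolute value by $\Theta(1/\epsilon)$ for $\epsilon\leq 1$, so Hoeffding's inequality gives $\Pr[|\hat f(x)-f(x)|>t]\leq 2\exp(-\Omega(\epsilon^2 t^2/n))$. A union bound over the $|X|$ possible values of $x$, together with the choice $t=\Theta(\sqrt{n\log(|X|/\beta)}/\epsilon)$, produces the uniform oracle guarantee and completes the argument. The main obstacle is exactly what makes the naive solution fail: obtaining per-element accuracy that depends on $|X|$ only through $\sqrt{\log|X|}$ rather than through $|X|$ itself. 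Direct randomized response over $X$ injects $\Theta(|X|)$ noise per coordinate, which is useless once $|X|\gg n$. The Hadamard trick resolves this by having each agent send a single noisy bit along a uniformly random rank-one direction, so the per-coordinate variance of $\hat f$ stays at $O(n/\epsilon^2)$ independently of $|X|$, and the $\sqrt{\log|X|}$ factor enters only through the union bound over $X$.
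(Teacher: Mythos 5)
Your argument is correct, but note that the paper does not prove this statement at all: it is imported verbatim as a known result of Bun, Nelson, and Stemmer~\cite{BNS17}, so there is no ``paper proof'' to match. Your route --- a Hadamard-response frequency oracle (unbiased estimator $\hat f(x)=\frac{e^\epsilon+1}{e^\epsilon-1}\sum_i H_{x,j_i}\tilde b_i$, Hoeffding with summands of magnitude $O(1/\epsilon)$, a union bound over $X$ giving simultaneous error $O\bigl(\frac{1}{\epsilon}\sqrt{n\log(|X|/\beta)}\bigr)$, and a threshold placed between $\lambda_2$ and $\lambda_1$) --- is a sound, self-contained derivation of exactly the guarantee quoted here: the privacy analysis is a single-message $\epsilon$-randomizer (the column index is input-independent and the bit is standard randomized response), which suffices for \cref{def:DP_Prot} since the protocol is non-interactive, and the threshold bookkeeping with $\lambda_1-\lambda_2>2c$ is right. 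The difference from \cite{BNS17} is that their protocols are engineered to be computationally efficient (identifying the heavy hitters in time polynomial in $n$ and $\log|X|$, rather than enumerating all of $X$ as your referee does) and to obtain sharper error thresholds; the statement as quoted demands neither, so your simpler construction fully establishes it, though in the paper's applications (e.g., \cref{sec:1-out-of-2d-parity}, where $|X|$ is exponential in $d$) the efficient identification of \cite{BNS17} is what one would actually want to run. One minor point to make explicit if you write this up: the union bound produces $\log(2|X|/\beta)$ rather than $\log(|X|/\beta)$, which is absorbed into the constants $\lambda_1,\lambda_2$ since $\log(2|X|/\beta)\leq 2\log(|X|/\beta)$ for $|X|\geq 2$ and $\beta\leq 1$.
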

\section{Learning parity XOR threshold}\label{sec:parityXORthresh}
\newcommand{\ParityThresh}{\mathsf{ParityThresh}}
    
\label{sec:learnParityXorThreshold}

In this section we present a learning task that cannot be solved privately in the curator model or in the local model, but can be solved in the hybrid model (without interaction).
The task we consider in this section -- parity XOR threshold -- is similar to the simpler task of the direct product of parity and threshold discussed in~\cref{sec:parity_concatenated_threshold}.  
In this section we design a non-interactive protocol in the
hybrid model for the parity XOR threshold task,
which is more involved than the trivial protocol for the  parity and threshold task. This demonstrates that non-interactive protocols in the hybrid model may have more power than one might initially suspects.

Fix $b,c>0$, and let $k\in\{0,1\}^c$ and $t\in\{0,1\}^b$ be parameters. Define the function $f^{k,t}_{b,c}:\{0,1\}^c\times\{0,1\}^b\rightarrow\{0,1\}$ as follows: $f^{k,t}_{b,c}(x,y)=
\Par_k(x)\oplus\Thr_t(y) = \langle k,x\rangle\oplus\1_{\{y\geq t\}}$ $\bigl($recall that we treat strings in $\set{0,1}^b$ as integers in $\set{0,1,\ldots,2^b-1}$$\bigr)$. Define the concept class
$\ParityThresh$ as follows:
$$\ParityThresh_{b,c}=\left\{f^{k,t}_{b,c} \; : \; k\in\{0,1\}^c \text{ and } t\in\{0,1\}^b \right\}.$$

We first show that every differentially private algorithm (even in the curator model) for learning $\ParityThresh$ must have sample complexity $\Omega(b)$. 

\begin{lemma}
\label{claim:PatityThreshCurator}
Every $\epsilon$-differentially private algorithm for $(\frac{1}{4},\frac{1}{4})$-PAC learning $\ParityThresh_{b,c}$ must have sample complexity $\Omega(b)$.
\end{lemma}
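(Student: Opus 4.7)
The plan is to prove this by reduction: any private learner for $\ParityThresh_{b,c}$ gives a private learner for $\Threshold_b$ with the same sample complexity, after which \cref{fact:learning_thresholds} delivers the $\Omega(b)$ lower bound.

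First I would fix the ``parity-killing'' choice $k = 0^c$, for which $\Par_k(x) = 0$ identically and hence $f^{0^c,t}_{b,c}(x,y) = \Thr_t(y)$ for every $x \in \{0,1\}^c$. Given an $\epsilon$-differentially private $(\frac14,\frac14)$-PAC learner $\AAA$ for $\ParityThresh_{b,c}$ with sample complexity $m$, I construct a learner $\AAA'$ for $\Threshold_b$ as follows: on input $\bigl((y_1, \Thr_t(y_1)), \ldots, (y_m, \Thr_t(y_m))\bigr)$, the algorithm $\AAA'$ forms the sample $\bigl(((0^c, y_i), \Thr_t(y_i))\bigr)_{i=1}^m$, runs $\AAA$ to obtain a hypothesis $h : \{0,1\}^c \times \{0,1\}^b \to \{0,1\}$, and returns $h'(y) := h(0^c, y)$.

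Next I would verify the two properties. \textbf{Privacy.} The transformation $y \mapsto (0^c, y)$ is an entry-wise deterministic preprocessing, and the final map $h \mapsto h'$ is post-processing of $\AAA$'s output; both preserve $\epsilon$-differential privacy, so $\AAA'$ is $\epsilon$-differentially private on its input of size $m$. \textbf{Accuracy.} Fix any target $\Thr_t \in \Threshold_b$ and any distribution $\DDD'$ over $\{0,1\}^b$. Define $\DDD$ on $\{0,1\}^c \times \{0,1\}^b$ to be the distribution putting all mass on $x = 0^c$ with $y \sim \DDD'$. Then the sample fed to $\AAA$ consists of $m$ i.i.d.\ draws from $\DDD$ labeled by $f^{0^c,t}_{b,c}$, so with probability at least $\tfrac34$ the output $h$ satisfies $\Pr_{(x,y)\sim \DDD}[h(x,y) \neq f^{0^c,t}_{b,c}(x,y)] \leq \tfrac14$, which by construction of $\DDD$ is exactly $\Pr_{y\sim \DDD'}[h'(y) \neq \Thr_t(y)] \leq \tfrac14$. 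Hence $\AAA'$ is an $\epsilon$-differentially private $(\frac14,\frac14)$-PAC learner for $\Threshold_b$ with sample complexity $m$.

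Finally, invoking \cref{fact:learning_thresholds} with $\alpha = \frac14$ forces $m = \Omega(b/\epsilon) = \Omega(b)$ (treating $\epsilon$ as a constant, as is standard for this type of lower bound), which gives the claimed bound. There is no real obstacle here beyond noting that the reduction is valid for the degenerate (point-mass) distribution on the parity coordinate: \cref{fact:learning_thresholds} makes no assumption on the distribution, and the PAC guarantee of $\AAA$ is required to hold for \emph{every} distribution, including this degenerate one.
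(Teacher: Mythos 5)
Your proposal is correct and follows essentially the same route as the paper: pad each example with the all-zeros parity coordinate, run the $\ParityThresh_{b,c}$ learner, restrict the output hypothesis to $h(0^c,\cdot)$, and invoke \cref{fact:learning_thresholds}; the privacy and accuracy arguments match the paper's. No issues.
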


\begin{proof}
Let $\AAA$ be an $\epsilon$-differentially private algorithm for $(\frac{1}{4},\frac{1}{4})$-PAC learning $\ParityThresh_{b,c}$ with sample complexity $m$. We now use $\AAA$ to construct an $\epsilon$-differentially private algorithm $\BBB$ for $(\frac{1}{4},\frac{1}{4})$-PAC learning $\Threshold_b$ with the same sample complexity. By \cref{fact:learning_thresholds} this will show that $m=\Omega(b)$.

Algorithm $\BBB$ is simple: it takes an input database $S=\{(y_i,\sigma_i)\}_{i=1}^m\in\left(\{0,1\}^b\times\{0,1\}\right)^m$ and runs $\AAA$ on the database $\hat{S}=\left\{(\vec{0},y_i,\sigma_i)\right\}_{i=1}^m\in\left(\{0,1\}^c\times\{0,1\}^b\times\{0,1\}\right)^m$ to obtain a hypothesis $\hat{h}$. Then, algorithm $\BBB$ returns the hypothesis $h$ defined as $h(y)=\hat{h}(\vec{0},y)$. Note that changing one element of $S$ changes exactly one element of $\hat{S}$, and hence algorithm $\BBB$ is $\epsilon$-differentially private. 

We next show that algorithm $\BBB$ is a $(\frac{1}{4},\frac{1}{4})$-PAC learner for $\Threshold_b$. To that end, fix a 
target distribution $\DDD$ on $\{0,1\}^b$ and fix a target concept $\Thr_t$ (where $t\in\{0,1\}^b$). Suppose that $S$ contains i.i.d.\ samples from $\DDD$ that are labeled by $\Thr_t$, and consider the following distribution $\hat{\DDD}$: To sample an element from $\hat{\DDD}$ we sample $y\sim\DDD$ and return $\vec{0}\circ y\in \{0,1\}^{b+c}$. Now observe that $\hat{S}$ contains $m$ i.i.d.\ samples from $\hat{\DDD}$ which are labeled by $f_{b,c}^{\vec{0},t}\in\ParityThresh_{b,c}$. Therefore, by the utility properties of $\AAA$, with probability at least $3/4$ the hypothesis $\hat{h}$ satisfies $\error_{\hat{\DDD}}\left(\hat{h},f_{b,c}^{\vec{0},t}\right)\leq\frac{1}{4}$. In that case,
\begin{align*}
    \frac{1}{4} \leq \error_{\hat{\DDD}}\left(\hat{h},f_{b,c}^{\vec{0},t}\right)
    =\Pr_{y\sim\DDD}\left[ \hat{h}\left(\vec{0},y\right)\neq f_{b,c}^{\vec{0},t}\left(\vec{0},y\right) \right]
    =\Pr_{y\sim\DDD}\left[ h\left(y\right)\neq \Thr_t(y) \right]
    =\error_{\DDD}(h,\Thr_t).
\end{align*}
This shows that $\BBB$ is a $(\frac{1}{4},\frac{1}{4})$-PAC learner for $\Threshold_b$, as required.
\end{proof}

We next show that no protocol in the local model can learn $\ParityThresh$, unless the number of exchanged messages is very large.

\begin{lemma}
\label{claim:PatityThreshLocal}
In every $\epsilon$-differentially private protocol in the local model for $(\frac{1}{4},\frac{1}{4})$-PAC learning\linebreak $\ParityThresh_{b,c}$ the number of messages is $\Omega(2^{c/3})$.
\end{lemma}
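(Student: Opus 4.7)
The plan is to reduce the problem of privately learning $\Parity_c$ in the local model to that of privately learning $\ParityThresh_{b,c}$ in the local model, mirroring the reduction used in the proof of \cref{claim:PatityThreshCurator} but carried out at the level of local-model protocols rather than curator-model algorithms. Once the reduction is in place, \cref{fact:learning_parity_LDP} immediately yields the $\Omega(2^{c/3})$ lower bound on the number of messages.

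Concretely, suppose $\Pi$ is an $\epsilon$-differentially private local-model protocol that $(\frac{1}{4},\frac{1}{4})$-PAC learns $\ParityThresh_{b,c}$ using $M$ messages. I will construct from $\Pi$ an $\epsilon$-differentially private local-model protocol $\Pi'$ that $(\frac{1}{4},\frac{1}{4})$-PAC learns $\Parity_c$ using the same number $M$ of messages. Each agent in $\Pi'$ holds a single labeled example $(x_i,\sigma_i)\in\{0,1\}^c\times\{0,1\}$; it locally transforms its input into $(x_i,\vec{0},\sigma_i)\in\{0,1\}^c\times\{0,1\}^b\times\{0,1\}$ and then plays the role of an agent in $\Pi$ with this transformed input. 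The referee in $\Pi'$ runs the referee of $\Pi$ to obtain a hypothesis $\hat h:\{0,1\}^c\times\{0,1\}^b\rightarrow\{0,1\}$ and returns $h(x)\coloneqq\hat h(x,\vec{0})$.

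Privacy of $\Pi'$ follows because the transformation $(x_i,\sigma_i)\mapsto(x_i,\vec{0},\sigma_i)$ is applied locally by each agent and maps neighboring inputs to neighboring inputs; in particular, the view of any adversary controlling all but one of the agents in $\Pi'$ is exactly the view of the corresponding adversary in $\Pi$ against a party whose input is $(x_i,\vec{0},\sigma_i)$, so the $(\epsilon,0)$-differential privacy guarantee of $\Pi$ transfers verbatim to $\Pi'$. For utility, fix any $k\in\{0,1\}^c$, pick any non-zero $t\in\{0,1\}^b$ so that $\Thr_t(\vec 0)=0$, and consider inputs drawn i.i.d.\ from a distribution $\DDD$ over $\{0,1\}^c$ labeled by $\Par_k$. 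The corresponding transformed inputs are i.i.d.\ samples from the distribution $\hat\DDD$ that draws $x\sim\DDD$ and outputs $(x,\vec 0)$, labeled by $f^{k,t}_{b,c}$, since $f^{k,t}_{b,c}(x,\vec 0)=\Par_k(x)\oplus\Thr_t(\vec 0)=\Par_k(x)$. The utility of $\Pi$ then gives, with probability at least $3/4$, that
\[
\tfrac14\geq\error_{\hat\DDD}\!\left(\hat h,f^{k,t}_{b,c}\right)=\Pr_{x\sim\DDD}\!\left[\hat h(x,\vec 0)\neq\Par_k(x)\right]=\error_{\DDD}(h,\Par_k),
\]
so $\Pi'$ is indeed a $(\frac14,\frac14)$-PAC learner for $\Parity_c$. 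Since the Bassily-type lower bound of \cref{fact:learning_parity_LDP} holds even when $\DDD$ is restricted to the uniform distribution over $\{0,1\}^c$ (which is the only case we need in the reduction), we conclude that the number of messages in $\Pi'$, and hence also in $\Pi$, is $\Omega(2^{c/3})$.

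I do not anticipate a serious obstacle: the construction is a direct local-model analogue of the curator reduction in \cref{claim:PatityThreshCurator}, and the only subtleties are (i) choosing $t\neq\vec 0$ so that embedding $y=\vec 0$ zeros out the threshold bit, and (ii) verifying that local input-translation preserves both privacy and the message count — both of which are immediate once stated carefully.
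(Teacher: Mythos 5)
Your proposal is correct and is exactly the argument the paper intends: the paper states that the proof is analogous to that of \cref{claim:PatityThreshCurator}, using \cref{fact:learning_parity_LDP} in place of \cref{fact:learning_thresholds}, and your reduction (padding the threshold coordinate with $\vec 0$, choosing $t\neq\vec 0$ so the threshold bit vanishes, and noting that the local transformation preserves privacy, message count, and neighboring inputs) is the natural instantiation of that analogy. No gaps.
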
 

The proof of \cref{claim:PatityThreshLocal} is analogous to the proof of \cref{claim:PatityThreshCurator} (using \cref{fact:learning_parity_LDP} instead of \cref{fact:learning_thresholds}). 

So, privately learning $\ParityThresh_{b,c}$ in the curator model requires $\Omega(b)$ labeled examples, and privately learning it in the local model requires $\Omega(2^{c/3})$ messages. We now show that $\ParityThresh_{b,c}$ can be learned privately by a non-interactive protocol in the hybrid model with roughly $O(c)$ examples for the curator and with roughly $O(b^3)$ local agents. We will focus on the case where $c\ll b$. Recall that a function $f^{k,t}_{b,c}(x,y)\in\ParityThresh_{b,c}$ is defined as 
$f^{k,t}_{b,c}(x,y)=\Par_k(x)\oplus\Thr_t(y)$. 
The difficulty in learning $\ParityThresh$ in the hybrid model is that we could only learn the threshold part of the target function using the local agents (since if $c\ll b$ then the curator does not have enough data to learn it), but the threshold label is ``hidden'' from the local agents (because it is ``masked'' by the parity bit that the local agents cannot learn). This false intuition might lead to the design of an {\em interactive} protocol, in which the referee first obtains some information from the curator and then passes this information to the local agents, which would allow them to learn the threshold part of the target function. We now show that such an interaction is not needed, and design a {\em non-interactive} protocol in which the local agents and the curator communicate with the referee only once, simultaneously.

\begin{lemma}\label{lem:PatityThreshHybrid}
There exists a non-interactive $\epsilon$-differentially private protocol in the $(m,n)$-hybrid model for $(\alpha,\beta)$-PAC learning $\ParityThresh_{b,c}$ where $m=O\left(\frac{c}{\alpha^4 \eps}\log(\frac{1}{{\alpha}\beta})\right)$ and $n=O\left(\frac{b^3}{\alpha^4\eps^2}\cdot\log\left(\frac{b}{\alpha\beta\epsilon}\right)\right)$.
\end{lemma}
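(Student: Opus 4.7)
The plan is a one-round (non-interactive) protocol. Simultaneously, the curator will broadcast a short list of parity-key candidates guaranteed to contain $k$ with high probability, and the local agents will broadcast a differentially private CDF sketch of their joint inputs; the referee then combines the two to recover $(k,t)$.

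The key observation for the curator is that on any subset of samples whose $y$-coordinates lie entirely on one side of $t$, the labels collapse to pure (possibly globally flipped) parities $\Par_k$. The curator sorts its samples by $y_i$ and forms the bottom and top $(\tfrac12-\Theta(\alpha))$-fractions $B$ and $T$. Standard concentration shows that with high probability, either every $y_i\in B$ is strictly below $t$ (so labels on $B$ equal $\Par_k(x_i)$) or every $y_i\in T$ is at least $t$ (so labels on $T$ equal $\Par_k(x_i)\oplus 1$). The curator then invokes the $\epsilon/2$-DP parity learner of \cref{thm:parityLearner} on $B$ with labels as-is and on $T$ with labels flipped, producing two candidates $k_1,k_2$; at least one equals $k$ with probability $\ge 1-\beta$. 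Edge cases where the sample median is within $\Theta(\alpha)$-quantile of $t$ are handled by a finer partition into $O(1/\alpha)$ quantile bins, yielding an $O(1/\alpha)$-size shortlist $K$; the stated $1/\alpha^4$ factor in the curator sample complexity absorbs the fractional split, the per-bin parity learner complexity from \cref{thm:parityLearner}, and slack for empirical-quantile concentration.

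In parallel, the local agents invoke the SanThresh protocol of \cref{thm:SanThresh} with accuracy $\tilde\alpha=\Theta(\alpha^2)$, on the encoding $(\sigma_i\|x_i\|y_i)\in\{0,1\}^{1+c+b}$ in lex order. The output is a function $q$ approximating the lex-order empirical CDF of the $(\sigma_i,x_i,y_i)$ tuples to within $\tilde\alpha$. Under $c\le b$, the number of agents required by \cref{thm:SanThresh} is $O((1{+}c{+}b)^3/(\tilde\alpha^2\epsilon^2)\log(\cdot))=O(b^3/(\alpha^4\epsilon^2)\log(b/(\alpha\beta\epsilon)))$, matching the statement. Given $K$ and $q$, the referee enumerates pairs $(k',t')$ with $k'\in K$ and $t'$ drawn from an $\alpha$-net of $\{0,1\}^b$ (extractable from the $y$-marginal of $q$), estimates the empirical error of $h_{k',t'}(x,y)=\Par_{k'}(x)\oplus\Thr_{t'}(y)$ from $q$, and outputs the minimizer. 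Since $(k,t)$ itself achieves empirical error $0$ on the inputs, the minimizer has empirical error $O(\alpha)$, and \cref{thm:dpGeneralization} upgrades this into $O(\alpha)$ generalization error with high probability.

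The main technical obstacle is decomposing the empirical error of a candidate $h_{k',t'}$ into a bounded number of lex-rectangle counts of $q$. Naively, the event ``$\sigma\oplus\Par_{k'}(x)\ne\Thr_{t'}(y)$'' partitions the $x$-axis along an affine hyperplane of $\{0,1\}^c$, which in raw lex order may cut into $\Omega(2^c)$ lex-rectangles; summing their $q$-estimates would blow up the error to $\tilde\alpha\cdot 2^c$. My plan to avoid this is to use shared public randomness (\cref{rem:shared}) to fix in advance an invertible binary linear map $M:\{0,1\}^c\to\{0,1\}^c$, and have each agent encode $M x_i$ in place of $x_i$; once the referee has $k'$, it rewrites the hyperplane in the $M$-basis so that it is cut by a coordinate hyperplane, yielding an $O(c)$-size union of lex-rectangles, for total error $O(c\tilde\alpha)=O(\alpha)$ given $\tilde\alpha=\Theta(\alpha^2)$ and $c\le b$. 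Overall privacy follows by parallel composition: the curator and local agents operate on disjoint inputs, and each routine is independently $\epsilon$-DP.
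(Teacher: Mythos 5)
There is a genuine gap at the step you yourself flag as the main obstacle: the referee cannot estimate the empirical error of a candidate $h_{k',t'}$ from the output of \cref{thm:SanThresh}, and the shared-randomness fix does not repair this. \cref{thm:SanThresh} only answers threshold (lexicographic-interval) queries, whereas the error event $\sigma\oplus\Par_{k'}(x)\neq\Thr_{t'}(y)$ requires, on each side of $t'$, counting points with $\sigma=\langle k',x\rangle$ --- a parity-correlation query. After your re-encoding $z=Mx$, this is the level set of $\langle M^{-T}k',z\rangle$, which is still a general parity of the encoded coordinate; for a generic key its level set is a union of $2^{\Omega(c)}$ lexicographic intervals (the parity alternates on blocks whose size is governed by the least significant support bit of the key). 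Crucially, $M$ must be fixed \emph{before} the candidates $k'$ are known (the agents encode $Mx_i$ in round one of a non-interactive protocol), so no single invertible $M$ --- random or not --- can turn all $2^c$ possible parities into coordinate hyperplanes; a random $M$ merely maps a fixed $k'$ to a uniformly random nonzero key, which is typically bad. So the $O(c\tilde\alpha)$ error bound for the decomposition does not hold, and the referee's empirical-risk-minimization step collapses.

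The paper's proof sidesteps error estimation entirely, and the trick is worth internalizing: the curator sorts by $y$, splits into $1/\alpha$ quantile blocks, and on each block runs the parity learner of \cref{thm:parityLearner} on the \emph{augmented} examples $(x\circ 1,\sigma)$. Since $\Thr_t(\cdot)$ is constant (equal to some $\nu$) on every block except at most one, the labels on a good block equal $\langle k^*\circ\nu,\,x\circ 1\rangle$, i.e.\ they are exactly a parity of the augmented vector --- the appended constant coordinate absorbs the threshold bit, so the learned key $k_\ell$ simultaneously encodes the parity and the correct flip for that block. The local agents release only the CDF of the $y$-marginal, and the referee's hypothesis routes each input $(x,y)$ to block $\lceil q(y)/\alpha\rceil$ and applies that block's key; no candidate enumeration, no disambiguation, and no parity queries against the local agents' sketch are needed. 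Your curator-side idea (labels collapse to possibly-flipped parities on $y$-monotone chunks) is the right starting point, but you need the augmentation-plus-routing construction rather than a select-the-best-candidate referee.
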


\begin{proof}
We begin by describing a non-interactive protocol $\Pi$.
The (joint) input to the protocol is a database $\mathbf{D}$ where every point in $\mathbf{D}$ is of the form $(x_i,y_i,\sigma_i)\in\{0,1\}^c\times\{0,1\}^b\times\{0,1\}$. 
At a high level, the protocol works by using the local agents to obtain an approximation to the CDF of the (marginal) distribution on the $y_i$'s (this approximation is given to the referee). In addition, the trusted curator solves $1/\alpha$ parity leaning problems. In more details, the trusted curator sorts its database according to the $y_i$'s, divides its database into $1/\alpha$ chunks, and then applies a private learner for parity functions on each of the chunks. The trusted curator sends the referee the resulting $1/\alpha$ parity functions. The referee then defines the final hypothesis $h$ that, given a point $(x,y)$, first uses the approximation to the CDF (obtained fro the local agents) to match this input point to one of the chunks, and then uses the parity function obtained for that chunk from the trusted curator to predict the label of the point. 

The key observation here is that the threshold part of the target function is {\em constant} on all but at most one of the chunks defined by the trusted curator. As we show, applying a learner for parity on such a ``consistent chunk'' results in a good predictor for the labels of elements of that chunk. Hence, provided that the approximation for the CDF of the $y_i$'s is accurate enough, this results in an accurate learner for $\ParityThresh$. We now formally present the protocol $\Pi$.

\begin{itemize}
    \item  The local agents on a (distributed) input $D=\left(x_i,y_i,\sigma_i\right)_{i=1}^n\in \left(\{0,1\}^c\times\{0,1\}^b\times\{0,1\}\right)^n$:\\
    Run the protocol from \cref{thm:SanThresh} on the (distributed) database $\hat{D}=(y_1,y_2,\dots,y_n)$ with privacy parameter $\eps$ and utility parameters $\alpha^2,\beta$. At the end of the execution, the referee obtains a function $q:\{0,1\}^b\rightarrow[0,1]$ that approximates all threshold queries w.r.t.\ $\hat{D}$.
    
    \item The curator on input $S=\left(x_i,y_i,\sigma_i\right)_{i=1}^m\in \left(\{0,1\}^c\times\{0,1\}^b\times\{0,1\}\right)^m$:
    \begin{itemize}
        \item Sort $S$ according to the $y_i$'s in non-decreasing order.
        \item Divide $S$ into blocks of size $\alpha m$: $S_1,S_2,\dots,S_{1/\alpha}$. For $\ell\in[1/\alpha]$ we denote $S_{\ell}=(x_{\ell,i},y_{\ell,i},\sigma_{\ell,i})_{i=1}^{\alpha m}$.
        \item For every $\ell\in[1/\alpha]$, apply an $\alpha\eps$-differentially private $(\alpha^2,\alpha\beta)$-PAC learner for $\Parity$ on the database $\hat{S}_\ell=(x_{\ell,i}\circ1,\sigma_{\ell,i})_{i=1}^{\alpha m}\in\left(\{0,1\}^{c+1}\times\{0,1\}\right)^{\alpha m}$ to obtain a vector $k_{\ell}\in\{0,1\}^{c+1}$ (using \cref{thm:parityLearner}). 
        \item Send $k_1,\dots,k_{1/\alpha}$ to the referee.
    \end{itemize}
    \item The referee:
    \begin{itemize}
        \item Obtain the function $q$ and the vectors $k_1,\dots,k_{1/\alpha}$.
        \item Define a hypothesis $h:\{0,1\}^c\times\{0,1\}^b\rightarrow\{0,1\}$ as  $h(x,y)=\langle x\circ1,k_{I(y)}\rangle$, where $I(y)=\left\lceil \frac{q(y)}{\alpha}\right\rceil$.
        \item Output $h$.
    \end{itemize}
\end{itemize}

The privacy properties of the protocol $\Pi$ are straightforward, as both the local agents and the curator apply $\eps$-differentially private computations: The local agents apply the algorithm from \cref{thm:SanThresh}, and the curator applies an $\alpha\eps$-differentially private computation on each of the blocks $S_1,\dots,S_{1/\alpha}$ (note that changing one element of $S$ can change at most one element of each of these blocks). 

We now proceed with the utility analysis. 
Fix a target function $f_{b,c}^{k^*,t^*}\in\ParityThresh_{b,c}$ and fix a target distribution $\DDD$ on $\{0,1\}^c\times\{0,1\}^b$. We use $\DDD_c$ and $\DDD_b$ to denote the marginal distributions on $\{0,1\}^c$ and $\{0,1\}^b$, respectively. We will make the simplifying assumption that $\DDD_b$ does not give too much weight on any single point in $\{0,1\}^b$, specifically, $\Pr_{w\sim\DDD_b}[w=y]\leq\beta/m^2$ for every $y\in\{0,1\}^b$. This assumption can be enforced by padding every example with $O(\log(m/\beta))$ uniformly random bits.

Let $S$ and $D$ (the inputs to the curator and the local agents) be sampled i.i.d.\ from $\DDD$ and labeled by $f_{b,c}^{k^*,t^*}$. We next show that w.h.p.\ the resulting hypothesis $h$ has low empirical error on $S$. By standard generalization arguments, such an $h$ also has low generalization error. 

First observe that there is at most one index $\ell^*\in[1/\alpha]$ such that $\Thr_{t^*}(y_{\ell^*,1})\neq\Thr_{t^*}(y_{\ell^*,\alpha m})$. In all other blocks $S_\ell$ we have that $\Thr_{t^*}(\cdot)$ is constant on all the $y_{\ell,i}$'s of that block.  We will show that w.h.p.\ the hypothesis $h$ has small empirical error on every such block. Fix $\ell\neq\ell^*$, and let $\nu\in\{0,1\}$ be the value of $\Thr_{t^*}(\cdot)$ on the $y_{\ell,i}$'s of the $\ell$th block (that is, for every $i\in[\alpha m]$ we have $\Thr_{t^*}(y_{\ell,i})=\nu$). Recall that since the elements of $S$ are labeled by $f_{b,c}^{k^*,t^*}$, for every $i\in[\alpha m]$ we have that $$\sigma_{\ell,i}=f_{b,c}^{k^*,t^*}(x_{\ell,i},y_{\ell,i})=\langle k^*,x_{\ell,i} \rangle\oplus\Thr_{t^*}(y_{\ell,i})=\langle k^*,x_{\ell,i} \rangle\oplus\nu=\langle k^*\circ\nu,\;x_{\ell,i}\circ1 \rangle.$$
Hence, the elements of $\hat{S}_{\ell}$ are all labeled by the parity function defined by $k^*\circ\nu$. Therefore, as $k_{\ell}$ is the outcome of the learner form \cref{thm:parityLearner} on $\hat{S}_{\ell}$, for $m\geq O\left(\frac{c}{\alpha^2 \eps}\log(\frac{1}{\alpha\beta})\right)$, with probability at least $1-\alpha\beta$ we have that $\error_{\hat{S}_{\ell}}(\Par_{k_{\ell}})\leq\alpha^2$, 
that is, $\langle k_{\ell},\; x\circ1\rangle$ is a good predictor for the label of the elements in block $S_{\ell}$. 

Recall that the hypothesis $h$ matches inputs $(x,y)$ to the vectors $k_1,\dots,k_{1/\alpha}$ using the function $q$ obtained from the local agents, that is, on input $(x,y)$, the hypothesis uses $k_{\ceil{ q(y)/\alpha}}$. Therefore, to complete the proof we need to show that most of the elements from block $S_{\ell}$ are matched by the hypothesis $h$ to the vector $k_{\ell}$. To that end, let $\#_S(w)=|\{(x,y,\sigma)\in S : y\leq w \}|$, and consider the following event:
$$\text{\bf Event } {\boldsymbol E_1:}\qquad \forall w\in\{0,1\}^b \;\text{ it holds that }\; \left| q(w) - \frac{1}{m}\cdot\#_S(w) \right|\leq 4\alpha^2.$$

We first conclude the proof assuming that Event $E_1$ occurs. Fix $\ell\neq\ell^*$, and recall that the elements of $S$ (and in particular the elements of $S_{\ell}$) are sorted in a non-decreasing order according to their $y_i$'s. Now fix $8\alpha^2 m \leq i \leq \alpha m- 8\alpha^2 m$. 
By our simplifying assumption (that the distribution $\DDD_b$ does not put a lot of mass on any single point), we may assume that all the $y_i$'s in $S$ are distinct, which happens with probability at least $1-\beta$. In that case, we have that $\#_S(y_{\ell,i})=\max\{0, \ell-1\}\cdot\alpha m +i$, and hence, 
$$\max\{0, \ell-1\}\cdot\alpha +8\alpha^2 \leq
\frac{1}{m}\#_S(y_{\ell,i}) \leq \max\{0, \ell-1\}\cdot\alpha + \alpha - 8\alpha^2.$$
By Event $E_1$ we get that
$$\max\{0, \ell-1\}\cdot\alpha +4\alpha^2 \leq
q(y_{\ell,i}) \leq \max\{0, \ell-1\}\cdot\alpha + \alpha - 4\alpha^2,$$
and so, $\left\lceil \frac{q(y_{\ell,i})}{\alpha}\right\rceil = \ell$. That is, for all but at most $16\alpha^2 m$ elements of the block $S_{\ell}$ we get that $h(x_{\ell,i},y_{\ell,i})=\langle x_{\ell,i}\circ1,\;k_{\ell}\rangle=\Par_{k_{\ell}(x_{\ell,i},y_{\ell,i})}$. Recall that $\Par_{k_{\ell}}$ errs on at most $\alpha^2 m$ elements of $S_{\ell}$, and so the hypothesis $h$ errs on at most $17\alpha^2 m$ elements of the block $S_{\ell}$. That is, $h$ errs on at most $17\alpha^2 m$ elements of every block $S_{\ell}$ for $\ell\neq\ell^*$, and might err on all of $S_{\ell^*}$ which is of size $\alpha m$. So, $h$ errs on at most $\frac{1}{\alpha}\cdot17\alpha^2 m + \alpha m=18\alpha m$ elements of $S$. Standard generalization bounds now state that, except with probability at most $\beta$, we get that $\error_{\DDD}(h,f_{b,c}^{k^*,t^*})\leq O(\alpha)$ (in particular, this follows from the generalization properties of differential privacy; see \cref{sec:prelim-lt-pl} for more details). Overall, with probability at least $1-O(\beta)$ the resulting hypothesis has generalization error at most $O(\alpha)$.

It remains to show that Event $E_1$ occurs with high probability. First,
for $n\geq O\left(\frac{b^3}{\alpha^4\eps^2}\cdot\log\left(\frac{b}{\alpha\beta\epsilon}\right)\right)$, \cref{thm:SanThresh} ensures that with probability at least $1-\beta$ the function $q$ is such that $\forall w\in\{0,1\}^b$ it holds that $\left|q(w)-\frac{1}{n}\#_{\hat{D}}(w)\right|\leq\alpha^2$, where $\#_{\hat{D}}(w)=|\{y\in \hat{D} : y\leq w \}|$. Second, by standard generalization arguments, assuming that $n$ and $m$ are big enough, we would also have that $\frac{1}{n}\#_{\hat{D}}(w)$ and $\frac{1}{m}\#_S(w)$ are both within $\alpha^2$ from $\Pr_{y\sim\DDD_b}[y\leq w]$. Specifically, by the Dvoretzky-Kiefer-Wolfowitz inequality~\cite{dvoretzky1956asymptotic,massart1990tight}, assuming that $n$ and $m$ are at least $\Omega\left(\frac{1}{\alpha^4}\log(\frac{1}{\beta})\right)$, this happens with probability at least $1-\beta$. Assuming that this is the case, by the triangle inequality we have that Event $E_1$ holds. This shows that Event $E_1$ happens with probability at least $1-3\beta$, and completes the proof.
\end{proof}

We remark that it is possible to design a more efficient learner for $\ParityThresh$ (in terms of sample complexity) by constructing a protocol in which there are multiple rounds of communication between the referee and the local agents (but this communication is still independent from the message that the curator sends to the referee). This will be illustrated in \cref{sec:parity_concatenated_threshold}. 
We summarize our possibility and impossibility results w.r.t.\ learning $\ParityThresh$ in the next theorem (which follows from \cref{claim:PatityThreshCurator} and \cref{claim:PatityThreshLocal} and from \cref{lem:PatityThreshHybrid}).

\begin{theorem}
\label{thm:parityXORthreshold}
Let $c\in\N$ and $b=c^2$. Then there is a non-interactive $\frac{1}{4}$-differentially private $(\frac{1}{4},\frac{1}{4})$-PAC learner for $\ParityThresh_{b,c}$ in the $(m,n)$-hybrid model with $m=O(c)$ samples for the curator and $n=O(c^6 \log c)$ local agents. However, every such learner in the local model with $o(2^{(n/\log n)^{1/6}})$ local agents requires $2^{\Omega((n/\log n)^{1/6})}$ rounds, and every such learner in the curator model requires $\Omega(m^2)$ samples.
\end{theorem}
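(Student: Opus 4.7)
The theorem bundles three claims: a hybrid-model upper bound, a curator-model lower bound, and a local-model lower bound. The plan is to derive each of them by plugging the specified parameters $b=c^2$, $\alpha=\beta=1/4$, and $\eps=1/4$ into the three results already proved in this section: \cref{lem:PatityThreshHybrid}, \cref{claim:PatityThreshCurator}, and \cref{claim:PatityThreshLocal}. I do not expect any new conceptual ingredients — the only thing to watch is that the asymptotics translate cleanly into the form stated in the theorem.

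First, for the positive part, I would instantiate \cref{lem:PatityThreshHybrid} with the constants $\alpha,\beta,\eps=\Theta(1)$. The lemma yields a non-interactive $\eps$-differentially private $(\alpha,\beta)$-PAC learner for $\ParityThresh_{b,c}$ with $m=O\!\left(\frac{c}{\alpha^4\eps}\log\frac{1}{\alpha\beta}\right)=O(c)$ examples for the curator and $n=O\!\left(\frac{b^3}{\alpha^4\eps^2}\log\frac{b}{\alpha\beta\eps}\right)=O(c^6\log c)$ local agents once we set $b=c^2$. This gives the claimed sample sizes in the hybrid model.

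Next, for the curator lower bound, I would apply \cref{claim:PatityThreshCurator}: any $\eps$-differentially private $(\tfrac14,\tfrac14)$-PAC learner for $\ParityThresh_{b,c}$ in the curator model needs $\Omega(b)$ samples. Using $b=c^2$ together with $c=\Omega(m)$ (which follows from the upper bound $m=O(c)$ obtained above), this reads $\Omega(c^2)=\Omega(m^2)$ samples, matching the curator statement of the theorem.

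Finally, for the local-model lower bound, I would invoke \cref{claim:PatityThreshLocal}, which gives that any $\eps$-differentially private $(\tfrac14,\tfrac14)$-PAC learner for $\ParityThresh_{b,c}$ in the local model sends $\Omega(2^{c/3})$ messages. From $n=O(c^6\log c)$ we get $c=\Omega((n/\log n)^{1/6})$, so the total number of messages is $2^{\Omega((n/\log n)^{1/6})}$. Since the number of messages equals (number of local agents) $\times$ (number of rounds), whenever the number of agents is $o(2^{(n/\log n)^{1/6}})$, the number of rounds must be $2^{\Omega((n/\log n)^{1/6})}$, possibly with a smaller constant inside the $\Omega(\cdot)$. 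The only place to be careful is this last arithmetic step — making sure the constant in the exponent of the rounds lower bound is indeed separated from the constant in the hypothesis on the number of agents — but this is just the usual division of exponents and is not an obstacle. Combining the three derivations yields the theorem.
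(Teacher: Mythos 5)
Your proposal is correct and is exactly the paper's argument: the paper states that \cref{thm:parityXORthreshold} follows by instantiating \cref{lem:PatityThreshHybrid}, \cref{claim:PatityThreshCurator}, and \cref{claim:PatityThreshLocal} with $b=c^2$ and constant $\alpha,\beta,\eps$, which is precisely what you do. The caveat you flag about matching the exponent constants in the messages-to-rounds conversion is a looseness already present in the theorem statement itself, not a gap in your derivation.
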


\section{The 1-out-of-$2^d$-parity task}
\label{sec:1-out-of-2d-parity}

In this section we describe a  task that cannot be privately solved neither in the curator model nor  in the local model with sub-exponential number of rounds. In the hybrid model,  this task can only be solved with interaction, first with the local agents and then with the curator. In this task there are many instances of the parity problem and the referee needs to solve only one instance, which is determined by the inputs. The local agents can determine this instance (using a heavy hitters protocol) and the curator can now solve this instance. The curator cannot solve all instances since this will exceed its privacy budget, and by the definition of the task the curator will not have enough information to determine the instance; thus interaction with both the local agents and the curator is required. 

\begin{definition}[The 1-out-of-$2^d$-parity task]
The inputs in the 1-out-of-$2^d$-parity task  are generated as follows: \begin{enumerate}
    \item {\bf Input:} $2^d$ strings $(r_j)_{j \in \bit^d}$, where $r_j \in \bit^c$ for every $j\in \bit^d$, and  $m+1$  elements $s_1,\dots,s_{m+1}\in \bit^d$.
    \item Set $s = s_1\oplus\cdots \oplus s_{m+1}$.%
    \footnote{The strings $s_1,\dots,s_{m+1}$ are an $m+1$-out-of-$m+1$ secret sharing of $s$, that is, together they determine $s$, but every subset of them gives no information on $s$.}
    \item \label{item:1-out-of-parity-generate} 
    Each sample $x_1,\dots,x_m$ and $y_1,\dots,y_n$ is generated  independently as follows:
        \begin{itemize}
            \item with probability half choose $x \inr \bit^c$ with uniform distribution and output $(x,(\myangle{x,r_j})_{j \in \bit^d})$
            (that is, every point contains a string $x$ of length $c$ and $2^d$ bits which are the inner products of $x$ and each of the $r_j$'s).
            \item with probability half choose $t \inr [m+1]$ with uniform distribution  and output $(t,s_{t})$
            (that is, every point contains a number $t$  
            and the $t$-th   string $s_t$).
        \end{itemize}
    \end{enumerate}
The goal of the referee in the 1-out-of-$2^d$-parity task is for every $(r_j)_{j \in \bit^d}$ and $s_1,\dots,s_{m+1}$ to recover $r_s$ with probability at least $1-\beta$, where the probability is over 
the generation of the inputs in \stepref{item:1-out-of-parity-generate} and
the randomness of the parties in the protocol.
\end{definition}

We start by describing a protocol for this task.

\begin{lemma}
\label{lem:1out2^dparityProtocol}
Let $\beta > 1/m$ and assume that $m=\Omega\left(\frac{c\log(1/\beta)}{\epsilon}\right)$ and $n =\Omega \left(\frac{m^2}{\epsilon^2}\log(\frac{m 2^d}{\beta})\right)$. 
The 1-out-of-$2^d$-parity task can be solved in the $(m,n)$-hybrid model by an $\epsilon$-differentially private protocol with three rounds, where in the first round each local agent sends one message to the referee (without seeing any other messages), in the second round the referee  sends one message to the curator, and in the third round the curator sends one message to the referee.
\end{lemma}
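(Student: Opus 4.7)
The plan is a three-round protocol that uses the local agents to privately identify the $m+1$ secret shares $s_1,\dots,s_{m+1}$, has the referee XOR them and pass the result $s$ to the curator, and finally lets the curator use its type-1 samples (restricted to the $s$-th coordinate of the label vector) to privately learn the parity vector $r_s$. In round~1, each local agent $P_i$ first locally maps its input: a type-2 sample $(t,s_t)$ is kept unchanged, while a type-1 sample is replaced by a designated symbol $\bot$. The agents execute the (non-interactive) heavy-hitters protocol of \cref{thm:heavy-hitters} with privacy parameter $\epsilon$ and failure parameter $\beta/3$ on the modified inputs, over the domain $\hat{X}=\set{\bot}\cup([m+1]\times\bit^d)$, whose size is $O(m\cdot 2^d)$. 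Because each sample is type-2 with probability $\nicefrac12$ and then uniform over $[m+1]$, every pair $(t,s_t)$ is held by at least $n/(4(m+1))$ agents, with probability at least $1-\beta/3$ by a Chernoff bound plus a union bound over $t$. The hypothesis $n=\Omega(m^2\log(m 2^d/\beta)/\epsilon^2)$ forces $n/(4(m+1))$ to exceed the heavy-hitters threshold $\tfrac{\lambda_1}{\epsilon}\sqrt{n\log(|\hat X|/\beta)}$, so with probability at least $1-\beta/3$ the referee obtains every pair $(t,s_t)$ (and possibly $\bot$, which is ignored).

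\textbf{Rounds 2 and 3.}
From the heavy-hitters output the referee selects, for each $t\in[m+1]$, the unique pair with first coordinate $t$, thereby recovering $s_1,\dots,s_{m+1}$; it computes $s=s_1\oplus\cdots\oplus s_{m+1}$ and sends $s$ to the curator. In round~3, the curator discards its type-2 samples and from each type-1 sample $(x,(\myangle{x,r_j})_{j\in\bit^d})$ forms the labeled pair $(x,\myangle{x,r_s})$. A Chernoff bound shows that at least $m/3$ such pairs exist except with probability $\beta/3$, and they are i.i.d.\ uniform on $\bit^c$ and labeled by $\Par_{r_s}$. The curator then invokes the $\epsilon$-differentially private proper PAC learner for $\Parity_c$ of \cref{thm:parityLearner} with accuracy $\nicefrac14$ and confidence $\beta/3$; the assumption $m=\Omega(c\log(1/\beta)/\epsilon)$ makes its sample-complexity bound $O(c\log(1/\beta)/\epsilon)$ satisfied by the $\geq m/3$ available examples. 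Since distinct functions in $\Parity_c$ disagree on exactly half of $\bit^c$, any proper hypothesis with uniform error below $\nicefrac12$ must equal $\Par_{r_s}$, so the curator recovers $r_s$ exactly and forwards it to the referee, which outputs it.

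\textbf{Privacy and the main obstacle.}
Every individual's data is touched by exactly one of the two differentially private subroutines: a local agent's input appears solely inside the heavy-hitters protocol, while a curator sample is processed only by the parity learner. All further computations performed by the referee, and the curator's dispatching of its data to the parity learner conditioned on $s$, are post-processing of a previously released $\epsilon$-DP output. Hence the joint transcript is $\epsilon$-differentially private, and by a union bound the referee outputs $r_s$ except with probability at most $\beta$. The main delicate step is ensuring that \emph{every single} pair $(t,s_t)$—not merely on average—crosses the heavy-hitters frequency threshold; balancing the $\Omega(\sqrt{n}/\epsilon)$ threshold against the per-share frequency $\Theta(n/m)$ is precisely what forces the quadratic-in-$m$ lower bound on $n$. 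In contrast, exact recovery of $r_s$ by the parity learner follows automatically from the uniform distribution of the $x$'s, which is what keeps the curator's sample complexity merely linear in $c/\epsilon$.
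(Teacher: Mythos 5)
Your proof is correct and follows essentially the same route as the paper's: the local agents run the heavy-hitters protocol on the shares $(t,s_t)$ (with $\bot$ for type-1 inputs), the referee XORs and relays $s$ to the curator, and the curator runs the parity learner of \cref{thm:parityLearner} with exact recovery following from the fact that distinct parities disagree on exactly half the domain; your explicit Chernoff bound on the number of the curator's type-1 samples is a detail the paper leaves implicit. The only nit is that your four failure terms sum to $4\beta/3$ rather than $\beta$, which is fixed by allotting $\beta/4$ to each subroutine.
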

\begin{proof}
The protocol is as follows:
In the first round the local agents send messages according to the $\epsilon$-differentially private heavy hitters protocol of~\cref{thm:heavy-hitters} (from~\cite{BNS17}) with the inputs $(t,s_{t})$ and $\beta/3$, that is, a protocol that returns with probability at least $1-\beta/3$ all values that are inputs of ``many'' agents.  
If the input of a local agent is not $(t,s_t)$ for some $t$, then it executes the protocol with some default input $\bot$. The referee reconstructs the $m+1$ strings $s_1,\dots,s_{m+1}$ (with probability at least $1-2\beta/3$), computes $s=s_1\oplus \cdots \oplus s_{m+1}$, and sends it to the curator. The curator privately solves the parity task with inputs $(x,\myangle{x,r_s})$ using the algorithm  of~\cref{thm:parityLearner} (from ~\cite{KLNRS11}) with $\alpha=1/4$ and $\beta/3$. 
Since we use  $\epsilon$-differentially private algorithms, each operating on different inputs, the resulting protocol is $\epsilon$-differentially private.

We next argue that with probability
at least $1-\beta$, the referee reconstructs $r_s$. 
Note that for a fixed $t \in [m+1]$, the expected number of times that  $(t,s_t)$ is an input of agents $P_1,\dots,P_n$ is
$n/2(m+1)$. By a simple Chernoff bound, with probability $1-\beta/3$ for all $t$, the value $(t,s_t)$ is an input of at least $n/4(m+1)$ parties. 
By \cref{thm:heavy-hitters},
with probability at least $1-\beta/3$, each value that is an input of at least $O\left(1/\epsilon \sqrt{n \log(\frac{m 2^d}{\beta/3})}\right)$ agents will appear in the list computed by the referee.
By the assertion of the lemma, $O\left(1/\epsilon \sqrt{n \log(\frac{m 2^d}{\beta/3}) }\right) < \frac{n}{4(m+1)}$. Thus, with probability at least $1-2\beta/3$, the referee reconstructs all $s_t$'s and reconstructs the correct value $s$.  Furthermore, as $m =\Omega(\frac{c\log(3/\beta)}{\epsilon})$,
the algorithm of~\cref{thm:parityLearner} returns, with probability at least $1-\beta/3$, a string $r$ such that $\Pr[\myangle{x,r} \neq \myangle{x,r_s}] \leq 1/4$ under the uniform distribution on $x \in \bit^c$. Since for $r\neq r_s$ this probability is exactly $1/2$, we get that $r=r_s$ with probability at least $1-\beta$.
\end{proof}

We next prove that, unless the database is big,  the 1-out-of-$2^d$-parity task requires interaction. To prove this result, we first convert a  protocol for the 1-out-of-$2^d$-parity task to a private algorithm in the trusted curator model that recovers all strings $(r_j)_{j \in \bit^d}$.
We then prove, using a simple packing argument, that, unless the database is ``big'',  such algorithm cannot exist. 
For our proof, we define the all-$2^d$-parity task as the task in which all inputs are of the form $(x,(\myangle{x,r_j})_{j \in \bit^d})$
and the goal of the referee is to reconstruct all strings $(r_j)_{j \in \bit^d}$. 
\begin{claim}
\label{cl:1out2^dparityTransformation}
Let $m <n$.
If there is an  $\epsilon$-differentially private protocol for the
$1$-out-of-$2^d$-parity problem  in the $(m,n)$-hybrid model in which the curator and the referee can exchange many messages and then the referee
simultaneously sends one message to each local agent and gets one answer from each agent, then there is an $\epsilon$-differentially private algorithm in the trusted curator model for the all-$2^d$-parity problem for a database of size $O(nd)$.
\end{claim}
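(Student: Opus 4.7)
The plan is to convert the hybrid protocol $\Pi$ into a curator-model algorithm $\AAA$ for the all-$2^d$-parity task by running, inside $\AAA$, one independent simulation of $\Pi$ for every choice of $s \in \bit^d$. Since $\Pi$ outputs $r_s$ whenever the shares of the secret XOR to $s$, iterating over all $j \in \bit^d$ lets us collect the entire tuple $(r_j)_{j\in\bit^d}$ that all-$2^d$-parity requires.

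Concretely, given an input database $D$ whose entries are of the form $(x,L)$ with $L=(\langle x,r_j \rangle)_{j\in\bit^d}$, the algorithm $\AAA$ partitions $D$ into disjoint chunks $\{D_j\}_{j\in\bit^d}$. For each $j$, $\AAA$ proceeds as follows: (i) sample shares $s^{(j)}_1,\ldots,s^{(j)}_{m+1}$ uniformly at random conditioned on $\bigoplus_t s^{(j)}_t=j$, using only its own randomness; (ii) synthesize the $m+n$ inputs expected by $\Pi$ by, for each of the $m$ curator slots and $n$ local-agent slots, flipping a fair coin and either pulling the next unused entry of $D_j$ as a parity sample $(x,L)$ (which already has the required form) or picking a uniform $t\in[m+1]$ and using $(t,s^{(j)}_t)$ as a share sample; (iii) simulate $\Pi$ internally on this synthesized input and record the output as $\hat r_j$. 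Finally, $\AAA$ outputs $(\hat r_j)_{j\in\bit^d}$.

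The privacy argument is a direct application of parallel composition: every entry of $D$ lies in exactly one chunk and is therefore used in the input of only one simulated execution of $\Pi$. Changing a single database entry alters only the input of that one simulation, and by $\epsilon$-DP of $\Pi$ the transcript (hence output) of that simulation changes by at most an $\epsilon$-factor, while all other simulations are distributionally unaffected. For correctness, conditioned on the synthesis randomness the $j$-th simulated input is distributed exactly as a 1-out-of-$2^d$-parity task instance with $s=j$, so $\Pi$ outputs $r_j$ with its stated success probability; invoking $\Pi$ with failure probability $\beta/2^d$ and union-bounding over $j$ yields the overall correctness guarantee.

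The main obstacle is the database-size accounting needed to match the stated $O(nd)$ bound. A straightforward partition into chunks of size $m+n$ yields a total of $\Theta((m+n)\cdot 2^d) = \Theta(n\cdot 2^d)$ entries (using $m<n$). Tightening this to $O(nd)$ will require exploiting the restricted interaction pattern of $\Pi$ more carefully: the curator-referee exchange depends only on the $m$ curator samples, and within those only the parity half must be drawn from $D$ (the share half being synthesized by $\AAA$); one can then attempt to reuse the curator portion across the $2^d$ simulations by fixing its share samples and varying only the shares held by the local agents, so that only the local-agent parity slots of different simulations need disjoint database entries. Careful bookkeeping of those slots, combined with parallel composition over the resulting disjoint subsets, is where the size savings must come from, and is the main technical step I expect to require care.
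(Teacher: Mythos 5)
Your first two paragraphs give a correct construction and a correct privacy/correctness analysis for it, but that construction uses $\Theta((m+n)\cdot 2^d)=\Theta(n\cdot 2^d)$ database entries, not $O(nd)$, and you correctly flag this yourself. This is not a cosmetic issue: the claim is only useful because \cref{cl:2^dparityLB} forces the all-$2^d$-parity database to have size $\Omega(c2^d/\epsilon)$, and a reduction producing a database of size $O(n2^d)$ would only yield $n=\Omega(c/\epsilon)$, which gives no contradiction in \cref{lem:1out2^dparityLB}. So the part you defer is the entire content of the claim, and the fix you sketch does not work either: if each of the $2^d$ simulations needs its own disjoint set of local-agent parity slots, you are still at $\Omega(n2^d)$ entries, since a constant fraction of each simulated population holds parity samples.

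The missing idea is to run the protocol (essentially) \emph{once} on real data and vary $s$ only inside a post-processing step that consumes no data. Two observations make this possible. First, the curator holds only $m$ samples but there are $m+1$ shares, so some index $\ell$ never appears among the curator's (replaced) inputs; consequently the entire curator--referee transcript is consistent with \emph{every} value of $s$, obtained by re-choosing $s_\ell$. Second, the restricted interaction pattern (curator first, then one simultaneous round with the agents) lets the referee secretly designate a random subset of agents as share-holders and \emph{simulate} their messages from its own coins rather than querying them; the real agents all answer on parity inputs, whose messages do not depend on $s$ and are reused verbatim for every $s$. The referee then, for each $s\in\bit^d$, recomputes only the synthetic messages of the simulated agents assigned share index $\ell$ with $s_\ell := s\oplus\bigoplus_{k\neq\ell}s_k$, and runs the reconstruction function on the otherwise unchanged transcript. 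Privacy is then just $\epsilon$-DP of one execution plus post-processing, with no composition over $2^d$ runs. The only blow-up is the $O(d)$ factor from amplifying the success probability to $1-\beta/2^d$ (by majority over $O(d)$ disjoint repetitions) before union-bounding over $s$ --- an amplification step your write-up also elides --- which is exactly where the $O(nd)$ comes from.
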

\begin{proof}
Let $\Pi$ be an $\epsilon$-differentially private protocol with the above interaction pattern for the
$1$-out-of-$2^d$-parity task in the $(m,n)$-hybrid model in which the referee reconstructs $r_s$ with probability at least $1-\beta$. We construct,  in three steps,  an algorithm $\calA$ for the all-$2^d$-parity task in the trusted curator.

First, we construct from $\Pi$ a protocol $\Pi'$ in the $(O(md) ,O(dn))$-hybrid model that reconstructs $r_s$ with error probability at most $\beta/2^d$
(e.g., execute $\Pi$ with disjoint inputs $O(d)$ times and take the value $r_s$ that is returned in the majority of the executions). 

Next, we construct from $\Pi'$ a  protocol $\Pi''$ for the the
all-$2^d$-parity task in the $(O(md),O(nd))$-hybrid model (with error probability at most $\beta$). 
In $\Pi''$, the parties holding inputs of the all-$2^d$-parity problem simulate $\Pi'$ on inputs for the $1$-out-of-$2^d$-parity task as follows:
\begin{itemize}
    \item The curator on input $(x_i,(\myangle{x_i,r_j})_{j\in \bit^d})_{i=1}^m$:
    \begin{itemize}
        \item Chooses random $s_1,\dots,s_{m+1} \inr \bit^d$.
        \item For each $i \in [m]$, with probability $1/2$ replaces its $i$-th input by $(t_i,s_{t_i})$ for a uniformly distributed $t_i\inr [m+1]$.
        \item Exchanges messages with the referee as specified by $\Pi'$ on the new input. In addition it sends to the referee  $s_1,\dots,s_{m+1}$ and an index $\ell$ such that $(\ell,s_{\ell})$ does not appear in its new input.
    \end{itemize}
    \item The referee after getting the message from the curator:
    \begin{itemize}
        \item Chooses a set $A \subseteq [n]$ with uniform distribution.
        \item For every $i \notin A$, sends its message in $\Pi'$ to the $i$-th agent and gets an answer $\msg_i$ from the agent.
        \item For every $i \in A$, chooses a random $q_i \inr [m+1]$. Let $B=\set{i\in A: q_i =\ell}$.
        \item For every $i \in A \setminus B$, computes (without any interaction) its message in $\Pi'$  to agent $P_i$ and the answer $\msg_i$ of agent $P_i$ with input $(q_i,s_{q_i})$.
        \item For every $i \in B$ and $s \in \bit^d$, computes (without any interaction) its message in $\Pi'$ to agent $P_i$ and the answer $\msg_{i,s}$ of agent $P_i$ with input $(\ell,s\oplus \bigoplus_{k \neq \ell} s_{k})$.
        \item For every $s \in \bit^d$, reconstructs $r_s$ from the messages of the curator in $\Pi'$, $(\msg_i)_{i \notin B}$, and $(\msg_{i,s})_{i\in B}$.
    \end{itemize}
\end{itemize}

As the curator holds $m$ samples and there are $m+1$ values $s_1,\dots,s_{m+1}$, there exists an index $\ell$ such that $(\ell,s_\ell)$ does not appear in the new input of the curator. Thus,  the referee for every $s \in \bit^d$ can choose a value $s'_\ell$ such that it is consistent with the messages of the curator in $\Pi'$ and $s=s'_\ell \oplus \bigoplus_{k \neq \ell} s_k$. Furthermore, each of $x_1,\dots,x_m,y_1,\dots,y_n$ is replaced with probability half with a value $(t,s_t)$ for a uniformly distributed $t$, thus, these inputs are distributed as required for the 1-out-of-$2^d$-parity task.  This implies that for every $s\in \bit^d$ the referee reconstructs $r_s$ from the messages of the curator in $\Pi'$, $(\msg_i)_{i \notin B}$, and $(\msg_{i,s})_{i\in B}$ with probability at least $1-\beta/2^d$. By the union bound, the referee correctly reconstructs all $(r_j)_{j \in \bit^d}$ with probability at least $1-\beta$.

Finally, we construct the desired algorithm $\calA$ from $\Pi''$. The trusted curator simply simulates the referee, the curator, and the agent in $\Pi''$, that is, it take its database with $O((m+n)d)$ samples and partitions it to $(x_1,\dots,x_{O(md)})$ (the input of the curator) and $y_1,\dots,y_{O(nd)}$, computes without any interaction a random transcript of $\Pi''$ on these inputs, and reconstructs the output $(r_j)_{j\in \bit^d}$. Since the transcript preserves $\epsilon$-differential privacy and computing the output is post-processing, algorithm $\calA$ is $\epsilon$-differential private.
\end{proof}

\begin{claim}
\label{cl:2^dparityLB}
If there is exists an $\epsilon$-differentially private algorithm in the trusted curator model for
the all-$2^d$-parity task with strings of length $c$, then $n = \Omega\left(\frac{c2^d+\ln(1-\beta)}{\epsilon}\right)$.
\end{claim}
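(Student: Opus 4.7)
The plan is to apply a standard packing argument over the exponentially large family of configurations. Let $\RRR = (\bit^c)^{2^d}$ denote the set of all tuples $R = (r_j)_{j \in \bit^d}$, so $|\RRR| = 2^{c\cdot 2^d}$. For each $R \in \RRR$, let $\p_R$ be the input distribution that produces a sample of the form $(x, (\myangle{x, r_j})_{j \in \bit^d})$ with $x$ drawn uniformly from $\bit^c$. By the correctness assumption, $\Pr_{D \sim \p_R^n,\, \calA}[\calA(D) = R] \geq 1 - \beta$, so by averaging over the sampling of $D$ there exists a specific database $D_R$ in the support of $\p_R^n$ with $\Pr_{\calA}[\calA(D_R) = R] \geq 1 - \beta$.

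The next step is to apply group privacy. Any two such databases $D_R, D_{R'}$ have size $n$ and therefore differ in at most $n$ entries, so the group-privacy property of pure $\epsilon$-differential privacy yields $\Pr[\calA(D_{R'}) = R] \geq e^{-\epsilon n}\cdot \Pr[\calA(D_R) = R] \geq e^{-\epsilon n}(1 - \beta)$ for every pair $R, R' \in \RRR$. Now fix an arbitrary $R_0 \in \RRR$ and sum this inequality over all $R \in \RRR$; since the events $\set{\calA(D_{R_0}) = R}$ are mutually exclusive across distinct $R$, their total probability is at most $1$, giving
\[ 1 \;\geq\; \sum_{R \in \RRR} \Pr[\calA(D_{R_0}) = R] \;\geq\; |\RRR| \cdot e^{-\epsilon n}(1 - \beta) \;=\; 2^{c\cdot 2^d} \cdot e^{-\epsilon n}(1 - \beta). \]
Taking logarithms and rearranging yields $\epsilon n \geq c\cdot 2^d \cdot \ln 2 + \ln(1 - \beta)$, i.e., $n = \Omega((c\cdot 2^d + \ln(1 - \beta))/\epsilon)$, as claimed.

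I do not expect any real obstacle here: the only subtle step is producing a specific database $D_R$ on which $\calA$ succeeds with probability at least $1 - \beta$, which is immediate from averaging over $D \sim \p_R^n$. Everything else is the textbook pure-DP packing recipe, and the disjointness of the output events combined with group privacy over $n$ coordinates supplies exactly the needed factor of $e^{\epsilon n}$ against the packing size $2^{c\cdot 2^d}$.
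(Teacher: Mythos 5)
Your proof is correct and follows essentially the same packing argument as the paper: group privacy over $n$ entries against the $2^{c2^d}$ disjoint output events. The only cosmetic difference is that you extract a specific good database $D_R$ by averaging before applying group privacy, whereas the paper applies it directly to the sample-generating distributions; both are valid instantiations of the same argument.
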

\begin{proof}
The proof is by a simple packing argument. For every  strings $(r_j)_{j\in \bit^d}$, with probability at least $1-\beta$, the algorithm returns $(r_j)_{j\in \bit^d}$ when the samples are generated with $(r_j)_{j\in \bit^d}$. By the group privacy of $\epsilon$-differential privacy, with probability at least $e^{-n\epsilon}(1-\beta)$ the algorithm returns $(r_j)_{j\in \bit^d}$ when the samples are generated with $(0^c)_{j\in \bit^d}$. As there are $2^{c2^d}$ options for $(r_j)_{j\in \bit^d}$ and the above events are disjoint, $2^{c2^d}e^{-n\epsilon}(1-\beta) \leq 1$, i.e., $n = \Omega\left(\frac{c2^d+\ln(1-\beta)}{\epsilon}\right)$.
\end{proof}

\begin{lemma}
\label{lem:1out2^dparityLB}
Let $m <n$.
If there is an  $\epsilon$-differentially private protocol for the
$1$-out-of-$2^d$-parity task  in the $(m,n)$-hybrid model with $\beta=1/4$ in which the curator and the referee can exchange many messages and then the referee
simultaneously sends one message to each local agent and gets one answer from each agent, then $n =\Omega(c 2^d/d \epsilon)$.
\end{lemma}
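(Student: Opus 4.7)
The plan is to combine the two preceding claims as follows. Given any $\epsilon$-differentially private protocol $\Pi$ for the $1$-out-of-$2^d$-parity task in the $(m,n)$-hybrid model with $\beta=1/4$ and the prescribed interaction pattern, I would first invoke \cref{cl:1out2^dparityTransformation} on $\Pi$. Since $m<n$, that claim produces an $\epsilon$-differentially private algorithm $\calA$ in the trusted curator model that solves the all-$2^d$-parity task on a database of size $O((m+n)d)=O(nd)$, with success probability at least $3/4$ (the transformation in the claim amplifies the per-instance error to $\beta/2^d$ before union-bounding over all $2^d$ secrets, so the final error is at most $\beta=1/4$).

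Next I would apply \cref{cl:2^dparityLB} to $\calA$. Since $\calA$ is an $\epsilon$-differentially private algorithm for the all-$2^d$-parity task (with strings of length $c$) operating on a database of size $O(nd)$, the claim yields
\[
O(nd)\;=\;\Omega\!\left(\frac{c\,2^d+\ln(1-\beta)}{\epsilon}\right)\;=\;\Omega\!\left(\frac{c\,2^d}{\epsilon}\right),
\]
where the $\ln(1-\beta)$ term is absorbed because $\beta=1/4$ is a constant bounded away from $1$. Rearranging then gives the desired bound $n=\Omega\!\left(\frac{c\,2^d}{d\,\epsilon}\right)$.

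There is no real obstacle in this argument, as the work is done by the two previous claims; the only things to verify are that the interaction pattern in the hypothesis of the lemma matches what \cref{cl:1out2^dparityTransformation} requires, and that the constant success probability $3/4$ after the union bound is acceptable for \cref{cl:2^dparityLB}. Both are immediate from the statements of those claims.
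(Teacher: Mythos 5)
Your proposal is correct and follows exactly the paper's own argument: apply \cref{cl:1out2^dparityTransformation} to obtain an $\epsilon$-differentially private curator-model algorithm for the all-$2^d$-parity task on a database of size $O(nd)$, then invoke \cref{cl:2^dparityLB} with $\beta=1/4$ to get $nd=\Omega(c2^d/\epsilon)$ and hence $n=\Omega(c2^d/d\epsilon)$. Nothing further is needed.
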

\begin{proof}
By  \cref{cl:1out2^dparityTransformation}, if there exists an $\epsilon$-differentially protocol in the $(m,n)$-hybrid model for the $1$-out-of-$2^d$-parity task, then 
there exists an $\epsilon$-differentially private algorithm in trusted curator model for the all-$2^d$-parity task with database of size $O(dn)$.
Thus, by \cref{cl:2^dparityLB} with $\beta=1/4$, 
$dn = \Omega( \frac{c2^d}{\epsilon})$.
\end{proof}

\cref{lem:1out2^dparityLB} is valid also if the local  agents are allowed to hold a shared (public) random string as this string can be sent by the referee to each agent as part of its message (without adding extra rounds of communication).

We summarize the possibility and impossibility results for the $1$-out-of-$2^d$-parity task in the following theorem, where, for convenience, we choose specific parameters that highlight these results. 

\begin{theorem}
\label{thm:1-out-of-2^d parity}
Let $\epsilon=1/4$, $\beta=1/4$. For every integer $c$, there are $d=\Theta(\log c)$, $m=\Theta(c)$, and 
$n =\Theta(c^2\log c)$ such that
\begin{enumerate}
    \item \label{item:protocol}
    There exists an $\epsilon$-differentially private protocol for the $1$-out-of-$2^d$-parity task with strings of length $c$ in the $(m,n)$-hybrid model where first each local agent sends one message to the referee and then the referee exchanges one message with the curator.
    \item \label{item:non-ineractive} There does not exist an $\epsilon$-differentially private protocol for this task in the $(m,n)$-hybrid model in which the referee first exchanges messages with the curator  and then simultaneously exchanges one message with the local agents.
    \item \label{item:localParity} In any  $\epsilon$-differentially private protocol for this task in the local model with $n$ agents the number of rounds is $2^{\Omega(c)}=2^{\Omega(\sqrt{n/\log n})}$.
    \item \label{item:curator} There is no algorithm in the trusted curator model that solves this task with $m$ examples.
\end{enumerate}
\end{theorem}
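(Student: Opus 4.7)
The plan is to choose the hidden constants in $d=\Theta(\log c)$, $m=\Theta(c)$, $n=\Theta(c^2\log c)$ so that each of the four items is obtained by combining a previously established result with a small additional argument. Concretely, I set $d=\lceil 2\log c\rceil$ (so $2^d=\Theta(c^2)$), take $m=Cc$ and $n=C'c^2\log c$ for sufficiently large constants $C,C'$, and use $\eps=\beta=\tfrac14$ throughout. Item~1 is immediate from \cref{lem:1out2^dparityProtocol}: the curator-sample requirement $m=\Omega(c\log(1/\beta)/\eps)$ reduces to $m=\Omega(c)$ and the local-agent requirement $n=\Omega(m^2/\eps^2\cdot\log(m2^d/\beta))$ reduces to $n=\Omega(c^2\log c)$, both satisfied by construction, and the resulting three-round protocol has exactly the claimed interaction pattern (local agents first, then referee--curator). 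Item~2 is a direct application of \cref{lem:1out2^dparityLB}: any protocol with the reverse interaction pattern would force $n=\Omega(c\cdot 2^d/(d\eps))=\Omega(c^3/\log c)$, but $n=\Theta(c^2\log c)=o(c^3/\log c)$, ruling out such protocols for large enough $c$.

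For Item~3 I would reduce from local-model $\eps$-DP PAC learning of $\Parity_c$ under the uniform distribution, to which \cref{fact:learning_parity_LDP} applies. Let $r\in\{0,1\}^c$ be the unknown parity target and each agent hold $(x_i,\langle x_i,r\rangle)$ with $x_i$ uniform. Fix public values $r_j$ for $j\in\{0,1\}^d\setminus\{0^d\}$ (constants of the reduction), set $r_{0^d}:=r$ (known only implicitly, through the parity labels) and $s_t:=0^d$ for all $t$, so that $s=0^d$ and $r_s=r$. Each parity agent flips an independent fair coin and, with probability $\tfrac12$, outputs $(x_i,(\langle x_i,r_j\rangle)_{j\in\{0,1\}^d})$, where the coordinate at $j=0^d$ is precisely the parity label $\langle x_i,r\rangle$ and every other coordinate is computable from the known $r_j$; with probability $\tfrac12$ it instead draws $t\inr[m+1]$ and outputs $(t,0^d)$. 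The resulting inputs follow the distribution in step~3 of the task definition exactly, so any $\eps$-DP $1$-out-of-$2^d$-parity protocol $\Pi$ running in $R$ rounds with $n$ agents yields an $\eps$-DP $(\tfrac14,\tfrac14)$-PAC learner for $\Parity_c$ using $nR$ messages. \cref{fact:learning_parity_LDP} then gives $nR=2^{\Omega(c)}$, hence $R=2^{\Omega(c)}/\poly(c)=2^{\Omega(c)}=2^{\Omega(\sqrt{n/\log n})}$ since $n=\Theta(c^2\log c)$ and so $c=\Theta(\sqrt{n/\log n})$.

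Item~4 is an information-theoretic secret-sharing argument that does not even invoke privacy. Fix the $r_j$'s to be pairwise distinct (possible since $2^c\geq 2^d$ for $c$ large) and draw $s_1,\dots,s_{m+1}$ uniformly and independently in $\{0,1\}^d$, so that $s=\bigoplus_t s_t$ is uniform. For any fixed $t^*\in[m+1]$ the probability that none of the curator's $m$ i.i.d.\ samples equals $(t^*,s_{t^*})$ is $(1-\tfrac{1}{2(m+1)})^m$, which tends to $e^{-1/2}$; hence for large $c$ the event $E$ that some share $s_{t^*}$ is absent from the curator's entire view has probability at least $\tfrac12$. Conditioned on $E$, the missing $s_{t^*}$ is uniformly distributed and independent of the curator's view, and therefore so is $s=s_{t^*}\oplus\bigoplus_{t\neq t^*}s_t$; because the $r_j$ are distinct, any fixed output $r$ equals $r_s$ with conditional probability at most $2^{-d}$. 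Combining yields a success probability of at most $1-\tfrac12(1-2^{-d})<\tfrac34$ for $d\geq 2$, and averaging over the random shares exhibits a specific instance $s_1,\dots,s_{m+1}$ on which any fixed curator algorithm fails with probability exceeding $\beta=\tfrac14$.

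The main delicate point is the joint tuning of $d$: it must be large enough that \cref{lem:1out2^dparityLB} makes the forbidden-order lower bound exceed $n=\Theta(c^2\log c)$, yet small enough that the upper bound in \cref{lem:1out2^dparityProtocol} still fits within that same $n$. The choice $2^d=\Theta(c^2)$ (i.e.\ $d\approx 2\log c$) is the sweet spot that separates $c^2\log c$ from $c^3/\log c$ by the required $\log^2 c$ margin while keeping the $\log(m2^d/\beta)$ factor inside the upper bound at $O(\log c)$.
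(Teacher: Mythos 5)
Your proposal is correct and follows essentially the same route as the paper: Item~1 from \cref{lem:1out2^dparityProtocol}, Item~2 from \cref{lem:1out2^dparityLB} (your choice $2^d=\Theta(c^2)$ works just as well as the paper's $2^d=\Theta(c^2\log c)$, since $c^2\log c=o(c^3/\log c)$), Item~3 by the simulation reduction to \cref{fact:learning_parity_LDP}, and Item~4 by the missing-share observation. The one simplification you overlooked is in Item~4: with only $m$ samples the curator deterministically receives at most $m$ of the $m+1$ shares, so some share is \emph{always} missing and the curator has zero information about $r_s$, making the probabilistic bound via $\bigl(1-\tfrac{1}{2(m+1)}\bigr)^m$ unnecessary — though your fixed-$t^*$ variant is also valid and has the merit of cleanly justifying the independence claim.
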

\begin{proof}
\cref{item:protocol} follows directly from \cref{lem:1out2^dparityProtocol}. 
For \cref{item:non-ineractive}, by \cref{lem:1out2^dparityLB},
with $\epsilon=1/4$, $\beta=1/4$, and $d=2\log c+\log\log c$
$$n=\Omega\left(\frac{c 2^d}{d\epsilon}\right)=\Omega(c^3),$$
contradicting the choice of $n=\Theta(c^2\log c).$

For the impossibility result in \cref{item:localParity}, recall that by \cref{fact:learning_parity_LDP} the  number of messages sent to the referee in an $\epsilon$-differentially private learning protocol in the local model for parity of strings of length $c$ with respect to the uniform distribution is $2^{\Omega(c)}$.
By simple simulation, an $\epsilon$-differentially private  protocol in the local model for the  $1$-out-of-$2^d$-parity task implies an $\epsilon$-differentially private  protocol in the local model for learning parity with respect to the uniform distribution (with the same number of messages). Specifically, since the number of agents is $n=O(c^2 \log c)$, the number of rounds is $2^{\Omega(c)}/(c^2 \log c)=2^{\Omega(\sqrt{n/\log n})}$.

For \cref{item:curator}, observe that a curator receiving $m$ input points obtains less than $m+1$ shares of $s$ and hence obtains no information about $r_s$. Hence, such a curator cannot solve the 1-out-of-$2^m$-parity task alone, even without privacy constraints.
\end{proof}

\section{The parity-chooses-secret task}

\label{sec:paritySecretSharing}

We now present another task that cannot be privately solved neither in the curator model nor in the local model with sub-exponential number of rounds.  This task can be solved in the hybrid model; however, it requires interaction, this time first with the curator and then with the local agents. This task (as well as 1-out-of-$2^d$-parity task) highlights both the information and private-computation gaps between the curator and the local model agents. The local model agents receive enough information to solve the task, but lack the ability to privately solve an essential sub-task. The curator does not receive enough information to solve the task (even non-privately), however the curator can be used to privately solve the hard sub-task. Once the hard sub-task is solved, this information is forwarded to the local agents, which now can solve the task. 

\begin{definition}[The parity-chooses-secret task]
The inputs in the parity-chooses-secret task are generated as follows:
\begin{enumerate}
    \item {\bf Input:} A string $r\in \bit^c$ and 
    $2^{c}$ vectors of $m+1$ bits: a vector  $(s_{j,1},\dots,s_{j,m+1})\in\bit^{m+1}$
    for every $j\in \bit^c$. 
    \item Set $s_j = s_{j,1}\oplus\cdots \oplus s_{j,m+1}$ for every for $j\in \bit^c$, i.e., $s_j$ is a random bit shared via an $m+1$-out-of-$m+1$ secret-sharing scheme, with the shares being $s_{j,1},\dots,s_{j,m+1}$.
    \item \label{item:parity-chooses-secret-generate} 
    Each sample $x_1,\dots,x_m$ and $y_1,\dots,y_n$ is generated independently as follows:
        \begin{itemize}
            \item Choose $x \inr \bit^c$ and $t\inr [m+1]$ and output 
             $(x,\myangle{x,r},t,(s_{j,t})_{j \in \bit^c})$ 
             (that is, every point contains a string of length $c$, its inner product with $r$, an integer $t$, and the $t$-th share of each $s_j$). 
    \end{itemize}
\end{enumerate}
The goal of the referee in the  parity-chooses-secret task is for 
every $r$ and every
      $\bigl((s_{j,1},\dots,s_{j,m+1})\bigr)_{j \in \bit^c}$ to recover $s_r$ with probability at least $1-\beta$, where the probability is over the generation of the inputs in \stepref{item:parity-chooses-secret-generate} and the randomness of the parties.
\end{definition}

We start by describing a protocol for the parity-chooses-secret task.

\begin{lemma}
\label{lem:ParityChoosesSecretProtocol}
Let $\beta > 1/m$ and assume that $m=\Omega\left(\frac{c\log(1/\beta)}{\epsilon}\right)$ and $n =\Omega\left(\frac{m^2}{\epsilon^2}\log(\frac{n}{\epsilon})\right)$.
The parity-chooses-secret task can be solved in the $(m,n)$-hybrid model by an $\epsilon$-differentially private protocol with three rounds, where in the first round the curator sends one message to the referee, in the second round the referee  sends one message to the local agents, and in the third round each local agent sends one message to the referee.
\end{lemma}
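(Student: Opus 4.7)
The plan is to build a three-round protocol with the following structure. In Round~1, the curator takes its $m$ inputs $(x_i,\langle x_i,r\rangle,t_i,(s_{j,t_i})_{j\in\{0,1\}^c})_{i=1}^m$, \emph{discards everything except} $(x_i,\langle x_i,r\rangle)_{i=1}^m$, and runs the $\epsilon$-differentially private proper parity learner of \cref{thm:parityLearner} with accuracy $\alpha=1/4$ and confidence $\beta/3$, producing a hypothesis $\Par_{r'}$; it sends $r'$ to the referee. In Round~2, the referee simply forwards $r'$ to every local agent. In Round~3, each local agent, upon receiving $r'$ and looking at its own input $(x,\langle x,r\rangle,t,(s_{j,t})_{j\in\{0,1\}^c})$, extracts the pair $(t,s_{r',t})\in [m+1]\times\{0,1\}$ and plays its role in the $\epsilon$-differentially private heavy-hitters protocol of \cref{thm:heavy-hitters} on domain $X=[m+1]\times\{0,1\}$ with confidence $\beta/3$. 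The referee then collects the list of heavy hitters, and for each $t\in[m+1]$ reads off the (at most one) bit $\tilde s_t$ reported for that $t$, and outputs $\tilde s_r=\tilde s_1\oplus\cdots\oplus\tilde s_{m+1}$.

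For privacy, note that in Round~1 the curator only touches its own data and runs an $\epsilon$-DP mechanism; the message to the referee is therefore $\epsilon$-DP in the curator's database. Each local agent only releases information through an $\epsilon$-local-randomizer of the heavy-hitters protocol, applied to a function of (i)~the referee's Round~2 message (which does not depend on that agent's input) and (ii)~the agent's own input; hence the joint view of any adversary controlling every party other than $P_i$ is $\epsilon$-DP in $P_i$'s input, and analogously for the curator. This gives the required $\epsilon$-DP guarantee in \cref{def:DP_Prot}.

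For correctness, I first argue that $r'=r$ with probability $\geq 1-\beta/3$. The $x_i$'s are uniform in $\{0,1\}^c$ labelled by $\Par_r$, so by \cref{thm:parityLearner} with $m=\Omega(c\log(1/\beta)/\epsilon)$, $\Par_{r'}$ has generalization error $\leq 1/4$ under the uniform distribution; since any two distinct parities disagree on exactly half of $\{0,1\}^c$, this forces $r'=r$. Next, condition on $r'=r$. Each local agent picked $t$ uniformly in $[m+1]$, so by a Chernoff bound, with probability $\geq 1-\beta/3$ every $t\in[m+1]$ appears as the index of at least $n/(2(m+1))$ agents, provided $n=\Omega(m\log(m/\beta))$. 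For each such $t$, all those agents hold the \emph{same} pair $(t,s_{r,t})$, so this pair has multiplicity $\geq n/(2(m+1))$ in the heavy-hitters input. The choice $n=\Omega(m^2/\epsilon^2\cdot\log(n/\epsilon))$ makes $n/(2(m+1))$ dominate the threshold $\frac{\lambda_1}{\epsilon}\sqrt{n\log(|X|/\beta)}$ of \cref{thm:heavy-hitters} (using $|X|=2(m+1)$ and $\beta>1/m$), so with probability $\geq 1-\beta/3$ the heavy-hitters output contains $(t,s_{r,t})$ for every $t\in[m+1]$ and no pair $(t,1-s_{r,t})$. Hence the referee reconstructs all shares correctly and outputs $s_r$, with overall failure probability $\leq \beta$ by a union bound.

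The main technical balancing act is parameter tuning: ensuring simultaneously that the curator's sample size $m$ is large enough for the parity learner to return $r$ \emph{exactly} (which is what makes subsequent local-side extraction meaningful), and that the number of local agents $n$ is large enough that every $t\in[m+1]$ is hit by more than the local heavy-hitter threshold; the rest is straightforward composition of already-private building blocks, since the curator's message is a function of its data alone and each agent's message is a local randomization of its own data given the published $r'$.
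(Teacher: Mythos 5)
Your proposal is correct and follows essentially the same route as the paper's proof: the curator runs the parity learner of \cref{thm:parityLearner} with $\alpha=1/4$ and confidence $\beta/3$ (so that $r'=r$ since distinct parities disagree on half the domain), the referee forwards $r$ to the agents, and the agents run the heavy-hitters protocol of \cref{thm:heavy-hitters} on the pairs $(t,s_{r,t})$, with the same Chernoff argument showing each index $t$ is held by at least $n/(2(m+1))$ agents, which exceeds the heavy-hitters threshold under the stated bound on $n$. The only (harmless) additions are your explicit treatment of the domain size $2(m+1)$ and the observation that spurious pairs $(t,1-s_{r,t})$ cannot appear in the output list.
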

\begin{proof}
The protocol is as follows:
First, the curator learns $r$  by executing the $\epsilon$-differentially private algorithm for learning parity of~\cite{KLNRS11} (see \cref{thm:parityLearner})  with  $\alpha=1/4$ , $\beta/3$, and the  $m$ inputs $(x,\myangle{x,r})$. The curator  sends $r$ to the referee, which forwards it to the local agents. Next each local agent sends a messages according to the $\epsilon$-differentially private heavy hitters protocol of~\cite{BNS17} (see \cref{thm:heavy-hitters}) with the input $(t,s_{r,t})$ and $\beta/3$.  The referee recovers 
$(1,s_{r,1}),\dots,(m+1,s_{r,m+1})$ from the messages of the heavy-hitters protocol, and outputs  y$s_r = s_{r,1}\oplus\cdots \oplus s_{r,m+1}$.
Since we use  $\epsilon$-differentially private algorithms, each operating on different inputs, the resulting protocol is $\epsilon$-differentially private.

We next argue that with probability
at least $1-\beta$, the referee reconstructs $s_r$.
As $m =\Omega(\frac{c\log(1/\beta)}{\epsilon})$,
the algorithm of~\cite{KLNRS11} (see \cref{thm:parityLearner}) returns, with probability at least $1-\beta/3$, a string $r'$ such that $\Pr[\myangle{x,r'} \neq \myangle{x,r}] \leq 1/4$ under the uniform distribution on $x \in \bit^c$. Since for $r\neq r'$ this probability is exactly $1/2$, we get that $r=r'$ with probability at least $1-\beta/3$.
We complete the proof by showing that, once the local agents and the referee know $r$, the referee 
reconstructs with probability at least $2\beta/3$ all values $(1,s_{r,1}),\dots,(m+1,s_{r,m+1})$. 
Note that for a fixed $t \in [m+1]$, the expected number of times that  $(t,s_{r,t})$ is an input of agents $P_1,\dots,P_n$ is
$n/(m+1)$. By a simple Chernoff bound, with probability $1-\beta/3$, for all $t$ the value $(t,s_{r,t})$ is an input of at least $n/2(m+1)$ parties. The protocol of~\cite{BNS17} (see \cref{thm:heavy-hitters}) guarantees that, with probability at least $1-\beta/3$, each value that is an input of at least $O\left(1/\epsilon \sqrt{n \log(\frac{n}{\beta/3})}\right)$ agents will appear in the list computed by the referee. By the assertion of the lemma, $O\left(1/\epsilon \sqrt{n \log(\frac{n}{\beta/3})}\right) < \frac{n}{2(m+1)}$. Thus, with probability at least $1-\beta$, the referee reconstructs  $s_{r,1},\dots,s_{r,m+1}$ and reconstructs the correct value $s_r$.  \end{proof}

We next prove that, unless the database is big,  the parity-chooses-secret task requires interaction. Furthermore, we rule-out  protocols in which first
the referee  simultaneously sends one message to each local agent, then receives an answer from each local agent, and finally exchanges (possibly many)  messages with the curator. In particular, we rule-out
the communication pattern used in \cref{lem:1out2^dparityProtocol} for the 
1-out-of-$2^d$-parity task.
To prove this result, we first convert a  protocol $\Pi$ for the parity-chooses-secret task with the above communication pattern to
a  protocol $\Pi'$ in the hybrid model with the same communication pattern for a similar task (which we call the parity-chooses-secret' task, defined below). 
We then convert the protocol $\Pi'$ to a non-interactive protocol $\Pi''$ in the local model for another related task, and complete the proof 
by showing an impossibility result for the related task. 

We define the parity-chooses-secret'  task as the task in which the input of the curator is generated as in 
the parity-chooses-secret  task and the 
input of each local agent only contains  shares, that is, it is of the form $(t,(s_{j,t})_{j \in \bit^c})$. The goal of the referee in remains the same -- to recover $s_r$.
\begin{claim}
\label{cl:parity-chooses-secret'}
Assume that $m=\Omega\left(\frac{c\log(1/\beta)}{\epsilon}\right)$.
If there is an $\epsilon$-differentially private protocol for the parity-chooses-secret task in the $(m,n)$-hybrid model with error at most $\beta$
in which in the first round the referee simultaneously sends one message to each local agent, in the second round gets an answer from each agent, and then the referee and the curator exchange (possibly many) messages,  then there is a  $2\epsilon$-differentially private protocol for the parity-chooses-secret' task in the $(m,n)$-hybrid model
with error at most $2\beta$ with the same communication pattern. 
\end{claim}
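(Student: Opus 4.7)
The plan is to construct $\Pi'$ by having each local agent locally impute the missing parity component of its input and then execute its role in $\Pi$. Concretely, on input $D_i=(t_i,(s_{j,t_i})_{j\in\bit^c})$, agent $i$ draws a uniform $x_i\inr\bit^c$ and a uniform $b_i\inr\bit$ and then runs $\Pi_{P_i}$ on the extended input $(x_i,b_i,D_i)$. The curator holds a genuine parity-chooses-secret sample and runs $\Pi_C$ unchanged; the referee runs $\Pi_R$ unchanged. Since each imputation is local to one agent and introduces no cross-party communication, $\Pi'$ has the same round structure and communication pattern as $\Pi$.

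For privacy, fix an agent $i$ and neighboring inputs $D_i,D_i'$ to $\Pi'$. Conditioned on any fixed value of the freshly-drawn $(x_i,b_i)$, the $\epsilon$-DP guarantee of $\Pi$ ensures that agent $i$'s view of the protocol is $\epsilon$-close on $(x_i,b_i,D_i)$ versus $(x_i,b_i,D_i')$; averaging over $(x_i,b_i)$ preserves this closeness, so the part of $\Pi'$'s privacy attributable to agent inputs is $\epsilon$-DP. For the curator, the input and role are unchanged from $\Pi$, so privacy there is inherited directly. A worst-case accounting over adversaries that can target both an agent and the curator gives that $\Pi'$ is $2\epsilon$-differentially private.

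The main obstacle is correctness, because the joint input distribution induced in $\Pi'$ does not coincide with $\Pi$'s true distribution: in $\Pi'$ the parity bits $(b_i)_{i=1}^n$ are i.i.d.\ uniform and independent of the $x_i$'s, while in $\Pi$ they satisfy the algebraic relation $b_i=\myangle{x_i,r}$ for a common $r$. The naive coupling agrees on only a $2^{-n}$ fraction of the probability mass, which is useless. The key observation driving the correctness proof is that for each agent $i$, the $\epsilon$-DP of $\Pi$ in its own input (and in particular in the parity bit) implies that the marginal distribution of agent $i$'s message under a uniform $b_i$ is within a factor $\tfrac{1+e^\epsilon}{2}\le e^\epsilon$ of the marginal under $b_i=\myangle{x_i,r}$.

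The hardest step will be lifting this per-agent closeness to the joint transcript distribution without incurring the naive composition-theoretic blowup of $e^{n\epsilon}$. I expect to handle this by combining the per-agent closeness with the fact that the referee's output in $\Pi$ is itself computed through a sequence of $\epsilon$-DP steps involving the curator's view, which absorbs the slack from the imputed parity bits rather than letting it accumulate multiplicatively across agents. The resulting additional failure probability introduced by the imputation is bounded by $\beta$, so $\Pi'$ outputs $s_r$ with total error at most $2\beta$, as required.
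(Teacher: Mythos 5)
Your construction — imputing a uniformly random parity bit $b_i$ at each agent and running $\Pi_{P_i}$ on $(x_i,b_i,D_i)$ — has a genuine gap that your own write-up flags but does not close. The per-agent observation that the marginal of agent $i$'s message under a uniform $b_i$ is within a multiplicative $e^{\epsilon}$ of its marginal under $b_i=\myangle{x_i,r}$ does not lift to the joint transcript: the $n$ messages are computed from $n$ independent imputations, and the divergence between the joint distribution under i.i.d.\ uniform $(b_i)$ and under the correlated $(b_i=\myangle{x_i,r})$ really can grow like $e^{n\epsilon}$ (or have total variation distance tending to $1$). For instance, if each agent's message is an $\epsilon$-randomized response to $b_i$, the referee can with overwhelming probability distinguish transcripts generated from consistent parities from transcripts generated from uniform bits once $n\gg 1/\epsilon^2$; since $\Pi$ is an arbitrary protocol whose correctness is only promised on the true input distribution, nothing prevents it from failing completely on the imputed one. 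There is no "absorption" of this slack by the curator's $\epsilon$-DP steps — DP of the curator constrains sensitivity to its \emph{own} inputs, not robustness to a global shift in the agents' input distribution. So the claimed additional error of $\beta$ is unjustified, and the approach as stated does not work. (Separately, your privacy accounting is off: in your construction each party runs a single $\epsilon$-DP computation on its input, so it would be $\epsilon$-DP, not $2\epsilon$; the factor $2$ is not about an adversary "targeting both an agent and the curator.")

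The missing idea is to avoid guessing $b_i$ altogether by exploiting the communication pattern: since the curator speaks \emph{after} the agents, each agent can send \emph{both} candidate messages $\msg_{i,0}$ and $\msg_{i,1}$ (computed by running $\Pi_{P_i}$ on $(x_i,0,D_i)$ and on $(x_i,1,D_i)$) together with $x_i$; the referee forwards everything to the curator, who first privately learns $r$ from its own $m$ samples (this is where the hypothesis $m=\Omega(c\log(1/\beta)/\epsilon)$ is used), computes $b_i=\myangle{x_i,r}$, selects the correct message $\msg_i=\msg_{i,b_i}$ for every $i$, and then internally simulates the remainder of $\Pi$. Conditioned on the curator recovering $r$ (probability $\geq 1-\beta$), the selected messages are distributed \emph{exactly} as in $\Pi$ on a correctly distributed input, so the error is at most $2\beta$ with no distributional approximation needed. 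The factor $2\epsilon$ then arises honestly: each agent runs two $\epsilon$-DP executions on its input, and the curator composes the parity learner with its role in $\Pi$.
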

\begin{proof}
Let $\Pi$ be an   $\epsilon$-differentially private protocol   for the
parity-chooses-secret task in the $(m,n)$-hybrid model with the communication pattern as in the claim in which the referee reconstructs $s_r$ with probability at least $1-\beta$. We construct from $\Pi$ a   $2\epsilon$-differentially private protocol $\Pi'$ with the same communication pattern for the the
parity-chooses-secret' task in which the referee reconstructs $s_r$ with probability at least $1-2\beta$.
In $\Pi'$, each agent $P_i$, holding an input $(t,(s_{j,t})_{j \in \bit^c})$, chooses with uniform distribution a string $x_i \inr \bit^c$ and sends two  messages of $\Pi$, one message,
denoted $\msg_{i,0}$, for the input
$(x_i,0,t,(s_{j,t})_{j \in \bit^c})$ and one message,
denoted $\msg_{i,1}$,
for the input
$(x_i,1,t,(s_{j,t})_{j \in \bit^c})$. In addition, the agent sends $x_i$ to the referee. 
The referee sends the messages that it gets from the local agents (i.e., $(x_i,\msg_{i,0},\msg_{i,1})_{i \in [n]})$ and its random string to the curator. The curator does as follows:
\begin{itemize}
\item
Privately learns $r$  by executing the $\epsilon$-differentially private algorithm of~\cite{KLNRS11} (see \cref{thm:parityLearner}) for learning parity with 
$\alpha=1/4$, $\beta$, and the
$m$ inputs $(x,\myangle{x,r})$.
\item
For each agent $P_i$, computes $b_i=\myangle{x_i,r}$ 
and $\msg_{i}=\msg_{i,b_i}$
(that is, the curator chooses the correct message from the two messages the agent sends).
\item
Simulates the communication between the curator and the referee
in $\Pi$ assuming that the curator gets the messages
$(\msg_{i})_{i\in [n]}$ in the first round and 
 reconstructs $s_r$ as the referee reconstructs it in $\Pi$.
\item
Sends $s_r$ to the referee. 
\end{itemize}
As each party executes two $\epsilon$-differentially private algorithm on its input, the resulting protocol is $2 \epsilon$-differentially private.
Each agent $P_i$ chooses $x_i$  with uniform distribution (as in the  parity-chooses-secret task). Furthermore, as $m$ is big enough, with probability  at least $1-\beta$, the curator computes the correct value $r$. Thus, $(x_i,b_i,t,(s_{j,t})_{j \in \bit^c})$ is an input distributed as required in the parity-chooses-secret task, and the curator reconstructs $s_r$ with probability at most $1-2\beta$.
\end{proof}

We recall a result of~\cite{MMPRTV10} showing that the mutual information between an the input and output of  a differential private algorithm is low. Recall that the {\em entropy} $H(X)$ of a random variable $X$ is defined as 
$$H(X) \triangleq -\sum_{x, \Pr[X=x]>0} \Pr [X=x]\log \Pr [X=x].$$
It can be proved that
$  0 \leq H(X) \leq \log(|\supp(X)|),$ where $|\supp(X)|$ is the
size of the support of $X$ (the number of values with probability greater
than zero). The upper bound $\setsize{\supp(X)}$ is obtained if and only if
the distribution of $X$ is uniform and the lower bound is obtained if and
only if $X$ is deterministic.
Given two  random variables $X$ and $Y$ (possibly dependent), the
{\em conditioned entropy} of $X$ given $Y$ is defined as $H(X|Y) \triangleq H(XY) -
H(Y).$
From the definition
of the conditional entropy, the following properties can be proved:
$$ H(XY) \leq H(X)+H(Y),
$$
and for $3$ random variables $X,Y,Z$
$$
H(X|YZ) \leq H(X|Y).
$$
The {\em mutual information} between $X$ and $Y$ is defined as 
$$I(X;Y)  \triangleq
H(X) - H(X|Y)=H(X)+H(Y)-H(XY).$$

\begin{theorem}[Differential privacy implies low mutual information~\cite{MMPRTV10}] 
\label{thm:MI}
Let $A : X^n
\rightarrow Y$ be an
$\epsilon$-differentially private mechanism. Then for every random variable $V$ distributed on $X^n$, we
have
$I(V; A(V)) \leq 1.5\epsilon n$.
\end{theorem}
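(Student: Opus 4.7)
The plan is to combine a coupling argument with the convexity of KL divergence and group privacy. First, rewrite the mutual information as an expected KL divergence: $I(V; A(V)) = \E_{v \sim V}\bigl[D(A(v) \,\|\, \mu)\bigr]$, where $\mu$ is the marginal distribution of $A(V)$. Introducing an independent copy $V'$ of $V$ (with the same distribution but independent of $V$), we have $\mu = \E_{v' \sim V'}[A(v')]$, i.e., $\mu$ is a mixture of the conditional output distributions $A(v')$.

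Second, exploit the convexity of $D(P \,\|\, \cdot)$ in its second argument: since $\mu$ is a mixture, for every $v$,
\[
D\Bigl(A(v) \,\Bigl\|\, \E_{v' \sim V'} A(v')\Bigr) \;\leq\; \E_{v' \sim V'}\bigl[D(A(v) \,\|\, A(v'))\bigr].
\]
Plugging this into Step~1 yields $I(V; A(V)) \leq \E_{v,v'}\bigl[D(A(v) \,\|\, A(v'))\bigr]$, where $v, v'$ are drawn independently from the distribution of $V$. Next, bound the pairwise KL divergence using group privacy. Since any $v, v' \in X^n$ differ in at most $n$ coordinates, group privacy for an $\epsilon$-differentially private mechanism gives $A(v) \approx_{n\epsilon} A(v')$; equivalently, $\Pr[A(v) = y]/\Pr[A(v') = y] \leq e^{n\epsilon}$ for every outcome $y$. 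Therefore,
\[
D(A(v) \,\|\, A(v')) \;=\; \E_{y \sim A(v)}\!\left[\log\frac{\Pr[A(v)=y]}{\Pr[A(v')=y]}\right] \;\leq\; n\epsilon.
\]

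Combining the three steps gives $I(V; A(V)) \leq n\epsilon$, which is in fact sharper than the claimed bound $1.5\epsilon n$ and hence implies it. The main conceptual subtlety is the choice of coupling: by taking $V'$ to be an \emph{independent} copy of $V$ (rather than trying to peel off coordinates of $V$ directly), we sidestep any correlations in $V$'s joint distribution and reduce the problem to a worst-case pairwise comparison between output distributions, which is exactly what group privacy controls. The looseness in the constant $1.5$ in the stated theorem presumably reflects a slightly different chain-rule decomposition $I(V; A(V)) = \sum_i I(V_i; A(V) \mid V_{<i})$ with a per-coordinate bound of $\leq 1.5\epsilon$ used in the original reference; the coupling argument above is cleaner and tighter.
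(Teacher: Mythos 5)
The paper does not prove this statement; it is imported verbatim from~\cite{MMPRTV10}, so there is no internal proof to compare against. Your argument is a correct and standard derivation of the fact: the identity $I(V;A(V))=\E_{v}[D(A(v)\,\|\,\mu)]$, convexity of $D(P\,\|\,\cdot)$ in the second argument to pass to the expected pairwise divergence $\E_{v,v'}[D(A(v)\,\|\,A(v'))]$, and group privacy over at most $n$ coordinate changes to bound each pairwise term. All three steps are sound (the per-outcome log-ratio bound $\le n\epsilon$ immediately bounds the expectation, even though individual terms may be negative).

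The one slip is your closing claim that you obtain the ``sharper'' bound $I(V;A(V))\le n\epsilon$. That bound is in \emph{nats}: the group-privacy inequality controls the \emph{natural} logarithm of the probability ratio. The theorem (and this paper's use of it in the proof of \cref{lem:lbPCS}, where the entropy of a uniform $2^c$-bit string is taken to equal $2^c$) measures information in bits, and converting your bound to bits gives $I(V;A(V))\le (\log_2 e)\,\epsilon n\approx 1.443\,\epsilon n\le 1.5\,\epsilon n$. So your argument does prove the stated theorem, but it does not improve its constant; the factor $1.5$ is essentially $\log_2 e$, not an artifact of a lossier chain-rule decomposition as you speculate. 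You should either state explicitly that your mutual information is in nats, or carry the $\log_2 e$ factor through to land exactly on the claimed $1.5\epsilon n$.
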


\begin{lemma}
\label{lem:lbPCS}
Assume that $m=\Omega\left(\frac{c\log(1/\beta)}{\epsilon}\right)$.
If there is an $\epsilon$-differentially private protocol for the parity-chooses-secret task in the $(m,n)$-hybrid model with error at most $\beta$ 
in which in the first round the referee simultaneously sends one message to each local agent, in the second round gets an answer from each agent, and then the referee and the curator exchange (possibly many) messages, then 
$n \geq \frac{(1-2\beta)2^c}{3 \epsilon}$.
\end{lemma}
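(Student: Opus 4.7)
The plan is to first reduce to a cleaner variant via Claim~\ref{cl:parity-chooses-secret'}: this yields a $2\epsilon$-differentially private protocol $\Pi'$ for the parity-chooses-secret' task with error at most $2\beta$ and the same three-phase communication pattern. The crucial gain is that now each local agent's input contains only shares $(t,(s_{j,t})_{j\in\{0,1\}^c})$ and no information about $r$, so the transcript $T$ of the agents' phase-$2$ messages is a function of the agents' inputs (and public randomness) alone, produced before any referee-curator interaction.

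I would then bound $I(\text{agents' inputs};T)$ from above using Theorem~\ref{thm:MI}: the agent phase is a $2\epsilon$-differentially private mechanism on the $n$ single-element agent inputs, so $I(\text{agents' inputs}; T) \le 1.5 \cdot 2\epsilon \cdot n = 3\epsilon n$. Since the curator's input is generated independently of the agents' inputs, the same bound holds after conditioning on the curator's view.

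For a matching lower bound on this conditional mutual information, the key structural observations are: (i) the curator holds at most $m$ of the $m+1$ possible $t$-values, so by pigeonhole there is always some $t^*\in[m+1]$ for which the curator holds no share; (ii) by the perfect secrecy of $(m+1)$-out-of-$(m+1)$ secret sharing, the $2^c$ secrets $(s_r)_{r\in\{0,1\}^c}$ are therefore uniform and independent given the curator's view; (iii) since $T$ is oblivious to $r$, the pair $(T,\text{curator's view})$ must enable reconstruction of $s_r$ with probability $\ge 1-2\beta$ for \emph{every} $r\in\{0,1\}^c$, not just the one actually sampled. A per-$r$ Fano-type inequality, summed over $r$ using subadditivity of conditional entropy across the $2^c$ independent bits, yields a lower bound of the form $I((s_r)_r; T \mid \text{curator's view}) \ge (1-2\beta)\cdot 2^c$.

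Combining the two bounds gives $3\epsilon n \ge (1-2\beta)\cdot 2^c$, i.e., $n \ge \frac{(1-2\beta)2^c}{3\epsilon}$, as claimed. The main obstacles I expect are (a) tightening the Fano step to deliver the advertised $(1-2\beta)$ constant rather than the weaker $(1-h(2\beta))$ produced by a naïve one-shot application of binary Fano (one route is to work directly with the expected Hamming distance $\le 2\beta\cdot 2^c$ between $(s_r)_r$ and its recovered version, exploiting that each coordinate is a single uniform bit); and (b) making the conditioning on the curator's view rigorous so that Theorem~\ref{thm:MI} applies to the agent phase as a stand-alone $2\epsilon$-DP mechanism on $n$ inputs, untangled from the curator-referee interaction that occurs afterward.
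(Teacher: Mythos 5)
Your overall strategy is the same as the paper's: reduce via \cref{cl:parity-chooses-secret'} to a $2\epsilon$-private protocol for the primed task, upper-bound a mutual information by $3\epsilon n$ using \cref{thm:MI}, and lower-bound it by $(1-2\beta)2^c$ using the fact that all $2^c$ secret bits must be recoverable. However, your step (iii) has a genuine gap. In a single execution of $\Pi'$, the curator's input encodes \emph{one specific} $r$ (through the labels $\myangle{x_i,r}$), and the protocol's correctness guarantee is only that the referee recovers $s_r$ for that one $r$. It is not true that ``the pair $(T,\text{curator's view})$ must enable reconstruction of $s_r$ for every $r$'': to recover $s_{r'}$ for $r'\neq r$ you must re-run the referee--curator phase on a curator input labeled by $r'$, and the realized curator's view does not contain such an input. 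Moreover, even granting re-simulation, you need the simulated curator inputs to be independent of the secrets $(s_r)_r$, otherwise the curator-side conditioning contaminates the mutual-information accounting (and note also that the curator's \emph{view} includes referee messages that depend on $T$, so conditioning on it is not innocuous).

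The paper closes exactly this gap by constructing an auxiliary \emph{local-model} protocol $\Pi''$ in which every agent holds the same input $(s_j)_{j\in\bit^c}$ and the referee itself generates the shares, the strings $x_1,\ldots,x_m$, and the indices $t_1,\ldots,t_m$, reserving an unused index $\ell$; agents that draw $t=\ell$ adjust their share so that the reconstruction yields the true secrets. Because the simulated curator's input never involves index $\ell$, it is independent of $(s_j)_j$ and can be re-instantiated (without interaction) for \emph{every} $r\in\bit^c$, turning the per-$r$ guarantee into recovery of all $2^c$ bits from a single $2\epsilon$-private transcript; \cref{thm:MI} then applies cleanly to $\Pi''$ with no conditioning needed. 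You would need to supply this simulation argument (or an equivalent one) to make your step (iii) rigorous. Your observation (a) about the Fano step is fair but minor: a naive application gives $H(s_{j_0}\mid s'_{j_0})\leq h(2\beta)$ rather than $2\beta$, which only changes the constant in the bound (indeed the paper's own inequality $H(s_{j_0}\mid s'_{j_0})\leq 2\beta$ is loose in the same way).
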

\begin{proof}
We convert the protocol for the parity-chooses-secret task to a protocol $\Pi''$ in the local model with $n$ agents, where the input of each agent contains $2^c$ bits $(s_j)_{j\in \bit^c}$. If the inputs of all agents are equal, then for every $r\in \bit^c$ the referee should output the bit $s_r$ with probability at least $1-\beta$. We will show at the end of the proof that such protocol can exist only if $n$ is big.

By \cref{cl:parity-chooses-secret'}, under the assumption of the lemma
there is a  $2\epsilon$-differentially private protocol $\Pi'$ for the parity-chooses-secret' task in the $(m,n)$-hybrid model
with error at most $2\beta$ and communication pattern is as in the lemma. We construct the following protocol $\Pi''$ in the local model with $n$ agents: 
\begin{itemize}
    \item {\bf Input of each agent $\boldsymbol{P_i}$:} $(s_j)_{j\in \bit^c}$.
    \item 
    The referee chooses with uniform distribution $2^c$ vectors of $m+1$ bits: a vector  $(s_{j,1},\dots,s_{j,m+1})\inr\bit^{m+1}$
    for every $j\in \bit^c$.
    \item 
    The referee chooses with uniform distribution $m$ indices  $t_1,\ldots,t_m \inr [m+1]^m$. Let $\ell$ be an index that does not appear in this list.
    \item 
    The referee chooses with uniform distribution $m$ strings  $(x_1,\ldots,x_m) \inr (\bit^c)^m$. 
    \item
    The referee sends $((s_{j,1},\dots,s_{j,m+1}))_{j \in \bit^c}$ and $\ell$ to each agent.
    \item
    Each agent $P_i$ chooses with uniform distribution an index $t \in [m+1]$. If $t \neq \ell$, it sends to the referee its message in $\Pi'$ on input $t,(s_{j,t})_{j \in \bit^c}$.   If $t = \ell$, it sends to the referee its message in protocol $\Pi'$ on input $\ell,\left(s_j\oplus \bigoplus_{k \neq \ell} s_{j,k}\right)_{j \in \bit^c}$. Denote the message of $P_i$ by $\msg_i$. 
    \item 
    For every $r \in \bit^c$, the referee does the following:
    \begin{itemize}
        \item Computes (without interaction) the communication exchanged in $\Pi'$ between the referee and the curator with input $$(x_1,\myangle{x_1,r},t_1,(s_{j,t_1})_{j\in \bit^c}),\ldots,(x_m,\myangle{x_m,r},t_m,(s_{j,t_m})_{j\in \bit^c}).$$ Denote this communication by $\msg_{C,r}$. 
        \item The referee reconstructs $s_r$ from the messages $\msg_{C,r},\msg_1,\dots,\msg_n$ using the reconstruction function of $\Pi'$.
    \end{itemize}
\end{itemize}

Protocol  $\Pi''$ is $2\epsilon$-differentially private, since $\Pi'$ is $2\epsilon$-differentially private. Furthermore, if all the inputs of the local agents are equal, then $\msg_{C,r},\msg_1,\dots,\msg_n$ are distributed as in $\Pi'$, thus, for every $r \in \bit^c$,  the referee reconstructs $s_r$ with probability at least $1-\beta$. 

We complete the proof by showing that $n$ must be large enough in $\Pi''$ (hence, also in $\Pi$). 
Assume we execute protocol $\Pi''$ when $(s_j)_{j \in \bit^c}$ is choosen with uniform distribution and 
denote its input by $(s_j)_{j \in \bit^c}$  and its output by
$(s'_j)_{j \in \bit^c}$. As the output in $\Pi''$ is computed from the transcript of $\Pi'$, the post-processing  property of differential privacy implies that 
the algorithm that first executes protocol $\Pi'$ and then computes the output from the transcript is $2\epsilon$-differentially private. 
By \cref{thm:MI}, 
\begin{equation}
\label{eq:Ilower}
I\left((s_j)_{j \in \bit^c};(s'_j)_{j \in \bit^c}\right) \leq 3\epsilon n.
\end{equation}
On the other hand, $\Pr[s_{j_0}=s'_{j_0}] \geq 1-2\beta$ for a given $j_0 \in \bit^c$, thus
$$
H(s_{j_0}|(s'_j)_{j \in \bit^c})\leq 
H(s_{j_0}|s'_{j_0}) \leq 2\beta,$$
and 
$$ 
H\left((s_j)_{j \in \bit^c}|(s'_j)_{j \in \bit^c}\right)
\leq \sum_{j_0 \in \bit^c} H\left(s_{j_0}|(s'_j)_{j \in \bit^c}\right) \leq 2\beta 2^c.
$$
Thus, 
\begin{equation}
\label{eq:IGreater}
I\left((s_j)_{j \in \bit^c};(s'_j)_{j \in \bit^c}\right) = H\left((s_j)_{j \in \bit^c}\right)
- H\left((s_j)_{j \in \bit^c}|(s'_j)_{j \in \bit^c}\right) 
\geq 2^c -2\beta 2^c =(1-2 \beta) 2^c.
\end{equation}
Inequalities (\ref{eq:Ilower}) and (\ref{eq:IGreater}) imply that
$ (1-2 \beta) 2^c \leq 3\epsilon n$.
\end{proof}

We summarize the possibility and impossibility results for the parity-chooses-secret task in the next theorem.

\begin{theorem}
Let $\epsilon=1/4$, $\beta=1/4$. For every integer $c$, there are  $m=\Theta(c)$ and 
$n =\Theta(c^2\log c)$ such that 
\begin{enumerate}
    \item \label{item:protocolPCS}
    There exists an $\epsilon$-differentially private protocol for the parity-chooses-secret task with strings of length $c$ in the $(m,n)$-hybrid model where first the curator sends one message to the referee and then the referee simultaneously exchanges one message with each local agent.
    \item \label{item:non-ineractivePCS} There does not exist an $\epsilon$-differentially private  protocol for this task in the $(m,n)$-hybrid model in which the referee first simultaneously exchanges one message with the local agents and then exchanges messages with the curator.
    \item \label{item:localParityPCS} In any  $\epsilon$-differentially private protocol for this task in the local model with $n$ agents the number of rounds is $2^{\Omega(c)}$.
    \item \label{item:curatorPCS} There is no algorithm in the  trusted curator model that solves this task with $m$ examples. 
\end{enumerate}
\end{theorem}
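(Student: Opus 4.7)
The plan follows the template of \cref{thm:1-out-of-2^d parity}. \cref{item:protocolPCS} is immediate from \cref{lem:ParityChoosesSecretProtocol}: with the constant choices $\epsilon = \beta = 1/4$, the precondition $m = \Omega(c\log(1/\beta)/\epsilon)$ reduces to $m = \Omega(c)$ and is met by $m = \Theta(c)$, while the precondition $n = \Omega((m^2/\epsilon^2)\log(n/\epsilon))$ reduces to $n = \Omega(c^2\log n)$ and is met by $n = \Theta(c^2\log c)$. \cref{item:curatorPCS} requires no privacy argument at all: the curator's $m$ samples carry indices drawn i.i.d.\ from $[m+1]$, so by pigeonhole at least one index is always absent from its view; since $s_{j,1},\ldots,s_{j,m+1}$ is an $(m+1)$-out-of-$(m+1)$ secret sharing of $s_j$, any proper subset of the shares reveals nothing about $s_j$, and hence the curator cannot guess $s_r$ with probability exceeding $1/2$, which falls short of the required $1 - \beta = 3/4$.

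\cref{item:non-ineractivePCS} follows by plugging $\epsilon = \beta = 1/4$ into \cref{lem:lbPCS}, whose precondition on $m$ is satisfied by $m = \Theta(c)$. The conclusion of the lemma gives $n \geq (1-2\beta) 2^c / (3\epsilon) = \tfrac{2}{3}\cdot 2^c$, which is $\omega(c^2\log c)$ and thus contradicts $n = \Theta(c^2\log c)$ for all sufficiently large $c$.

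The main obstacle is \cref{item:localParityPCS}: because $s_r$ is only a single bit while $r$ has $c$ bits, a one-shot simulation (as used in the proof of \cref{thm:1-out-of-2^d parity}) does not immediately yield a parity learner. The plan is to run the given protocol $\Pi$ for parity-chooses-secret in parallel $c$ times, using shared randomness to fix, for the $i$-th instance, shares such that $s_j = j_i$ (the $i$-th bit of $j$), so that $s_r = r_i$. Each local agent, holding a parity example $(x, \myangle{x,r})$, samples $t \inr [m+1]$ and augments its input to $(x, \myangle{x,r}, t, (s_{j,t})_{j\in\{0,1\}^c})$. The $c$ parallel outputs of $\Pi$ recover all of $r$, yielding a $\mathrm{poly}(c)\cdot\epsilon$-DP local-model parity learner with $R$ rounds and a total of $\mathrm{poly}(c)\cdot nR$ messages. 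Applying \cref{fact:learning_parity_LDP} then forces $R = 2^{\Omega(c)}$ as required.
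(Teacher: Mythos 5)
Your handling of \cref{item:protocolPCS}, \cref{item:non-ineractivePCS}, and \cref{item:curatorPCS} matches the paper's proof, which simply cites \cref{lem:ParityChoosesSecretProtocol}, invokes \cref{lem:lbPCS} together with $n\ll 2^c$, and observes that the curator is always missing at least one share; your parameter checks for these items are correct.

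The gap is in \cref{item:localParityPCS}. You rightly identify an obstacle that the paper's one-line proof (``as in the proof of \cref{thm:1-out-of-2^d parity}'') does not address --- the output here is a single bit, so a one-shot simulation does not yield a parity learner --- but your repair does not go through as written. Running $\Pi$ in parallel $\mathrm{poly}(c)$ times \emph{on the same $n$ agents} gives, by composition, only a $\mathrm{poly}(c)\cdot\epsilon$-differentially private learner, and \cref{fact:learning_parity_LDP} cannot be applied at that privacy level: the constant hidden in $\Omega(2^{c/3})$ depends on $\epsilon$ (the underlying simulation by statistical queries in \cite{KLNRS11} loses a factor exponential in $\epsilon$), so once the privacy parameter is $\Theta(c)$ the bound is vacuous --- indeed, without privacy parity is learnable with $O(c)$ messages. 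A second, smaller issue is correctness: with per-instance failure probability $\beta=1/4$, all $c$ parallel instances succeed simultaneously only with probability $(3/4)^c$, so the resulting learner is not an $(\alpha,1/4)$-PAC learner without amplification. Both problems are repaired by running the $O(c\log c)$ simulated executions (including a majority vote per bit of $r$) on \emph{pairwise disjoint} blocks of agents holding fresh parity examples: disjointness keeps the combined protocol $\epsilon$-differentially private, and the total message count grows only by a $\mathrm{poly}(c)$ factor, which is absorbed by the $2^{\Omega(c)}$ conclusion. With that modification your reduction establishes \cref{item:localParityPCS}.
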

\begin{proof}
\cref{item:protocol} follows directly from \cref{lem:ParityChoosesSecretProtocol}. 
\cref{item:non-ineractivePCS} is implied by \cref{lem:lbPCS}, since $n \ll 2^c$.  
\cref{item:localParityPCS} follows  from \cref{fact:learning_parity_LDP} as in the proof of  \cref{thm:1-out-of-2^d parity}. 

For \cref{item:curatorPCS}, observe that a curator receiving $m$ input points obtains less than $m+1$ shares of $(s_1,\ldots,s_j)$ and hence obtains no information about $s_r$. That is, such a curator cannot solve the parity-chooses-secret task alone, even without privacy constraints.
\end{proof}

\section{A Negative Result: Basic hypothesis testing}
\label{sec:basic_hypothesis_testing}

Here, we show that for one of the most basic tasks, differentiating between two discrete distributions $\calD_0$ and $\calD_1$, the hybrid model gives no significant added power.

\begin{definition}[The simple hypothesis-testing task]
Let $0< \beta<1$ be a parameter, $X$ be a finite domain, and $\calD_0$ and $\calD_1$ be two distributions over $X$.
The  input of the hypothesis-testing task is composed of i.i.d.\ samples from $\calD_j$ for some $j\in\{0,1\}$ and the goal of the referee is to output $\hat j$ s.t.\ $\Pr[\hat j = j]\geq {1-}\beta$.
\end{definition}

\begin{theorem}
\label{thm:no_utility_for_hypothesis-testing}
If there exists an $\epsilon$-differentially private protocol in the $(m,n)$-hybrid model
for  testing  between distributions $\calD_0$ and $\calD_1$ with success probability $1/2+\gamma$, then either there  exists an $\epsilon$-differentially private  protocol for this task in the curator model that uses $m$ samples and succeeds with probability at least $1/2+\gamma/4$ or there  exists an $\epsilon$-differentially private  protocol for this task in the local model with $n$ agents that  succeeds with probability at least $1/2+\gamma/4$.
\end{theorem}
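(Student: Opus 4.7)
The plan is to prove this via a two-step hybrid over the curator's and local agents' input distributions. Fix a hybrid protocol $\Pi$ achieving success probability $1/2+\gamma$, and for $j,j'\in\{0,1\}$ let $p_{j,j'}$ denote the probability that $\Pi$ outputs $1$ when the curator's $m$ samples are drawn i.i.d.\ from $\calD_j$ and the $n$ local agents' samples are drawn i.i.d.\ from $\calD_{j'}$. The success guarantee on the diagonal inputs yields $p_{1,1}-p_{0,0}\ge 2\gamma$. Writing $p_{1,1}-p_{0,0}=(p_{1,1}-p_{0,1})+(p_{0,1}-p_{0,0})$, at least one summand is $\ge\gamma$.

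If $p_{1,1}-p_{0,1}\ge\gamma$, I would construct a curator-only protocol $\Pi_C$: on its $m$-sample input, the curator honestly runs its part of $\Pi$ while internally simulating $n$ local agents using freshly drawn i.i.d.\ samples from $\calD_1$ together with the simulated referee, and outputs whatever the simulated referee outputs. Under the uniform prior on $j\in\{0,1\}$ the success probability of $\Pi_C$ is $\tfrac12+\tfrac12(p_{1,1}-p_{0,1})\ge\tfrac12+\gamma/2\ge\tfrac12+\gamma/4$. Symmetrically, if $p_{0,1}-p_{0,0}\ge\gamma$ I would build an $n$-agent local-model protocol in which each local agent runs its code from $\Pi$ honestly while the protocol internally simulates the curator on $m$ fresh draws from $\calD_0$, again achieving success probability $\ge\tfrac12+\gamma/2$.

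The substantive step is checking that each simulation preserves $\epsilon$-differential privacy with respect to the real data. For $\Pi_C$, the protocol $\Pi$ satisfies \cref{def:DP_Prot} with respect to the curator against any adversary $A$ controlling all other parties: for neighboring $m$-element inputs $\mathbf{x},\mathbf{x}'$, the view $A_C(\mathbf{x})$ is $\epsilon$-close to $A_C(\mathbf{x}')$. Instantiating $A$ as the honest simulation of the local agents on i.i.d.\ draws from $\calD_1$ together with the honest referee, the output of $\Pi_C$ (a deterministic function of that view) inherits $\epsilon$-DP in $\mathbf{x}$. The local-only simulation is analogous: \cref{def:DP_Prot} applied separately to each agent, with the adversary taken to be the simulated curator on a fixed draw from $\calD_0$ together with the simulated referee, gives $\epsilon$-DP with respect to each agent's single input.

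The main potential obstacle is that this hybrid split only makes sense because, in the hypothesis-testing task, the curator's $m$ samples and the local agents' $n$ samples are drawn independently from the same distribution; replacing one half by a fresh draw from a fixed $\calD_{j'}$ is therefore a faithful simulation and the product distributions underlying the four values $p_{j,j'}$ are well defined. The slack between the $\gamma/2$ my argument actually delivers and the $\gamma/4$ stated in the theorem is comfortable and absorbs any minor loss from an alternative triangle-inequality split; no further subtlety is expected.
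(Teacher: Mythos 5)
Your reduction-by-simulation and the privacy argument (the simulated parties are just a particular adversary, so the output is post-processing of a view that \cref{def:DP_Prot} already protects) are sound and match the paper's reasoning. The gap is in the utility claim. The hypothesis-testing guarantee in this paper is per-hypothesis: the protocol must output $j$ with probability at least $1/2+\gamma/4$ \emph{for each} $j\in\{0,1\}$ separately (this is how the task is defined, how the paper's own proof verifies the bound, and what the downstream application to the lower bounds of Canonne et al.\ requires). Your telescoping split only controls the \emph{difference} $p_{1,1}-p_{0,1}$ (resp.\ $p_{0,1}-p_{0,0}$), which bounds the success probability averaged over a uniform prior on $j$, not the error under each hypothesis. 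Concretely, take $p_{1,1}=1$ and $p_{0,1}=1-\gamma$: your Case A triggers, but the curator protocol $\Pi_C$ you build outputs $1$ with probability $1-\gamma$ on data from $\calD_0$, i.e., it succeeds on that hypothesis with probability only $\gamma$, and no amount of output randomization can simultaneously lift this to $1/2+\gamma/4$ while keeping the $\calD_1$ side above $1/2+\gamma/4$. (In that example the correct move is to reduce to the \emph{local} model, which your case condition does not detect.)

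The paper circumvents this by splitting on the \emph{value} of the single cross probability $p\coloneqq p_{0,1}$ rather than on which gap is large: if $p\ge 1/2$, the local-model simulation (curator replaced by fresh $\calD_0$ samples) already outputs $1$ with probability $\ge 1/2$ when the agents hold $\calD_1$ data and outputs $0$ with probability $\ge 1/2+\gamma$ when they hold $\calD_0$ data; outputting $1$ unconditionally with a small probability $\gamma/2$ then rebalances the two cases to $\ge 1/2+\gamma/4$ each. The case $p<1/2$ symmetrically yields a curator-model protocol. To repair your write-up you would need to replace the triangle-inequality split by this sign condition on $p_{0,1}$ (which hands you a one-sided bound for the hypothesis your difference bound does not control) and add the randomized fixed output.
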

\begin{proof}
Let $\Pi$  be a protocol guaranteed by the lemma, that is,
when the inputs of the curator and the local agents are drawn from $\calD_j$, the referee in $\Pi$ returns $j$ with probability at least $1/2+\gamma$.
Consider an execution of the protocol when the inputs of the curator  are drawn from $\calD_0$ and the inputs of the local agents are drawn from $\calD_1$ and let 
$p$ be the probability that the referee in $\Pi$ returns 1 in this case.

We first assume that $p\geq 1/2$ and show that there exists an $\epsilon$-differentially private protocol 
$\Pi^{\rm local}$ for this task in the local model with $n$ agents that  succeeds with probability at least $1/2+\gamma/4$. The referee in protocol $\Pi^{\rm local}$ with probability $\gamma/2$ outputs 1 and with probability $1-\gamma/2$  draws $m$ samples from $\calD_0$, executes protocol $\Pi$, where the referee simulates the messages of the curator using the $m$ samples, and returns the output of $\Pi$.

We next analyze this protocol. If the inputs of the local agents are drawn
from $\calD_1$, then the probability that the referee in protocol $\Pi$  returns $1$ is at least $1/2$ and the probability that the referee in $\Pi^{\rm local}$ returns $1$ is at least
$\gamma/2+(1-\gamma/2)\cdot 1/2=1/2+\gamma/4$. 
If the inputs of the local agents are drawn
from $\calD_0$, then the probability that the referee in $\Pi^{\rm local}$ returns $0$ is at least
$(1-\gamma/2)\cdot (1/2+\gamma) \geq 1/2+\gamma/4$ (since $\gamma \leq 1/2$).

For the case that $p < 1/2$, it can be shown, using an analogous construction, that there exists an $\epsilon$-differentially private protocol 
$\Pi^{\rm curator}$ for this task in the curator model with $m$ samples  that  succeeds with probability at least $1/2+\gamma/4$.
\end{proof}

\paragraph{Notation.} The \emph{total variation distance} (also known as the statistical distance) of two discrete distributions $\calD_0, \calD_1$ over a domain $X$ is  $d_{\rm TV}(\calD_0, \calD_1) = \sup_{T\subset X} |\calD_1(T)-\calD_0(T)| = \frac{1}{2} \sum_{x\in X} |\calD_1(x)-\calD_0(x)|$. The \emph{squared Hellinger distance} between two distributions $\calD_0, \calD_1$ over a domain $X$ is defined as $d_{H^2}(\calD_0, \calD_1) = \frac 1 2\sum_{x\in X}\left( \sqrt{\calD_0(x)} - \sqrt{\calD_1(x)} \right)^2$.

For the rest of the discussion in this section, fix the domain $X =\{0,1\}$, and some $\alpha>0$. We define two distributions $\calD_0$ and $\calD_1$  
where under $\calD_1$ we have $\Pr_{x\sim \calD_1}[x=1]= \frac 1 2 (1+\alpha)$ and under $\calD_0$ we have $\Pr_{x\sim \calD_0}[x=1]= \frac 1 2 (1-\alpha)$.
It is a fairly simple calculation to see that $d_{\rm TV}(\calD_0, \calD_1)=\alpha$ and $\frac{\alpha^2} 2 \leq d_{H^2}(\calD_0, \calD_1) \leq \alpha^2$.
We prove that for some setting of the parameters $n$ and $m$,
the hypothesis-testing task between $\calD_0$ and $\calD_1$ is
impossible in the $(m,n)$-hybrid model.

Next, we cite two known results regarding differentially private simple hypothesis testing. The work of Joseph et al.~\cite{JosephMNR19} discusses sample complexity bounds for simple hypothesis testing the in local-model.

\begin{fact}[{\cite[Theorem~5.3]{JosephMNR19}}]
\label{fact:LDP_protocols_for_basic_hypothesis_testing}
Let $\Pi$ be an $\epsilon$-differentially private protocol in the local model  for distinguishing between a setting where the input of the $n$ local agents is drawn i.i.d.\ from $\calD_0$ vs.~the input drawn i.i.d.\ from $\calD_1$. Let $\Pi^j$ denote the view of the protocol under $n$ input point drawn i.i.d.\ from $\calD_j$ for $j\in\{0,1\}$. Then $d_{\rm TV}(\Pi^1,\Pi^0) \leq 50n\epsilon^2d_{\rm TV}(\calD_0,\calD_1) = 50n\epsilon^2 \alpha^2$.
\end{fact}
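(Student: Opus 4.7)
The plan is to follow the argument of Joseph et al., which exploits the multiplicative structure of local differential privacy to give a contraction bound stronger than what generic inequalities such as Pinsker's yield directly. The key intermediate quantity is a $\chi^2$-like divergence between the output distributions of a single $\epsilon$-local randomizer.

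First I would prove the per-round contraction: for any $\epsilon$-local randomizer $R$ and any two distributions $P, Q$ on the input domain,
\[\chi^2(R(P) \,\|\, R(Q)) \leq O(\epsilon^2) \cdot d_{\rm TV}(P, Q)^2.\]
This follows from the pointwise guarantee $R(x, o)/R(x', o) \in [e^{-\epsilon}, e^\epsilon]$: writing $R(x, o) = R(x_0, o)(1 + \eta(x, o))$ with $|\eta(x, o)| \leq e^\epsilon - 1 = O(\epsilon)$ and using that $\sum_x (P(x) - Q(x)) = 0$, one obtains
\[R(P)(o) - R(Q)(o) = R(x_0, o) \sum_x (P(x) - Q(x))\,\eta(x, o),\]
whose squared magnitude (summed over $o$ weighted by $1/R(x_0, o)$ as in the $\chi^2$ definition) yields the desired $O(\epsilon^2)\cdot d_{\rm TV}(P, Q)^2$ bound.

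Next I would compose across the $n$ messages of the protocol. Conditional on the public transcript up to round $t$, each subsequent message is an $\epsilon$-LDP function of a fresh i.i.d.\ sample from $\calD_j$, so a chain-rule decomposition of the joint $\chi^2$-divergence along the transcript, combined with the single-round bound, yields
\[\chi^2(\Pi^0 \,\|\, \Pi^1) \leq O(n\epsilon^2) \cdot d_{\rm TV}(\calD_0, \calD_1)^2.\]
Converting this to total variation distance via standard inequalities (and absorbing constants) then gives the claimed bound.

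The main technical obstacle I anticipate is the composition step. The rounds are adaptive --- the randomizer used by an agent can depend on the public transcript so far --- so one has to argue that conditioning on the transcript preserves the $\epsilon$-LDP structure at each step and that the per-round $\chi^2$ contraction composes cleanly. This is typically handled via a careful chain-rule-style decomposition of the $\chi^2$-divergence (or of a closely related $f$-divergence) in which the single-round contraction bound is applied inside every conditional expectation, ensuring that the total divergence grows only linearly in $n$ rather than incurring the worst-case $e^\epsilon$-factor blowup that one would obtain from naive advanced-composition arguments.
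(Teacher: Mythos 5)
This Fact is imported verbatim from \cite{JosephMNR19}; the paper contains no proof of it, so there is nothing internal to compare your reconstruction against. Your overall strategy---a per-round contraction bound for a single $\epsilon$-local randomizer of the form $\chi^2(R(P)\|R(Q))\le O(\epsilon^2)\,d_{\rm TV}(P,Q)^2$, followed by chaining over the $n$ messages---is indeed the argument used in the cited source (and, earlier, by Duchi--Jordan--Wainwright), and your derivation of the single-round lemma is correct. The composition step needs one repair, which you correctly flag as the delicate point but do not resolve: $\chi^2$ does not satisfy an additive chain rule, so one cannot decompose ``the joint $\chi^2$-divergence along the transcript.'' The standard fix is to chain a divergence that does compose---KL (exact chain rule) or squared Hellinger (subadditive under adaptive composition)---and to dominate each \emph{conditional} term by the conditional $\chi^2$: conditioned on the transcript prefix, the next message is the output of a fixed $\epsilon$-randomizer applied to a fresh sample from $\calD_j$, so the per-round lemma applies inside the expectation and one gets, e.g., $\mathrm{KL}(\Pi^0\|\Pi^1)\le O(n\epsilon^2)\,d_{\rm TV}(\calD_0,\calD_1)^2$.

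The genuine gap is in your last sentence. From a squared divergence of order $n\epsilon^2\alpha^2$, the ``standard inequalities'' (Pinsker, or $d_{\rm TV}\le\sqrt{2\,d_{H^2}}$) yield $d_{\rm TV}(\Pi^0,\Pi^1)\le O(\sqrt{n}\,\epsilon\alpha)$, \emph{not} $O(n\epsilon^2\alpha^2)$; in the only interesting regime $n\epsilon^2\alpha^2\ll 1$ the latter is much smaller than the former, and no conversion closes that gap. Indeed, the bound as transcribed in the Fact (total variation of the transcripts, linear in $n$) cannot be proved because it is false: for $n$ agents applying randomized response to samples from $\calD_0=\mathrm{Bern}(\frac{1-\alpha}{2})$ versus $\calD_1=\mathrm{Bern}(\frac{1+\alpha}{2})$, the transcript distributions are products of Bernoullis whose parameters differ by $\Theta(\epsilon\alpha)$, so $d_{\rm TV}(\Pi^0,\Pi^1)=\Theta(\min(1,\sqrt{n}\,\epsilon\alpha))$, which exceeds $50n\epsilon^2\alpha^2$ once $n\epsilon^2\alpha^2$ is small. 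The theorem in \cite{JosephMNR19} bounds a \emph{squared} divergence between the transcript distributions (note that the Fact's own right-hand side, $50n\epsilon^2\alpha^2$, matches $d_{H^2}(\calD_0,\calD_1)\le\alpha^2$ rather than $d_{\rm TV}(\calD_0,\calD_1)=\alpha$); the statement here appears to have dropped a square. This does not affect \cref{thm:no_utility_for_sum}, which only needs indistinguishability when $n\epsilon^2\alpha^2$ is a sufficiently small constant, and either form delivers that. But your proof, once the composition step is repaired, establishes the corrected statement, not the one written above.
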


The work of Cannone et al.~\cite{CanonneKMSU19} gives tight sample complexity bounds for private hypothesis testing in the curator model. Although their result for general distributions is rather technical to state, doing so for the result for distributions supported on precisely two elements is fairly simple.

\begin{fact}[{\cite[Theorem~3.5]{CanonneKMSU19} reworded}]
\label{fact:curator_samplecomplexity_basic_hypothesis_testing}
There exists a constant $C>0$ such that any $\epsilon$-DP algorithm for distinguishing w.p. $\geq 0.55$ between a setting where the input of $m$ datapoints is drawn i.i.d.\ from $\calD_0$ vs.~the input drawn i.i.d.\ from $\calD_1$ requires 
\[ m\geq C \left(\frac 1 {d_{H^2}(\calD_0,\calD_1)} + \frac 1 {\epsilon \cdot d_{\rm TV}(\calD_0,\calD_1)}\right) = \Omega(\frac 1 {\alpha^2}+ \frac 1 {\epsilon\cdot \alpha}) \]
\end{fact}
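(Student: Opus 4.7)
The plan is to establish the two summands of the bound as separate lower bounds; their sum and maximum agree up to a factor of two, so it suffices to show that $m$ must exceed each individually. The first summand, $\Omega(1/d_{H^2}(\calD_0,\calD_1))$, is the classical information-theoretic sample-complexity bound for simple hypothesis testing, and holds for arbitrary (not necessarily private) distinguishers. The second, $\Omega(1/(\epsilon\cdot d_{\rm TV}(\calD_0,\calD_1)))$, captures the price of $\epsilon$-differential privacy and is proved via a coupling argument combined with group privacy.

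\emph{Information-theoretic part.} Let $M$ be a distinguisher succeeding with probability at least $0.55$, and let $T$ be the event that $M$ outputs $1$. Then $\Pr[M(\calD_1^m)\in T] - \Pr[M(\calD_0^m)\in T] \geq 0.1$, so by the data-processing inequality $d_{\rm TV}(\calD_0^m, \calD_1^m) \geq 0.1$. Combining the standard bound $d_{\rm TV}^2 \leq 2\,d_{H^2}$ with multiplicativity of the Bhattacharyya coefficient, namely $1 - d_{H^2}(\calD_0^m,\calD_1^m) = (1 - d_{H^2}(\calD_0,\calD_1))^m$, together with the elementary inequality $1-(1-x)^m \leq mx$, yields $0.01 \leq 2m\cdot d_{H^2}(\calD_0,\calD_1)$, whence $m = \Omega(1/d_{H^2}(\calD_0,\calD_1))$.

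\emph{Privacy-induced part.} Take a maximal coupling of $\calD_0$ and $\calD_1$: a joint distribution with marginals $\calD_0,\calD_1$ whose two coordinates agree with probability $1 - d_{\rm TV}(\calD_0,\calD_1)$. Drawing $m$ i.i.d.\ pairs produces jointly distributed datasets $\mathbf{U}\sim\calD_0^m$, $\mathbf{V}\sim\calD_1^m$ whose number of disagreeing coordinates $K$ is distributed as $\mathrm{Bin}(m, d_{\rm TV}(\calD_0,\calD_1))$. By $k$-fold group privacy applied pointwise, for any fixed $x,y$ that differ in exactly $k$ positions and any event $T$, $|\Pr[M(x)\in T]-\Pr[M(y)\in T]| \leq e^{k\epsilon}-1$; averaging over the coupling gives
\[
\bigl|\Pr[M(\calD_0^m)\in T]-\Pr[M(\calD_1^m)\in T]\bigr|
\leq \E\bigl[e^{K\epsilon}\bigr]-1
= \bigl(1+d_{\rm TV}(\calD_0,\calD_1)(e^\epsilon-1)\bigr)^m - 1
\]
via the binomial moment generating function. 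Choosing $T$ as above forces the left-hand side to be at least $0.1$, and solving (using $\log(1+x)\leq x$ together with $e^\epsilon-1=\Theta(\epsilon)$ for $\epsilon=O(1)$) gives $m = \Omega(1/(\epsilon\cdot d_{\rm TV}(\calD_0,\calD_1)))$.

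The main obstacle, as I see it, is the careful bookkeeping in the coupling-plus-group-privacy step: group privacy must be applied pointwise to fixed pairs $x,y$ at a deterministic distance $k$, \emph{before} taking expectation over the random $K$ produced by the coupling; one cannot substitute a random $K$ directly into the differential-privacy guarantee. Once this distinction is respected, the MGF identity and the combination of the two bounds are routine.
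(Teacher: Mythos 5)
This statement is a quoted result from \cite{CanonneKMSU19}; the paper gives no proof of its own, so there is no internal argument to compare against. Your proof is correct and is essentially the standard argument behind the cited bound: the tensorization identity $1-d_{H^2}(\calD_0^m,\calD_1^m)=\bigl(1-d_{H^2}(\calD_0,\calD_1)\bigr)^m$ together with $d_{\rm TV}^2\le 2\,d_{H^2}$ gives the non-private term, and the maximal-coupling-plus-group-privacy argument gives the privacy term; you are also right that group privacy must be invoked pointwise for a fixed pair of datasets at deterministic Hamming distance $k$ and only then averaged over the binomial $K$, which is exactly the step where a careless version of the argument would fail. Two minor remarks. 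First, your restriction to $\epsilon=O(1)$ in the last step is harmless for the statement as written: since $d_{H^2}\le d_{\rm TV}$, for $\epsilon\ge 1$ one has $1/(\epsilon\, d_{\rm TV})\le 1/d_{H^2}$, so the information-theoretic term already dominates the sum up to a factor of two and the privacy term is only needed when $e^\epsilon-1=\Theta(\epsilon)$. Second, the concluding equality $=\Omega(1/\alpha^2+1/(\epsilon\alpha))$, which you do not address, is just the substitution of the paper's computation $d_{\rm TV}(\calD_0,\calD_1)=\alpha$ and $\alpha^2/2\le d_{H^2}(\calD_0,\calD_1)\le\alpha^2$ for the specific pair of Bernoulli distributions.
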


Combining the above two facts with Theorem~\ref{thm:no_utility_for_hypothesis-testing} we obtain the following as an immediate corollary.

\begin{theorem}
\label{thm:no_utility_for_sum}
There exists two constants $c_0, c_1$ such that in the $(m,n)$-hybrid model, with $m\leq c_0\left(\frac {1}{\alpha^2}+\frac 1 {\epsilon\alpha}\right)$ and $n\leq c_1\cdot \frac{1}{\epsilon^2\alpha^2}$, there does not exists an
$\epsilon$-differentially private protocol that succeeds in determining whether all $m+n$ input points are drawn from $\calD_0$ vs.~$\calD_1$ w.p.\ $\geq 0.75$. 
\end{theorem}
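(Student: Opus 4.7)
The plan is to prove the theorem by contradiction: suppose there exists an $\epsilon$-differentially private protocol $\Pi$ in the $(m,n)$-hybrid model that distinguishes $\calD_0$ from $\calD_1$ with success probability at least $0.75$. Then I would invoke \cref{thm:no_utility_for_hypothesis-testing} with $\gamma = 1/4$, which furnishes either an $\epsilon$-DP curator-model protocol on $m$ samples with success probability at least $1/2 + 1/16 = 9/16$, or an $\epsilon$-DP local-model protocol on $n$ agents with the same success probability. I would then derive the desired contradiction separately in the two cases using the cited sample-complexity facts.

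In the curator case, since $9/16 > 0.55$, I would apply \cref{fact:curator_samplecomplexity_basic_hypothesis_testing} directly to our specific pair of distributions $\calD_0,\calD_1$ over $\set{0,1}$, for which $d_{H^2}(\calD_0,\calD_1) \in [\alpha^2/2, \alpha^2]$ and $d_{\rm TV}(\calD_0,\calD_1) = \alpha$. The fact yields $m \geq C\bigl(1/\alpha^2 + 1/(\epsilon\alpha)\bigr)$ for the constant $C$ provided by the fact, which contradicts the assumption $m \leq c_0\bigl(1/\alpha^2 + 1/(\epsilon\alpha)\bigr)$ whenever $c_0 < C$.

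In the local case, I would apply \cref{fact:LDP_protocols_for_basic_hypothesis_testing} to conclude that the total variation distance between the protocol's views under $\calD_0^n$ and $\calD_1^n$ is at most $50 n \epsilon^2 \alpha^2$. By the data-processing inequality, the TV distance between the output distributions is bounded by the same quantity. On the other hand, it is standard that any procedure distinguishing two distributions with success probability at least $1/2 + \gamma'$ has output TV distance at least $2\gamma'$ (this follows from the characterization of TV as optimal distinguishing advantage). Applying this with $\gamma' = 1/16$ gives $1/8 \leq 50 n \epsilon^2 \alpha^2$, hence $n \geq 1/(400 \epsilon^2 \alpha^2)$, which contradicts $n \leq c_1/(\epsilon^2 \alpha^2)$ whenever $c_1 < 1/400$.

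Choosing, say, $c_0 = C/2$ and $c_1 = 1/800$ completes the proof. The argument is essentially a direct chain of the already-established results, so I do not anticipate a real obstacle; the only step worth writing out carefully is the TV-distance-to-success-probability conversion in the local case, but this is a standard one-line computation from the definition of total variation distance.
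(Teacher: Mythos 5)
Your proposal is correct and follows the paper's proof exactly: contradiction via \cref{thm:no_utility_for_hypothesis-testing} with $\gamma=1/4$ (yielding success probability $9/16=0.5625$), then \cref{fact:curator_samplecomplexity_basic_hypothesis_testing} in the curator case and \cref{fact:LDP_protocols_for_basic_hypothesis_testing} in the local case. The only difference is that you spell out the standard total-variation-to-distinguishing-advantage conversion in the local case, which the paper leaves implicit as an ``immediate contradiction.''
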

\begin{proof}
Assume towards a contradiction that such a protocol $\Pi$ exists.
By  \cref{thm:no_utility_for_hypothesis-testing}, there is an
$\epsilon$-differentially private protocol that succeeds in determining whether all input points are drawn from $\calD_0$ vs.~$\calD_1$ w.p.\ $\geq 0.5625$ either in the trusted curator model with sample complexity $m$
or in the local model with $n$ agents. The former yields an immediate contradiction with Fact~\ref{fact:curator_samplecomplexity_basic_hypothesis_testing} and the latter yields an immediate contradiction with Fact~\ref{fact:LDP_protocols_for_basic_hypothesis_testing} for a sufficiently small $c_1$.
\end{proof}

\section{Acknowledgements}

We thank Adam Smith for suggesting the select-then-estimate task discussed in the introduction.
The work was done while the authors were at the ``Data Privacy: Foundations and Applications'' program held in spring 2019 at the Simons Institute for the Theory of Computing, UC Berkeley. 

Work of A.~B.\ and K.~N.\ was supported by NSF grant No.~1565387 TWC: Large: Collaborative: Computing Over Distributed Sensitive Data.
Work of A.~B.\ was also supported by ISF grant no.~152/17, a grant from the Cyber Security Research Center at Ben-Gurion University, and ERC grant 742754 (project NTSC).

Work of A.~K. was supported by NSF grant No.~1755992 CRII: SaTC: Democratizing Differential Privacy via Algorithms for Hybrid Models, a VMWare fellowship, and a gift from Google.

Work of O.~S.\ was supported by grant \#2017–06701 of the Natural Sciences and Engineering Research Council of Canada
(NSERC). The bulk of this work was done when O.~S.\ was affiliated with the University of Alberta, Canada

Work of U.~S.\ was supported in part by the Israel Science Foundation (grant No.\ 1871/19).

\appendix

\section{Learning parity and threshold}
\label{sec:parity_concatenated_threshold}

We now define a problem which, due to known results regarding sample complexity lower bounds, cannot be privately learned  neither in the curator model nor in the local model with sub-exponential number of rounds and yet can be learned in the hybrid model. 
Specifically, we make use of existing impossibility results for learning threshold functions in the curator model (see \cref{fact:learning_thresholds}) and for learning parity functions in the local model (see \cref{fact:learning_parity_LDP}). The learning problem we define is fairly natural -- it is the concatenation of the two problems, resulting in a two-tuple label. All that is left is to set parameters so that either the curator or the local model agents fail to learn the suitable part of the problem.

For  integers $b,c$, 
we define a concept class $${\cal C}_{ b,c}=\set{c_{k,t}:\set{0,1}^{{c}} \times \set{0,1}^{ b}\rightarrow\set{0,1}^2:k \in \set{0,1}^{{c}}, t \in \set{0,1}^{b} },$$ where $c_{k,t}(x,y) = (\Par_k(x), \Thr_t(y))$.

\begin{theorem}
\label{thm:learning_parity_concatenated_threshold}
Let $b$ be an integer and $0<\epsilon,\alpha,\beta<1$ s.t.\  $8\cdot 10^5 b\log(b/\beta\alpha) / (\alpha^3\epsilon^2) < 2^{\sqrt b/4}$ and let
$c=\sqrt{b}$, $m = 1000\sqrt b\log(1/\beta)/\epsilon\alpha$, and $n = 8\cdot 10^5b\log(b/\beta\alpha) / (\alpha^3\epsilon^2)$.
The task of PAC-learning ${\cal C}_{ b,c}$ with $\epsilon$-differentially privacy can be solved in the $(m,n)$-hybrid model 
yet cannot be learned  neither in the curator model with $m$ samples nor in the local agents model with $n$ agents and at most $2^{\Omega(\sqrt{b})}/n=\frac{\alpha^3 \epsilon^22^{\Omega(\sqrt{b})}}{\log 1/\beta \alpha}$ rounds. 
\end{theorem}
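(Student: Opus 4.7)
The proof consists of one positive result (feasibility in the hybrid model) and two matching negative results (infeasibility in the curator and local models separately), all obtained by essentially modular arguments on top of the facts and theorems already stated.

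\textbf{Upper bound in the hybrid model.} The plan is to split the two coordinates of the label between the two types of parties. The curator, holding $m=\Theta(\sqrt{b}\log(1/\beta)/(\epsilon\alpha))$ samples, projects each point $(x,y,(\sigma_1,\sigma_2))$ onto $(x,\sigma_1)$ and invokes the $\epsilon$-differentially private proper parity learner of \cref{thm:parityLearner} with accuracy $\alpha/2$ and confidence $\beta/2$ on $\Parity_c$ with $c=\sqrt{b}$; by the sample complexity guarantee $O(c\log(1/\beta)/(\epsilon\alpha))$ this yields with probability $\geq 1-\beta/2$ a hypothesis $\Par_{\hat k}$ whose error w.r.t.\ the $x$-marginal is at most $\alpha/2$. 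In parallel, the $n$ local agents project each point onto $(y,\sigma_2)$ and run an $\epsilon$-differentially private local protocol for $(\alpha/2,\beta/2)$-PAC learning $\Threshold_b$; this can be done with $O(b\log(b/\beta\alpha)/(\alpha^{3}\epsilon^{2}))$ agents by invoking the quantile protocol of \cref{thm:quantiles} (applied to the $y$-marginal, with target quantile determined by the label-$0$ frequency, which in turn can be estimated up to $\alpha/O(1)$ using a small portion of the agents via standard noisy counting). The referee combines the two outputs into $(x,y)\mapsto(\Par_{\hat k}(x),\Thr_{\hat t}(y))$. By a union bound over the two generalization events, the overall error is at most $\alpha$ with probability at least $1-\beta$, and both subprotocols are $\epsilon$-differentially private on disjoint parts of the input so the combined protocol is $\epsilon$-differentially private.

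\textbf{Lower bound in the curator model.} The plan is a trivial padding reduction from $\Threshold_b$ to $\mathcal C_{b,c}$. Given a purported curator learner $\AAA$ for $\mathcal C_{b,c}$ with $m$ samples, build a curator learner $\BBB$ for $\Threshold_b$: on input $\{(y_i,\sigma_i)\}_{i=1}^{m}$, form $\{((\vec 0,y_i),(0,\sigma_i))\}_{i=1}^{m}$ (consistent with $c_{\vec 0,t}\in\mathcal C_{b,c}$), run $\AAA$, and project the returned hypothesis to its threshold coordinate. Neighboring databases map to neighboring databases, so $\BBB$ inherits $\epsilon$-differential privacy, and PAC accuracy is preserved because the target distribution can be taken to be the product of a point mass on $\vec 0$ and the original distribution on $y$. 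By \cref{fact:learning_thresholds} any such $\BBB$ needs $\Omega(b/(\epsilon\alpha))$ samples, which contradicts $m=\Theta(\sqrt{b}\log(1/\beta)/(\epsilon\alpha))$ once $b$ is large enough.

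\textbf{Lower bound in the local model.} The plan is the analogous padding reduction from $\Parity_c$ with $c=\sqrt b$ to $\mathcal C_{b,c}$. Given a local-model protocol $\Pi$ for $\mathcal C_{b,c}$ with $n$ agents and $r$ rounds, produce a local-model protocol for $\Parity_{\sqrt b}$: each agent holding $(x_i,\tau_i)$ pretends to hold $((x_i,\vec 0),(\tau_i,1))$, which is consistent with $c_{k,\vec 0}$ under the target distribution being uniform on $x$ and a point mass on $y=\vec 0$; the agents execute $\Pi$ and the referee projects the output to its parity coordinate. The reduction preserves privacy and the number of rounds and messages. By \cref{fact:learning_parity_LDP}, any such parity protocol requires $\Omega(2^{c/3})=2^{\Omega(\sqrt b)}$ messages (the fact explicitly holds under the uniform distribution), so $n\cdot r \geq 2^{\Omega(\sqrt b)}$ and hence $r\geq 2^{\Omega(\sqrt b)}/n$, which is the contrapositive of what the theorem claims.

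\textbf{Main obstacle.} The only non-mechanical ingredient is the threshold sub-protocol in the local model: one must exhibit an $\epsilon$-differentially private local protocol for $(\alpha/2,\beta/2)$-PAC learning $\Threshold_b$ whose sample complexity fits under the stated bound $n=O(b\log(b/\beta\alpha)/(\alpha^{3}\epsilon^{2}))$. Plugging $T=O(b)$, $\lambda_{\rm quant}=\Theta(\alpha)$, and $\tau_{\rm dist}=1$ into \cref{thm:quantiles} (together with an $O(\log(1/\beta)/(\alpha^2\epsilon^2))$-sample noisy estimate of the label-$0$ probability, used to set $p^*$) suffices; the remaining steps of the proof are then straightforward bookkeeping.
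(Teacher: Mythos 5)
Your proposal is correct and follows the paper's proof in its overall structure: the two impossibility claims are obtained by exactly the same padding reductions (thresholds into $\mathcal C_{b,c}$ for the curator bound via \cref{fact:learning_thresholds}, parities into $\mathcal C_{b,c}$ for the local bound via \cref{fact:learning_parity_LDP}, with the message count $n\cdot r$ preserved), and the positive result uses the same decomposition, with the curator running the parity learner of \cref{thm:parityLearner} and the local agents learning the threshold coordinate. The one place where you genuinely diverge is the local threshold sub-protocol. The paper partitions the agents into $\lceil 4/\alpha\rceil-1$ groups that each locate one quantile at a multiple of $\alpha/4$ via \cref{thm:quantiles}, and then uses a second half of the agents and the heavy-hitters protocol of \cref{thm:heavy-hitters} to detect in which bin the labels flip from $0$ to $1$; this is what produces the $1/\alpha^3$ in $n$. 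You instead privately estimate the label-$0$ probability $p^*=\Pr[\sigma_2=0]$ with a noisy count on a small disjoint set of agents and issue a \emph{single} quantile query at $p^*$, which is simpler and saves a factor of $1/\alpha$ in the number of agents (so it certainly fits under the stated $n$). The only caveat in your variant is that \cref{thm:quantiles} is stated under the assumption that an exact $q^*$ with $\Pr[y\le q^*]=p^*$ exists, which may fail for your privately estimated $\hat p^*$ on a discrete distribution; this is a technicality one has to smooth over (e.g., by arguing directly that the binary search returns a point whose CDF is within $\lambda_{\rm quant}$ of the target), and the paper's own use of fixed quantiles $i\alpha/4$ faces essentially the same issue, so it does not undermine your plan.
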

    \begin{proof}
    We begin with the latter part of the theorem. Assume towards contradiction that there exists an $\epsilon$-differentialy private PAC-learner 
    $\AAA$ for ${\cal C}_{b,c}$ in the curator model with a sample size of at most $m$. As we explain below, this implies that there exists an $\epsilon$-differentialy private algorithm $\AAA'$ in the curator model for PAC-learning a \Threshold\ problem that has access to at most $m$ examples, contradicting \cref{fact:learning_thresholds}. The construction of $\AAA'$ from $\AAA$ is as follows:  Algorithm $\AAA'$, given $m$ inputs to the \Threshold\ problem, picks $k\in \{0,1\}^{c}$ arbitrarily and pads each example with a $x\in\{0,1\}^{{c}}$  chosen with uniform distribution and the label $\langle x,k\rangle$ and then feeds the concatenated inputs to $\AAA$. By definition, the $m$ input points are legal inputs to $\AAA$ and thus w.p.\ $\geq 1-\beta$ the algorithm produces a good $h$. Projecting $h$ onto its second coordinate yields an hypothesis $h'$ whose error in comparison to $\Thr_t$ is at most $\alpha$.
    The proof that no differentialy private protocol 
    in the local model  with at most $2^{c/3}/n$ rounds can learn such $h$ on its own is symmetric and follows from   \cref{fact:learning_parity_LDP}.

    We now show that there exists a protocol in the $(m,n)$-hybrid model
        that is capable of privately learning ${\cal C}_{b,c}$. The protocol itself is very straight-forward: the curator $(\tfrac\alpha 2, \tfrac \beta 2)$-learns the parity portion of $h$ (its second coordinate) and the local agents $(\tfrac\alpha 2, \tfrac \beta 2)$-learn the threshold portion ($h$'s first coordinate). In this protocol,
    the interaction of the referee with the curator is independent from 
     the interaction of the referee with the local agents.
     
    We use the \Parity\ learning algorithm of \cref{thm:parityLearner} (from~\cite{KLNRS11}), and, not surprisingly, $m$ is set s.t.\  it sufficiently large to learn a good hypothesis w.p.\ $\geq 1 - \tfrac\beta 2$. The differentialy-private protocol in the local model for learning \Threshold\ is folklore, so for completion we detail it here. Basically, it is composed of two steps: first finding all quantiles of $\{ \frac \alpha 4, \frac{2\alpha} 4, \frac{3\alpha}4,\ldots, 1-\frac{\alpha}4 \}$ of the distribution, and then figuring out in which of the quantiles there is a flip of the labels from $0$ to $1$.

    In more details, here is the differentialy-private protocol in the local model for learning the threshold function. We partition the $n$ agents into two sets. The first set is then further equipartitioned into $\lceil \frac 4 \alpha\rceil -1$ sets, where in set $i$ we find the $i\cdot \frac \alpha 4$-quantile of the distribution over the $y$'s, up to an error of at most $\frac\alpha{10}$ using, say, the $\epsilon$-differentialy-private binary-search protocol for quantiles in the local model of \cref{thm:quantiles} (from~\cite{GaboardRS19}).
       In our setting, with a discrete distribution taking values between $0$ and $2^b$, we set $Q_{\max}=2^b$ and $Q_{\min}=0$, $\tau_{\rm dist}=1/4$,
    $\beta_{\rm conf}=\frac{\beta\alpha}{16}$,
    and $\lambda_{\rm quant} = \frac{\alpha}{10}$ as discussed above; 
    hence a sample of size $10^5\frac{b}{\epsilon^2\alpha^2}\log(8b/\beta_{\rm conf})$ is sufficient for a single quantile approximation w.p. $\geq 1- \beta_{\rm conf}$.
    We have set $n$ such that a sample complexity of $(n/2)/(\lceil 4/\alpha\rceil -1)$ suffices for finding the suitable quantile with an error probability of at most $\frac{\beta\alpha}{16}$. Once all quantiles of the form $i\cdot \frac \alpha 4$ have been published, we turn to the latter half of the $n$ local agents. We apply the heavy-hitters
    algorithm of \cref{thm:heavy-hitters} (from~\cite{BNS17}) for the agents with $1$-label, 
    that is, the input of an agent with $1$-label to the heavy-hitters protocol is its quantile and the input of an agent with $1$-label is 0.  
    Note again that $n/2$ is sufficiently large s.t.\  w.p.\ $\geq 1- \tfrac \beta 4$ we can assess the fraction of $1$-labels in all bins. Furthermore, by definition of the threshold problem, any bin that contains only values $\geq t$ must be all $1$-labeled. The agents then find the first quantile from which all agents have $1$-label and output it as the suggested threshold. By definition, w.p.\ $\geq 1- \frac \beta 2$ we have that the error of this threshold is the worst-case width of any bin, i.e. at most $\frac \alpha 4   + 2\cdot \frac \alpha {10} <\frac \alpha 2$. Thus we have a $\epsilon$-differentialy private protocol in the local model for $(\frac\alpha 2,\frac\beta 2)$-learning thresholds with $n$ agents.
\end{proof}

We remark that it is possible to design a {\em non-interactive} protocol in the hybrid model for PAC learning ${\cal C}_{b,c}$, at the expense of increasing the sample complexity of the protocols. See \cref{sec:parityXORthresh}.

\section{The select-then-estimate task}
\label{sec:selectThenEstimate} 

Select-then-estimate is an example of a task that cannot be privately solved in the curator model or in the local model but can be solved (with interaction) in the hybrid model. We do not know whether the interaction is essential. 

\begin{definition}[The select-then-estimate task] Let $X=\set{-1,1}^d$, and let $P\in\Delta(X)$ be an unknown distribution over $X$. Define $\mu = {\mbox E}_{x\sim P}[x]$. Note that  $\mu=(\mu_1,\ldots,\mu_d)\in [-1,1]^d$. The inputs in the select-then-estimate task are $x_1,\ldots,x_n$ sampled i.i.d.\ from $P$.

For parameters $\alpha < \alpha'$, the goal of the referee in the select-then-estimate task is to output $(i, \hat \mu_i)$ such that $\mu_i\geq \mbox{max}_j (\mu_j)-\alpha$ and $ \left|\hat\mu_i - \mu_i\right| \leq \alpha'$.
\end{definition}

We focus on the case when $\alpha' < \alpha$, that is, the selection $i$ is done with the lesser accuracy $\alpha$, and the estimation $\hat \mu_i$ is with the better accuracy $\alpha'$. 
In the following discussion, we omit the dependency on the allowed failure probability for the task, $\beta$.

A number of sample complexity bounds are relevant for this problem for non-interactive protocols. Selecting a coordinate $i$ such that $\mu_i \geq \max_j \mu_j - \alpha$ requires $\Theta(\frac{\log d}{\alpha^2}+\frac{\log d}{\alpha \epsilon})$ samples in the trusted curator model under pure differential privacy~\cite{McSherryT07, bafna2017price, steinke2017tight}, and $\Theta(\frac{d \log d}{\alpha^2 \epsilon^2})$ samples in the local model~\cite{Ullman18}.\footnote{The bound in~\cite{Ullman18} is for non-interactive local model protocols. In \cref{sec:selectionLowerbound} we show that for interactive local model protocols $n=\Omega(\frac{d}{\alpha\epsilon})$.} Once $i$ is selected, estimating $\mu_i$ up to an error $\alpha'$ (i.e., computing $\hat{\mu_i}$ so that $|\hat{\mu_i} - \mu_i| \leq \alpha'$) requires $\Omega(\frac{1}{\alpha'^2})$ samples regardless of privacy, and $O(\frac{1}{\alpha'^2 \epsilon^2})$ samples suffice in the local model
(using the randomized respond protocol of Warner~\cite{Warner65}).

To exemplify the many settings where the hybrid model provides a solution for the task but neither curator alone nor the local model parties alone can solve it, consider the case where $\epsilon=0.1$, $\alpha <0.1$, and $\alpha/\sqrt{d} < \alpha' < \alpha/\log{d}$. 
With this choice of parameters, $m=O(\log d/\alpha^2)$ samples suffice for the curator to identify $i$, but not to perform the estimate because $m=o(1/\alpha'^2)$ as $1/\alpha'^2 = \Omega(\log^2d/\alpha^2)$. 
Likewise, a choice of $n=O(1/\alpha'^2)$ allows for performing the estimate in the local model, but not for performing the selection, as $1/\alpha'^2 = O(d/\alpha^2)$.

The $(m,n)$-hybrid model with the above parameter choices, however, allows for a solution where the curator first identifies $i$ such that $\mu_i \geq \max_j \mu_j - \alpha$ and then the estimation of $\mu_i$ within accuracy $\alpha'$ is performed by the local model parties.

By our analysis above, for the case $\epsilon=0.1, \alpha < 0.1$ we get that the greater the ratio $n/m$ is, the better the improvement in accuracy as $$\frac{\alpha}{\alpha'} \sim \frac{\sqrt{\log d/m}}{\sqrt{1/n}} = \sqrt{\frac{n \log d}{m}}.$$

Another way to look at it is to get a sense of how small the size of the curator contributors $m$ can be in comparison to $n$ in order to be helpful. It depends on the ratio of desired accuracies $\frac{\alpha'}{\alpha},$ as $\frac{m}{n} \sim (\frac{\alpha'}{\alpha})^2 \log d $, i.e., $\frac{\log d}{d} < \frac{m}{n} < \frac{1}{\log d}.$ 

The analysis can be repeated for other values of $\epsilon$ and $\alpha$ as well. For example, the $(m, n)$ hybrid model can privately solve the select-then-estimate task when $\epsilon = 1, \alpha = \frac{1}{d^{0.25}}, \alpha' = \frac{1}{d^{0.75-0.5t}}, m = \sqrt{d} \log d, n = d^{1.5-t} \log d,$ for any choice of $0.5 < t < 1,$ and neither the local model nor the curator can solve it to the same accuracies separately.

In the high-dimensional problems that are of interest for the select-then-estimate task, when often the dimension of the data $d$ exceeds the number of data points available $n$, the parameter values for when the model is helpful illustrate its considerable potential.

\section{A lower bound for the selection function}
\label{sec:selectionLowerbound}
We next present a simple lower bound of the number of messages sent in an $\epsilon$-differentially private protocol for the selection problem.
\begin{definition}[The selection problem]
The input for the selection problem is $n$ i.i.d.\ samples $Y=(y_1,\dots,y_n)$ from an unknown distribution $\calD$ over $\set{0,1}^d$ with mean $\mu=(\mu_1,\dots,\mu_d)$. The goal is to output a coordinate $j$ such that
$$E_{Y,j}[\mu_j] \geq \max_k \mu_k -\alpha.$$
\end{definition}

Ullman~\cite{Ullman18} proved that in any $\epsilon$-differentially private non-interactive protocol in the local model for the selection problem, the number of agents is $\Omega\left(\frac{d \log d}{\alpha^2 \epsilon^2}\right)$. We prove a lower bound of $\Omega(d^{1/3})$ on the number messages sent to the referee in any $1$-differentially private protocol for this task, even if interaction is allowed. For example, if each agent sends one message, then the number of agents is $\Omega(d^{1/3})$ (even if the protocol is interactive, e.g., the referee sends a message to $P_1$, gets a message from $P_1$, based on this message computes a message and sends it to $P_2$, and so on). As  another example, if there are only $O(d^{1/6})$ agents, then the number of rounds in any $1$-differentially private protocol for the selection problem is $\Omega(d^{1/6})$. To summarize,  our lower bound applied to a larger family of protocols than the result of~\cite{Ullman18}; however, our bound is weaker.

The idea of our proof is simple: We show a reduction from private learning in the local model  to privately solving the selection problem and use a known lower bounds for the parity problem~\cite{KLNRS11} to obtain the lower bound.

\begin{claim}
\label{cl:LerningToSelection}
Let $C=\set{c_1,\dots,c_d}$ be a class of functions
and $n$ be such that $n\geq\frac{64}{\alpha}(\VC(C)\ln(\frac{128}{\alpha})+\ln(8))$.
If there is an $\epsilon$-differentially private protocol $M$ in the local model for the selection problem with $n$ agents, then there is an $\epsilon$-differentially private proper $(4\alpha,1/4)$-PAC learning protocol in the local model for the class $C$ with $n$ agents and the same number of messages.
\end{claim}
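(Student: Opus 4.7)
The plan is to construct, from any $\epsilon$-differentially private selection protocol $M$ in the local model with $n$ agents, a proper $(4\alpha,1/4)$-PAC learner $M'$ for $C$ in the local model with the same $n$ agents and the same number of messages. The key idea is that each local agent can turn its labeled example into a $d$-dimensional ``correctness vector'' and feed it into $M$; selecting a high-mean coordinate then corresponds to selecting a hypothesis with low generalization error.

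Concretely, in $M'$ each agent $P_i$, holding $(x_i,y_i)\in X\times\{0,1\}$, deterministically computes $z_i\in\{0,1\}^d$ defined by $(z_i)_k = \mathbb{1}[c_k(x_i)=y_i]$, and then the agents simply run $M$ on the inputs $z_1,\ldots,z_n$. The referee outputs the hypothesis $c_j$ where $j$ is the coordinate returned by $M$. Since the map $(x_i,y_i)\mapsto z_i$ is performed locally by $P_i$ on its own input, and since altering one labeled example changes at most one of the $z_i$'s, the privacy guarantee of $M$ carries over: $M'$ is $\epsilon$-differentially private, uses $n$ agents, and produces the same number of messages.

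For the utility analysis I fix a target $c^*=c_{j^*}\in C$ and a distribution $D$ on $X$, and consider the induced distribution $\calD$ on $\{0,1\}^d$ whose mean $\mu$ satisfies $\mu_k = \Pr_{x\sim D}[c_k(x)=c^*(x)] = 1-\error_D(c_k,c^*)$. In particular $\mu_{j^*}=1$, so $\max_k\mu_k=1$. The selection guarantee of $M$ then reads $\E_{Y,j}[\mu_j]\geq 1-\alpha$, which is equivalent to $\E[1-\mu_j]\leq \alpha$. Since $1-\mu_j\in[0,1]$ is non-negative, Markov's inequality gives $\Pr[1-\mu_j> 4\alpha]\leq \alpha/(4\alpha)=1/4$, i.e.\ with probability at least $3/4$ the output hypothesis $c_j$ satisfies $\error_D(c_j,c^*)\leq 4\alpha$. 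This already yields a proper $(4\alpha,1/4)$-PAC learner.

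The only mildly subtle point — and the place where I expect to invoke the stated lower bound on $n$ — is in justifying that the reduction actually produces a valid PAC learner rather than just a statement about the induced $\mu_j$. The condition $n\geq \tfrac{64}{\alpha}(\VC(C)\ln(\tfrac{128}\alpha)+\ln 8)$ is the standard realizable uniform-convergence sample bound, and I will use it to argue that any gap between the empirical and population accuracy of hypotheses in $C$ is at most $\alpha$ with probability at least $7/8$, so that the Markov-based guarantee on $\mu_j$ translates cleanly into a PAC guarantee with the claimed parameters (absorbing the small failure probability into the $1/4$ slack). Modulo that bookkeeping, the proof is just the reduction above, and the main takeaway is that the reduction preserves sample size, agent count, and message count — so any lower bound on private local learning of $C$ (e.g.\ the $2^{c/3}$ bound of \cref{fact:learning_parity_LDP} for $C=\Parity_c$) transfers directly into a lower bound on the number of messages for private local selection.
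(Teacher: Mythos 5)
Your proposal is correct and uses essentially the same reduction as the paper: each agent locally converts its labeled example into the vector of indicators $\bigl(\1_{\{c_k(x)=y\}}\bigr)_{k\in[d]}$, the selection protocol is run on these vectors, privacy and message count are preserved because the conversion is a local deterministic preprocessing of each agent's own input, and Markov's inequality applied to $\E[1-\mu_j]\leq\alpha$ yields the accuracy guarantee. The only divergence is in the final bookkeeping: the paper reads $\mu_j$ as the \emph{empirical} accuracy on the sample and invokes the VC condition on $n$ to pass to generalization error (arriving at an $(8\alpha,1/2)$ guarantee in its own write-up), whereas you read $\mu_j$ as the population mean of the induced distribution and obtain the stated $(4\alpha,1/4)$ parameters directly --- under your reading the hypothesis on $n$ plays no role, so your closing hedge about where to invoke it can simply be dropped.
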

\begin{proof}
We convert a labeled example for $C$ to an input for the selection problem: Given a labeled example $(x,b)$, construct the input $\convert(x,b)=(y[1],\dots,y[d])$, where  if $c_i(x)=b$ then $y[i]=1$, else $y[i]=0$.  
Given  $((x_1,b_1),\ldots,(x_n,b_n))$ -- examples labeled by the some concept $c_i$ --  we construct
$Y=(\convert(x_1,b_1),\ldots,\convert(x_n,b_n)),$   execute $M$ on $Y$, and return $c_j$, where $j$ 
is the output of $M$.

Notice that if the samples are labeled by $c_i$, then the $i$-th coordinate in all points in $Y$ is $1$. Thus, $M$ returns a coordinate $j$ such that $E_{Y,j}[\mu_j] \geq 1 -\alpha$; in particular, the probability that $\mu_j$ is less than $1-4\alpha$ is at most $1/4$.  Thus, with probability at least $3/4$, the learning algorithm has error at most $4\alpha$ on the sample. By  standard VC arguments, with probability at least $1/2$ the concept $c_j$ has error at most $8 \alpha$ on the distribution $\calD$.  
\end{proof}

\begin{theorem} Suppose there is a $1$-differentially private local protocol $M$ in the local model for the selection problem  with $\alpha=1/10$. Then, the number of messages sent by the agents to the referee 
is $\Omega\left(d^{1/3}\right).$ 
\end{theorem}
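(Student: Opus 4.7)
The plan is to reduce the selection problem to privately PAC-learning a parity class via Claim~\ref{cl:LerningToSelection}, and then invoke the local-model lower bound on privately learning parity from Fact~\ref{fact:learning_parity_LDP}. Since Claim~\ref{cl:LerningToSelection} turns any $\epsilon$-differentially private selection protocol with $n$ agents into an $\epsilon$-differentially private PAC learning protocol with $n$ agents and the \emph{same} number of messages, it suffices to choose a class $C$ of size $d$ for which learning in the local model is provably expensive.

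Concretely, set $c = \lfloor \log_2 d \rfloor$, so that $|\Parity_c| = 2^c \leq d$ and $\VC(\Parity_c) = c$. Embed $\Parity_c$ into the $d$-dimensional selection task by associating $2^c$ of the coordinates with the parity concepts and treating the remaining $d - 2^c$ coordinates as dummies (always labelled $0$). With $\alpha = 1/10$, the minimum-agent requirement of Claim~\ref{cl:LerningToSelection} becomes $n = \Omega(c) = \Omega(\log d)$; if the hypothesised selection protocol $M$ uses fewer agents we pad it with ``silent'' agents that receive samples but send no messages. This preserves both correctness (the extras are simply ignored by the referee) and the total message count, and clearly satisfies differential privacy for the added parties. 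Applying Claim~\ref{cl:LerningToSelection} to the (possibly padded) protocol yields a $1$-differentially private proper $(4\alpha, 1/4)$-PAC learner for $\Parity_c$ with the same number of messages as $M$.

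Now invoke Fact~\ref{fact:learning_parity_LDP}: any $1$-differentially private $(4\alpha, 1/4)$-PAC learner for $\Parity_c$ in the local model must use $\Omega(2^{c/3}) = \Omega(d^{1/3})$ messages, which gives the desired bound on $M$.

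The only delicate points are verifying that the parity lower bound applies at the accuracy parameter $4\alpha = 2/5$ (it does, since~\cite{KLNRS11} provides the bound for any constant accuracy bounded away from $1/2$, with the constant in the $\Omega(\cdot)$ depending on $\alpha$ and $\epsilon$), and that the silent-padding step indeed leaves the message count unchanged and preserves joint $1$-differential privacy. Both are routine bookkeeping rather than genuine obstacles; the entire conceptual work has been absorbed into Claim~\ref{cl:LerningToSelection} and Fact~\ref{fact:learning_parity_LDP}, so the theorem follows by direct composition.
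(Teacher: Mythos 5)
Your proposal is correct and follows essentially the same route as the paper: instantiate Claim~\ref{cl:LerningToSelection} with the class $\Parity_c$ for $c=\log_2 d$ and then invoke the $\Omega(2^{c/3})=\Omega(d^{1/3})$ message lower bound of Fact~\ref{fact:learning_parity_LDP}. The only extra touches (dummy coordinates when $d$ is not a power of two, and the silent-agent padding---which is unnecessary since any protocol solving selection already requires $\Omega(\log d/\alpha^2)$ samples, and padding with samples the referee ignores would not improve the VC generalization step in any case) are cosmetic and do not change the argument.
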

\begin{proof}
Let $r$ be the number of messages in the protocol $M$ on inputs of size $d$. We consider the class of parity functions of strings of length $c$. We apply \cref{cl:LerningToSelection} to this class, where the examples are taken with uniform distribution. The length of the inputs for the selection problem is  $d=2^c$, the number of  parity functions.
This results in  a 
$1$-differentially private local protocol for learning the class of parity functions under the uniform distribution with $r$ messages.
By \cref{fact:learning_parity_LDP} (from~\cite{KLNRS11}),
the number of messages in such protocol is $\Omega(2^{c/3})=\Omega(d^{1/3})$.
\end{proof}

\end{document}